\def\DRAFT{}
\documentclass[12pt]{article}

\usepackage[T1]{fontenc}
\usepackage{lmodern, microtype}
\usepackage{multirow}
\usepackage{tabularx}
\usepackage[left=1.25in, right=1.25in, top=1.25in, bottom=1.25in]{geometry}
\usepackage[onehalfspacing]{setspace}
\usepackage[small]{titlesec}
\usepackage{mathtools,comment,paralist}
\usepackage{epigraph}

\usepackage{amssymb,amsmath,amsthm,fancyhdr,bbm}
\usepackage{booktabs}
\usepackage{graphicx}
\usepackage{subcaption}
\usepackage{xcolor}

\usepackage{pgfplots}
\pgfplotsset{compat=1.18}

\usepackage{hyperref}
\hypersetup{
  colorlinks=true,
  linkcolor=blue,
  citecolor=blue
}
\usepackage[capitalise]{cleveref}
\crefname{figure}{Figure}{Figures}

\usepackage{natbib}
\usepackage[normalem]{ulem}
\setcitestyle{authoryear,open={(},close={)}}

\makeatletter

\newtheorem{lemma}{Lemma}
\newtheorem*{lemma*}{Lemma}

\newtheorem*{axiom*}{Axiom}
\newtheorem{proposition}{Proposition}

\newtheorem{remark}{Remark}

\newtheorem*{theorem*}{Theorem}
\newtheorem{as}{Assumption}

\newtheorem{definition}{Definition}
\newtheorem*{definition*}{Definition}
\usepackage{graphicx}
\usepackage{pstricks, enumerate, pst-node, pst-text, pst-plot}

\usepackage{mleftright}
\usepackage{xparse}
\DeclareDocumentCommand\E{ m g }{\ensuremath{
    {   \IfNoValueTF {#2}
      {\mathbb{E}\mleft[{#1}\mright]}
      {\mathbb{E}\mleft[{#1}\middle\vert{#2}\mright]}%
    }
}}

\newcommand{\esssup}{\operatorname*{ess\,sup}}

\DeclareMathOperator\con{conv}

\newcommand{\R}{\mathbb{R}}

\newcommand{\mP}{\mathbb{P}}

\usepackage{accents}

\def\dd{\mathrm{d}}

\def\cA{\mathcal{A}}

\def\yist{y_i^*}

\ifdefined\DRAFT

\definecolor{ForestGreen}{rgb}{.13,.54,.13}
\definecolor{violet}{cmyk}{0.79,0.88,0,0}
\definecolor{darkmagenta}{rgb}{0.55, 0.0, 0.55}

 \newcommand{\benpo}[1]{{\color{darkmagenta}{(\textbf{Ben \& Po:} #1)}}}

\definecolor{darkBrickRed}{rgb}{.70,.13,.16}

\else
\newcommand{\fed}[1]{}

\newcommand{\omer}[1]{}
\newcommand{\benpo}[1]{}
\fi

\definecolor{redPart}{rgb}{1.0, 0.0, 0.0}
\definecolor{greenPart}{rgb}{0.0, 1, 0.0}
\definecolor{bluePart}{rgb}{0.0, 0.0, 1.0}
\definecolor{yellowPart}{rgb}{1, 1, 0}
\definecolor{magentaPart}{rgb}{1, 0, 1}
\definecolor{cyanPart}{rgb}{0.0, 1, 1}

\title{\Large Measuring Economic Preferences in the Presence of Noise: \\
The Connections Between Choices and Valuations}

\author{Ted O'Donoghue\thanks{Cornell University,  Department of Economics. Email: edo1@cornell.edu} \ \ Charles Sprenger\thanks{Caltech, Division of Humanities and Social Sciences. Email: sprenger@caltech.edu} \ \ Po Hyun Sung\thanks{Caltech, Division of Humanities and Social Sciences. Email: psung@caltech.edu} \ \
        Ben Wincelberg\thanks{Caltech, Division of Humanities and Social Sciences. Email: bwincelb@caltech.edu}}

\date{August 30, 2026}

\begin{document}
\maketitle
\begin{abstract}
Past research highlights failures of ``procedural invariance'' when measuring economic preferences using choices versus valuations. We reassess these failures by examining theoretical connections between choices and valuations when preferences are stable but measurements are noisy and individuals are heterogeneous. Even under strong assumptions governing noise and heterogeneity, stability does not generally imply identical measurements. We develop new tests of stable preferences in conjunction with various ancillary assumptions about heterogeneity and noise. We implement these tests using existing data to understand if, in the domain of risk preferences, choices and valuations truly differ and to provide quantitative assessments of any deviations. Limiting to the types of data used in the prior literature, we rarely reject the null of stable preferences. With richer data linking individual choices and valuations and structural assumptions, we find evidence of instability which differs qualitatively from the received wisdom that choices implicate greater risk aversion than valuations.
\end{abstract}

\emph{Keywords:} Preference Reversal; Stochastic Choice; Choice Experiment.

\emph{JEL Codes:} D01, D81

\section{Introduction}

Since the original economic experiment of \citet{thurstone1931indifference}, researchers have been interested in using behavioral measurement to assess individual preferences, evaluate adherence to underlying axioms, and develop predictions based on models fit to measurement data. To date, there exists no consensus on how such measurement should be conducted---that is, on what type of elicitation technique to use---with different researchers advocating for different methods. 

Two prominent elicitation techniques are \textit{choices} and \textit{valuations}: On one hand, we could ask people to make binary choices between options; on the other hand, we could ask people to provide valuations of options using some index of value. Ideally, these two elicitation techniques would yield measurements that are consistent with each other, as they are both meant to measure the same underlying preferences. However, the literature suggests that they do not. \citeauthor{lichtenstein1971reversals}'s \citeyearpar{lichtenstein1971reversals} famous ``preference reversals'' in risk preferences, popularized in economics by \cite{grether1979economic}, suggest apparent inconsistencies between choices versus valuations: The proportion of subjects choosing a safer over a riskier alternative is frequently found to exceed the proportion of subjects whose valuations are higher for the safer alternative, suggesting greater risk aversion in choices. More recently, an experimental-economics literature \citep[see, e.g.,][]{brown2018separated, freeman2019eliciting, freeman2019choice} documents similar apparent inconsistencies when comparing choices versus multiple price lists (MPLs), which are a type of valuation.

Researchers have interpreted these failures of ``procedural invariance'' in (at least) two ways. The preference-reversals literature typically interprets the evidence as suggesting there do not exist stable economic primitives like preferences in the first place, with preferences constructed contextually in each environment \citep{tversky1990anomalies, slovic1995construction, Lichtenstein2006construction}. The choices-versus-MPLs literature interprets the evidence as suggesting that valuations, choices, or both fail to credibly measure the preferences of interest.\footnote{For example, \citep{freeman2019eliciting} argue in favor of choices as the ``gold standard'' for measurement, and thus argue that, given the inconsistencies between measures, MPLs must be yielding biased measurement.}

In this paper, we assess these claims. We develop a model of how people respond in choice tasks and valuation tasks. Inspired by the literature above, we frame this model in terms of elicitations of risk preferences, but we highlight in our concluding section how our theoretical insights apply to other domains. We use our  model to investigate the conditions under which one would expect no apparent inconsistencies, and more generally to develop tests of the null of stable preferences in conjunction with various ancillary assumptions. We then implement these new tests using existing data to understand if, in the domain of risk preferences, choices and valuations truly differ and, if so, to provide quantitative assessments of any deviations.\footnote{A more recent literature provides cognitive accounts for violations of procedural invariance under the maintained hypothesis of stable preferences---e.g., \cite{Bouchouicha-et-al-wp-26} for choices presented independently versus in list form, and \cite{Shubatt-Yang-wp-26} for different types of valuation tasks. Within these frameworks, noisy coding of value (in the spirit of e.g., \cite{Khaw-Li-Woodford-21}) interacts with the differential demands of different procedures to generate apparent inconsistencies. \cite{Bouchouicha-et-al-wp-26} and \cite{Shubatt-Yang-wp-26} both provide experimental evidence of apparent inconsistencies that are in line with their proposed cognitive accounts. The fundamental distinction with our own work is that we develop formal tests of stable preferences across unbiased measurements while assuming symmetric noise, while they develop potential theories for how cognitive noise can generate differentially biased measurements and thus lead to distinct measures under stable preferences.}

Our exercise begins with three basic observations. First, economic measurement of preferences---like all scientific measurement---is inherently noisy.\footnote{This observation is most often attributed to \cite{mosteller1951experimental}, though even \citet{thurstone1931indifference} carries some recognition of noise in measurement. In the domain of economics experiments, it is natural to interpret noise in measurement as decision noise, where people's decisions reflect a noisy instantiation of their underlying preferences; we often use the terms measurement noise and decision noise interchangeably.} Second, there is no a priori reason to expect that different elicitation techniques are subject to identical noise, and so we do not impose this assumption (and indeed our empirical analysis seems to contradict it). Third, most experiments study aggregate data across individuals, and thus we must account for the fact that individuals are fundamentally heterogeneous in their underlying preferences and possibly also in their decision noise. 

With these preliminaries in hand, we provide three groups of theoretical results. First, we derive predictions for the empirical objects used in the prior literature: When subjects are asked to compare a safer option to a riskier option, we can study the proportion of subjects who choose the safer option (which we denote by $\widehat{\mP_c}(\text{safe})$), or the proportion of subjects whose valuations indicate a preference for the safer option (which we denote by $\widehat{\mP_v}(\text{safe})$). The prior literature tests the null hypothesis of stable preferences by examining whether $\widehat{\mP_c}(\text{safe})=\widehat{\mP_v}(\text{safe})$, and the common finding is $\widehat{\mP_c}(\text{safe}) > \widehat{\mP_v}(\text{safe})$. We highlight that, even under the stable-preference null, our model need not predict that $\mP_c(\text{safe})=\mP_v(\text{safe})$.\footnote{Throughout, we use $\widehat{\mP_c}(\text{safe})$ and $\widehat{\mP_v}(\text{safe})$ to denote empirical objects, and $\mP_c(\text{safe})$ and $\mP_v(\text{safe})$ to denote theoretical predictions for those objects.} 

Hence, we derive the attainable set for $(\mP_c(\text{safe}),\mP_v(\text{safe}))$---that is, combinations consistent with our model---under the assumption of stable preferences combined with different sets of assumptions about heterogeneity and noise. Under our most basic set of assumptions about noise at the level of individual subjects, the decision probabilities for each subject must satisfy the simple consistency condition of being on the same side of $\frac{1}{2}$ (i.e., $\mP_{ci}(\text{safe})$ and $\mP_{vi}(\text{safe})$ both greater than $\frac{1}{2}$ or both less than $\frac{1}{2}$). Aggregating across subjects with our weakest assumptions on heterogeneity in preferences and noise---which permit arbitrary combinations of individuals each satisfying the consistency condition above---generates an attainable set that encompasses much of the $(\mP_c(\text{safe}),\mP_v(\text{safe}))$ unit square.  Stronger assumptions shrink the attainable set, but even under our strongest set of assumptions about heterogeneity and noise, the attainable set still encompasses half the unit square. These theoretical results motivate new tests of the null of stable preferences in which one compares an observed combination $(\widehat{\mP_c}(\text{safe}),\widehat{\mP_v}(\text{safe}))$ to null attainable regions instead of the traditional test of identical measurements.

Our second group of theoretical results demonstrates how richer data can permit additional tests. If, in addition to $(\widehat{\mP_c}(\text{safe}),\widehat{\mP_v}(\text{safe}))$, the data also provide information on the mean preference within the population, we can develop predictions for $\mP_c(\text{safe})$ conditional on the mean preference. Similarly, if we have access to within-subject data on both choices and valuations, we can develop subject-level predictions for $\mP_c(\text{safe})$ conditional on both the mean preference and their own stated valuation.

Our third group of theoretical results highlights what one can do if one has within-subject data on both choices and valuations and is willing to take a more structural approach by imposing functional assumptions on preference heterogeneity and noise. We focus on one example in which we assume that preference heterogeneity and noise are jointly normal and independent. We derive how this structure naturally leads to probit and maximum likelihood estimators that can be used both for testing the null of stable preferences and, when stability is rejected, for estimating the magnitude of deviations from stability.

Motivated by our theoretical development, we conduct empirical analyses that implement our new tests. We use two datasets constructed from existing data. First, we assemble a dataset of prior experiments from the preference-reversal and choices-versus-MPLs literatures (323 experiments in total), where we observe $\widehat{\mP_c}(\text{safe})$ and $\widehat{\mP_v}(\text{safe})$ for each experiment. Second, we pull data from two recent papers \citep{MNOSS-2024-distinguishing, MNOSS-2026-connecting} that collect choices and valuations for the same subjects (424 experiments in total).\footnote{\cite{MNOSS-2024-distinguishing, MNOSS-2026-connecting} collect this data for different purposes from ours, and neither directly compares choices and valuations in the way that we do here.} 

We document that, when approached from the perspective of the prior literature, both datasets are consistent with the received wisdom that people are more likely to choose a safer option than valuations would imply: across these datasets, approximately 70-80\% of experiments indicate $\widehat{\mP_c}(\text{safe}) > \widehat{\mP_v}(\text{safe})$. However, when comparing $(\widehat{\mP_c}(\text{safe}),\widehat{\mP_v}(\text{safe}))$ to the attainable sets that we derive under our various sets of assumptions, we cannot reject the null of stable preferences for the majority of experiments. For instance, even under our strongest set of assumptions, only 45\% of the prior literature's experiments and 23\% of the \citet{MNOSS-2024-distinguishing, MNOSS-2026-connecting} experiments lie outside of the attainable region for stable preferences. Moreover, even fewer (20\% and 12\%) can be statistically differentiated from the attainable regions---that is, across the two datasets, 80-90\% of experiments are consistent with stable preferences.

Because the \citet{MNOSS-2024-distinguishing, MNOSS-2026-connecting} dataset contains within-subject data on both choices and valuations, we can use it to assess our conditional predictions. Here, we see more mixed evidence. When conditioning on an estimate for the mean preference within the population, $\widehat{\mP_c}(\text{safe})$ looks largely consistent with stable preferences. In contrast, when conditioning on both the mean preference and a subject's own stated valuation, $\widehat{\mP_c}(\text{safe})$ often looks inconsistent with stable preferences.

Finally, we can also use the \citet{MNOSS-2024-distinguishing, MNOSS-2026-connecting} dataset to pursue our more structural approach. The data contain 84 distinct groupings on which we can conduct our analysis (see Section \ref{sec:empirical} for details). Under the maintained assumption of joint normality, we are remarkably well-powered, and we reject the null of stable preferences in roughly 80\% of the 84 groupings. That said, when we study the nature of these deviations, the message differs from the received wisdom that choices induce greater risk aversion than valuations. In fact, overall, there appears to be no systematic direction to violations: The estimates indicate that choices induce greater risk aversion in roughly half of groupings, whereas valuations induce greater risk aversion in the other half. Perhaps more interesting, different types of experiments reveal different directional deviations. For decisions between a certain amount and a binary lottery, the estimates indicate that choices tend to induce more risk aversion; for decisions between a certain amount and a trinary lottery, the estimates indicate that valuations tend to induce more risk aversion; and for decisions between two binary lotteries, the estimates indicate no systematic deviation. 

Our analysis has three important implications for the central question of whether there is preference stability across measurement techniques. First, and most importantly, prima facie inconsistencies across measurement techniques do not necessarily imply a failure of stable preferences. Our analysis highlights how, to formally test for preference stability, the analyst must specify assumptions about heterogeneity and noise so as to identify the appropriate prediction to test. Our analysis further highlights that, for any test, rejection could mean a rejection of stable preferences but also could be a rejection of those ancillary assumptions.

Second, we provide guidance for experiments designed to assess preference stability. In particular, our analysis highlights the value of collecting within-subject data on both choices and valuations so as to conduct more powerful tests. One interpretation of our analysis is that tests based on whether observed combinations of $(\widehat{\mP_c}(\text{safe}),\widehat{\mP_v}(\text{safe}))$ fall within some acceptable range have low power to detect deviations from stability relative to our tests based on richer data. Indeed, under our strongest assumptions of full joint normality, we infrequently reject the null of stable preferences when using tests based on observed combinations of $(\widehat{\mP_c}(\text{safe}),\widehat{\mP_v}(\text{safe}))$, whereas with the same data we frequently reject the null of stable preferences when we use the connections in the data to study richer empirical objects. 

Third, our empirical analysis that implements our tests suggest a very different message from the received wisdom that choices induce greater risk aversion than valuations. Our results add nuance to a literature that often draws broad conclusions about the robustness and directionality of preference reversals. Further work with purpose-built experiments would be valuable for assessing whether our nuanced findings are robust and for investigating their source.

This manuscript proceeds as follows: Section \ref{sec:priorwork} reviews prior work evaluating procedural invariance and testing for preference stability;  Section \ref{sec:theory} presents our theoretical framework for evaluating the relationship between choices and valuations under various assumptions about heterogeneity and noise and develops our new tests; Section \ref{sec:empirical} reports on our empirical implementation of our new tests; and Section \ref{sec:conclusion} provides a discussion drawing out implications for the empirical literature on procedural invariance, and concludes.

\section{Prior Evidence on Choices versus Valuations}\label{sec:priorwork}

While our theoretical insights can be applied more broadly (as we discuss in Section \ref{sec:conclusion}), we focus on apparent inconsistencies between two elicitation procedures: choices versus valuations. In this section, we clarify exactly what we mean by inconsistencies between these two procedures, and then provide an overview of prior evidence.

\subsection{Choices and Valuations}

Consider two types of tasks:

\begin{itemize}
    \item \textbf{Choice task}: We ask a person to make a binary choice between two options.
    \item \textbf{Valuation task}: We ask a person to state a scalar value $y_i$ that makes two options of equal value.
\end{itemize}

While choice tasks are straightforward, valuation tasks require additional description. In the domain of choice between lotteries, there are a variety of scalar values that we could elicit. Here are four examples:\footnote{To simplify notation, whenever we use lottery notation, we suppress zero outcomes. Hence, $(30,0.6)$ represents a lottery to yields 30 with probability 0.6 and 0 with probability 0.4.}
\begin{itemize}
     \item[] Valuation Task A: State a $y_i$ such that $(y_i,1)$ is of equal value to $(30,0.6)$.
     \item[] Valuation Task B: State a $y_i$ such that $(20,1)$ is of equal value to $(y_i,0.2;20,0.3)$.
     \item[] Valuation Task C: State a $y_i$ such that $(y_i,0.4)$ is of equal value to $(30,0.2)$.
     \item[] Valuation Task D: State a $y_i$ such that $(20,y_i)$ is of equal value to $(30,0.6)$.
\end{itemize}
\noindent The simplest form for a valuation task is when we ask a person to state a certainty equivalent for a lottery (as in Valuation Task A). But valuation tasks can come in many other forms depending on which value is elicited. For instance, even when comparing a certain amount to a lottery, rather than elicit the sure amount, we could elicit one of the amounts in the lottery (as in Valuation Task B). Alternatively, we might compare two lotteries, in which case we could elicit one of the amounts in one of the lotteries (as in Valuation Task C) or one of the probabilities in one of the lotteries (as in Valuation Task D). In our language, all of these fall in the broad category of valuation tasks.

For any specific valuation task, there are multiple ways to elicit a scalar indifference value. For instance, we might simply ask a person to state a value (i.e., fill in the blank) or make a bid in the Becker-DeGroot-Marschak random second-price auction mechanism. Alternatively, we might use a multiple price list (MPL)---that is, a sequence of binary choices presented in list form where only the value of interest changes across rows.

\subsection{Apparent Inconsistencies Between Choices and Valuations}

Suppose we want to measure a person's preference between two lotteries. One way to measure that preference is to use a choice task that  asks the person to select between the two lotteries. A second way to measure that preference is through a valuation task (or tasks) such that the stated value (or values) convey information on the preference between the two lotteries.\footnote{For instance, suppose the two lotteries of interest are $(14,1)$ versus $(30,\frac{1}{2})$. If we use a valuation task in which we ask the person to state a $y_i$ such that $(y_i,1)$ is of equal value to $(30,\frac{1}{2})$, then a stated $y_i<14$ suggests a preference for $(14,1)$ over $(30,\frac{1}{2})$, whereas a stated $y_i>14$ suggests a preference for $(30,\frac{1}{2})$ over $(14,1)$.}

Our interest is studying apparent inconsistencies between these two approaches. Hence, for our purposes, an \textit{experiment} fixes a pair of lotteries, and then one treatment asks subjects to make a binary choice between the two lotteries, while a second treatment (within- or between-subjects) elicits a valuation (or valuations) that suggests a preference between the two lotteries. In all experiments we consider, one lottery can clearly be labeled the safer lottery, and the empirical quantities (at least initially) are the proportion of subjects who choose the safer option in the choice task, which we denote by $\widehat{\mP_c}(\text{safe})$, and the proportion of subjects whose valuation(s) suggests they prefer the safer option, which we denote by $\widehat{\mP_v}(\text{safe})$. In line with the prior literature, we say there are apparent inconsistencies when these empirical quantities differ.

\subsection{Prior Evidence}

The earliest comparison of choices versus valuations (at least in the domain of risk) comes from the literature on ``preference reversals'' initially developed by \citet{lichtenstein1971reversals} and later popularized in economics by \citet{grether1979economic}. In a typical experiment, subjects are asked to compare two lotteries, one with a high chance of a moderate prize (called a $P$-bet) and one with a low chance of a high prize (called a \$-bet); note that the $P$-bet is the safer option. First, subjects are presented with both bets and asked to make a choice. Second, subjects are asked to state a reservation value (i.e., a certainty equivalent) for each lottery independently. The researchers then compare (i) the proportion of subjects who choose the safer option but state a higher valuation for the riskier option to (ii) the proportion of subjects who choose the riskier option but state a higher valuation for the safer option. The typical finding is that the former substantially exceeds the latter. This finding is formally equivalent to $\widehat{\mP_c}(\text{safe}) > \widehat{\mP_v}(\text{safe})$, that is, that people appear more risk averse when making choices than they do when providing valuations.\footnote{To see this equivalence, let $(a,b)$ denote the pattern choosing option $a \in \{\text{safe},\text{risky}\}$ and stating a higher valuation for option $b \in \{\text{safe},\text{risky}\}$. Then $\widehat{\mP_c}(\text{safe}) = \widehat{\mP}(\text{safe},\text{safe}) + \widehat{\mP}(\text{safe},\text{risky})$ while $\widehat{\mP_v}(\text{safe}) = \widehat{\mP}(\text{safe},\text{safe}) + \widehat{\mP}(\text{risky},\text{safe})$, and thus $\widehat{\mP_c}(\text{safe}) > \widehat{\mP_v}(\text{safe})$ if and only if $\widehat{\mP}(\text{safe},\text{risky}) > \widehat{\mP}(\text{risky},\text{safe})$.}

More recently, a small literature has investigated whether people behave consistently when making a binary choice versus when that binary choice is part of an MPL \citep{brown2018separated, freeman2019eliciting, freeman2019choice}. This literature uses a between-subjects design with two treatments. In the choice treatment, subjects are asked to make a choice between two lotteries. In the MPL treatment, subjects are asked to complete an MPL in which one of the rows is the binary choice from the choice treatment. The researchers then compare the proportion of subjects who choose the safer option in the choice treatment versus the proportion of subjects who choose the safer option in the focal row in the MPL treatment. Again, the common empirical finding is that $\widehat{\mP_c}(\text{safe}) > \widehat{\mP_v}(\text{safe})$, that is, that people appear more risk averse when making choices than they do when providing valuations.

Most recently, \cite{MNOSS-2024-distinguishing,MNOSS-2026-connecting} conduct experiments in which they collect choices and valuations from the same subjects. Their goal was not to compare choices versus valuations; rather, their focus was comparing pairs of valuations to study the phenomena of the common ratio effect, the common consequence effect, and preferences for probabilistic mixtures. They collected data on corresponding pairs of choices to illustrate an inference problem that arises when studying paired-choice tasks (in \cite{MNOSS-2024-distinguishing}) and to show that paired choices and paired valuations can yield similar conclusions in some instances (in \cite{MNOSS-2026-connecting}). However, their data provide a convenience dataset that we can use for the purposes of this paper. When we reanalyze their data to directly compare individual choices versus valuations, again it turns out that the vast majority (80\%) of comparisons exhibit $\widehat{\mP_c}(\text{safe})>\widehat{\mP_v}(\text{safe})$.

In Section \ref{sec:empirical}, we provide more details on the literatures above and their data, and we then use that data for our empirical application. But first we turn to our main research question: What exactly can we infer when we observe apparent inconsistencies of the form described above?

\section{Theoretical Analysis of Choices versus Valuations}\label{sec:theory}

Apparent inconsistencies between choices and valuations have been interpreted in two primary ways. On one hand, the preference-reversals literature uses apparent inconsistencies to suggest that there do not exist stable underlying preferences. On the other hand, the choices-versus-MPLs literature uses apparent inconsistencies to suggest that one or both elicitation procedures are generating biased measures of underlying preferences. Our goal is to assess these claims by developing a theoretical model that makes predictions for observed behavior under each elicitation procedure.

To simplify our exposition, our analysis in the text focuses on the simple case where we want person $i$ to compare a lottery $X$ to a sure payment of amount $y$. However, the main results and intuitions generalize to the case where a person makes lottery-versus-lottery comparisons, and they also generalize to the  paradigm from the preference-reversals literature where a person chooses between two lotteries $X$ and $Z$ and also provides certainty equivalents for both (see \cref{ap:general_model}).

There are two ways in which we can ask a person to make a comparison between a lottery $X$ and a sure payment of amount $y$:

\begin{itemize}
    \item \textbf{Choice task}: We ask person $i$ to make
     a binary choice between $X$ and $y$.
     \item \textbf{Valuation task}: We ask person $i$ to state the value $y_i$ that is of equal value to them as the lottery $X$. 
\end{itemize}

As described in Section \ref{sec:priorwork}, the prior literature compares two quantities: $\widehat{\mP_c}(y)$, the empirical choice frequency of $y$ over $X$, and $\widehat{\mP_v}(y)$, the empirical proportion of individuals who exhibit a valuation for $X$ that is lower than the given $y$.\footnote{For a comparison between a lottery $X$ and a sure amount $y$, the latter is clearly the safer option, and thus $\widehat{\mP_c}(y)$ and $\widehat{\mP_v}(y)$ correspond to $\widehat{\mP_c}(\text{safe})$ and $\widehat{\mP_v}(\text{safe})$ from Section \ref{sec:priorwork}.} In experimental data, $\widehat{\mP_c}(y)$ and $\widehat{\mP_v}(y)$ depend on a combination of underlying population values $\mP_c(y)$ and $\mP_v(y)$ and sampling variation due to having finite experimental samples. We address the latter in Section \ref{sec:empirical}; in this section, we develop a theoretical model of the underlying population values $\mP_c(y)$ and $\mP_v(y)$.

The underlying population values $\mP_c(y)$ and $\mP_v(y)$ will reflect a combination of noise at the individual level and heterogeneity in preferences at the aggregate level. Hence, in building a stochastic model to make predictions for these quantities, we start with a model of stochastic behavior at the individual level (in \cref{sub:ind}), and we then incorporate heterogeneity in preferences to generate stochastic behavior at the aggregate level (in \cref{sub:agg}).

\subsection{Individual Behavior}\label{sub:ind}
We begin with a model of individual-level behavior, and in particular how underlying preferences combine with noise to generate stochastic behavior at the individual level. Throughout, we assume:
\begin{itemize}
    \item  For a choice task between $X$ and $y$, individual $i$'s underlying preferences involve an indifference value $y^*_{ci}$ such that $y^*_{ci} \sim_{ci} X$.\footnote{We use $\sim_{ci}$, $\prec_{ci}$, and $\succ_{ci}$ to denote underlying preferences for choice tasks, and $\sim_{vi}$, $\prec_{vi}$, and $\succ_{vi}$ to denote underlying preferences for valuation tasks. Under the null of stable preferences across the two tasks, we instead use $\sim_{i}$, $\prec_{i}$, and $\succ_{i}$ for both.} For their actual choice, individual $i$ chooses $y$ over $X$ when $y$ is larger than $y_{ci} \equiv y_{ci}^* + \varepsilon_{c}$, where choice noise $\varepsilon_{c}$ has CDF $F_{c}$. Individual $i$ thus chooses $y$ with probability $\mP_{ci}(y) = \mP(y_{ci}^*+\varepsilon_{c} \leq y) = F_{c}(y-y_{ci}^*)$.
    \item Likewise, for a valuation task, individual $i$'s underlying preferences involve an indifference value  $y^*_{vi}$ such that $y^*_{vi} \sim_{vi} X$. For their actual valuation, individual $i$ reports valuation $y_{vi}=y_{vi}^*+\varepsilon_{v}$, where valuation noise $\varepsilon_{v}$ has CDF $F_{v}$. Thus, the probability that individual $i$ states a valuation for $X$ that is lower than $y$ is $\mP_{vi}(y) = \mP(y_{vi}^*+\varepsilon_{v} \le y)=F_{v}(y-y_{vi}^*)$.
    \item We assume that $\varepsilon_{c}$ and $\varepsilon_{v}$ are mean zero.
\end{itemize}

The key null hypothesis that we assess in this paper is whether each individual has a stable underlying preference that is used for both choice tasks and valuation tasks. We state this formally as Assumption \ref{as_key-null}:

\setcounter{as}{-1}
\begin{as}[stable preferences]\label{as_key-null}
Each individual $i$ has a stable indifference value; in other words, each individual $i$ has $y_{ci}^* = y_{vi}^* \equiv y_i^*$.
\end{as}

With stable preferences, the two key equations are:
\begin{equation}\label{eqn:indiv_key_eqns}
    \mP_{ci}(y) = \mP(\varepsilon_c \le y - y_i^*) = F_c(y-y_i^*) ~~\text{ and } ~~ \mP_{vi}(y) = \mP(\varepsilon_v \le y - y_i^*) = F_v(y-y_i^*).
\end{equation}

This formulation assumes that decision noise enters at the level of the response variable---that is, for the response variable $y$, a person has an underlying value $y^*_i$ that reflects their preferences, but their behavior is influenced by an additive distortion to $y^*_i$. In many analyses, it is assumed instead that decision noise enters at the level of utility---for instance, researchers often assume that decisions under risk are determined from expected utility preferences combined with an additive distortion to utility. In \cref{ap:general_model}, we demonstrate that most of the results below remain unchanged with that formulation.

We are interested in developing predictions for what combinations of individual-level decision probabilities $(\mP_{ci}(y),\mP_{vi}(y))$ are possible under stable preferences combined with various ancillary assumptions. Specifically, we focus on the following notion of attainability:

\begin{definition}
    Given a set of assumptions on the distribution over $(\varepsilon_c,\varepsilon_v)$, the \textit{attainable set} $\mathcal{A}_i$ is the set of all $(a,b) \in [0,1]^2$ such that $\mP_{ci}(y)=F_{c}(y-y_i^*)=a$ and $\mP_{vi}(y)=F_{v}(y-y_i^*)=b$ for some $y_i^*$ and some distribution over $(\varepsilon_c,\varepsilon_v)$ that is consistent with the assumptions. 
\end{definition} 

In this framework, choices and valuations are both noisy measures of underlying preferences. The individual decision probabilities $\mP_{ci}(y)$ and $\mP_{vi}(y)$ depend on a combination of (i) how the offered $y$ compares to the individual's underlying preference $y_i^*$ and (ii) how the person's behavior is impacted by choice noise versus valuation noise. To build intuition, we consider a few examples. 

First, note that if the choice noise and the valuation noise are identical (i.e., $F_{c}(x)=F_{v}(x)$ for all $x$), then \[\mP_{ci}(y) =F_{c}(y-y_i^*)=F_{v}(y-y_i^*)=\mP_{vi}(y).\] Hence:

\begin{remark}\label{rem:identical}
If noise is identical for choices and valuations, then $\mathcal{A}_i = \{(a,b)|a=b\}$.
\end{remark}

Thus, one way to justify the simple null of $\mP_{ci}(y)=\mP_{vi}(y)$ is to assume that noise is identical for choices and valuations. Without supporting evidence, however, this seems a very strong assumption, and indeed for our structural analysis in \cref{sec:MLEestimates}, if we were to assume stable preferences, we would conclude that choices are noisier than valuations. Once we permit differential noise for choices versus valuations, the attainable set can expand considerably. Indeed, Example 1 highlights that, if the only restriction on choice and valuation noise is that they both must be mean zero, any pattern is possible for $\mP_{ci}(y)$ and $\mP_{vi}(y)$.

\bigskip

\noindent \textbf{Example 1}: For any $y$ and any $(a,b)\in (0,1)^2$, one can find a $y_i^*$ and mean-zero $F_{c}$ and $F_{v}$ such that 
$\mP_{ci}(y)=F_c(y-y_i^*)=a$ and $\mP_{vi}(y)=F_v(y-y_i^*)=b$. For instance, given a $y$, choose a $y_i^*<y$ and a $\delta > y-y_i^*$, and then assume
$$
\varepsilon_{c} = \left\{ 
\begin{matrix}
\delta & \text{with probability $1-a$} \\
\frac{-(1-a)\delta}{a} & \text{with probability $a$}
\end{matrix}
\right.
\qquad
\varepsilon_{v} = \left\{ 
\begin{matrix}
\delta & \text{with probability $1-b$} \\
\frac{-(1-b)\delta}{b} & \text{with probability $b$}.
\end{matrix}
\right.
$$
By construction, a person with this $y_i^*$ and these noise distributions would have $\mP_{ci}(y)=a$ and $\mP_{vi}(y)=b$.

\bigskip

Example 1 highlights that, if we want to have any hope of rejecting the null of stable preferences, we must impose restrictions on the nature of noise. One common restriction in applications is that noise is symmetric about its mean of zero.\footnote{To simplify the exposition, whenever we assume a distribution is symmetric, we also assume it is continuous; however, except for a few minor details, continuity can be relaxed without affecting our results.}

\begin{as}[symmetric noises]\label{as_even_c_v}
The distributions of choice noise and valuation noise are both continuous and symmetric, i.e., $F_c$ and $F_v$ are both continuous with $F_{c}(x)=1-F_{c}(-x)$ and $F_{v}(x)=1-F_{v}(-x)$ for all $x$.
\end{as}

\cref{fig:individual_curves} illustrates what the attainable set $\mathcal{A}_i$ might look like under some known distributions for $\varepsilon_c$ and $\varepsilon_v$ that are consistent with symmetric noise. The curve in panel A depicts a generic $\mathcal{A}_i$ for fixed $F_{c}$ and $F_{v}$ that are symmetric and continuous with the same convex support. In general, once we fix an $F_{c}$, an $F_{v}$, and a $y$---so the only unknown is where $y_i^*$ falls in relation to that $y$---the set of possible $(\mP_{ci}(y),\mP_{vi}(y))$ will be a weakly increasing curve, where different points on the curve correspond to different $y^*_i$ (i.e., a smaller $y^*_i$ implies a larger $y-y^*_i$ and thus both $\mP_{ci}(y)$ and $\mP_{vi}(y)$ must be weakly larger). If $F_{c}$ and $F_{v}$ are symmetric, then this curve must pass through $(\frac{1}{2},\frac{1}{2})$ and be symmetric around $(\frac{1}{2},\frac{1}{2})$. Finally, if $F_{c}$ and $F_{v}$ have the same convex support, this curve must be strictly increasing from $(0,0)$ to $(1,1)$, as depicted by the curve in panel A.

\begin{figure}[t!]
\begin{center}
    \includegraphics[width=\textwidth]{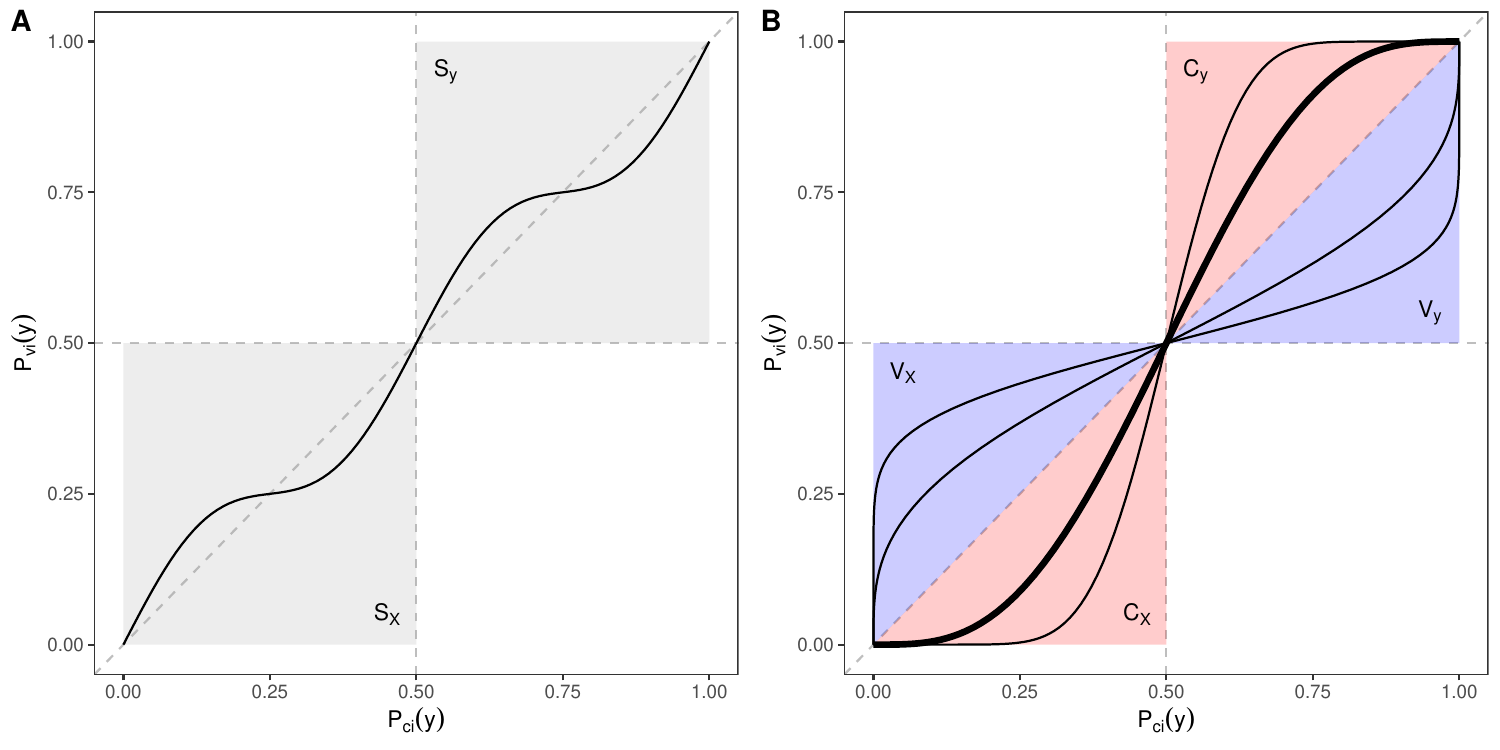}
    \caption{Individual Choice and Valuation Probabilities with Known Noise}
    \label{fig:individual_curves}
    \end{center}
     \footnotesize{\emph{Notes:} Figure presents possible $(\mP_{ci}(y),\mP_{vi}(y))$ combinations given fixed choice and valuation noise distributions $F_{c}$ and $F_{v}$. Panel A depicts possible combinations for generic $F_{c}$ and $F_{v}$ that are symmetric and continuous with the same convex support; any such curve must be contained in regions $S_X$ and $S_y$. Panel B  depicts possible combinations when $F_{c}$ and $F_{v}$ are both mean-zero normal distributions with $ k \equiv \frac{\sigma_{c}^2}{\sigma_{v}^2} \in \lbrace 4, 2, \frac{1}{2}, \frac{1}{4} \rbrace$; bold line corresponds to $k = 2$. When $k>1$ (choices noisier than valuations), curve must be contained in regions $C_X$ and $C_y$; when $k<1$ (valuations noisier than choices), curve must be contained in regions $V_X$ and $V_y$.} 
\end{figure} 

Any weakly increasing curve that passes through $(\frac{1}{2},\frac{1}{2})$ must be contained within the two shaded regions in panel A denoted $S_y$ and $S_X$. Since these will appear in our results, we provide formal definitions:
$$
S_y \equiv \left[\frac{1}{2},1\right]^2 \qquad \text{and} \qquad S_X \equiv \left[0,\frac{1}{2}\right]^2.
$$
The labels reflect that, with symmetric noises, the outcome for individual $i$ will be in $S_y$ when $y \succ_i X$ and in $S_X$ when $X \succ_i y$. 

Panel B of Figure \ref{fig:individual_curves} depicts the case where $F_{c}$ and $F_{v}$ are both known mean-zero normal distributions with variances $\sigma^2_{c}$ and $\sigma^2_{v}$. The four curves reflect four different values of $k \equiv \sigma^2_{c}/\sigma^2_{v}$. When the ratio of the variances is larger than one, which means that choices are noisier than valuations, within $S_y$ the curve must lie above the 45-degree line in the region labeled $C_y$, whereas within $S_X$ the curve must lie below the 45-degree line in the region labeled $C_X$. Intuitively, when choices are noisier than valuations, $\mP_{ci}(y)$ will be closer to $\frac{1}{2}$ (which reflects random choice), while $\mP_{vi}(y)$ will be closer to 0 or 1 (which reflects implementing one's underlying preference). Analogously, when the ratio of the variances is smaller than one, which means that valuations are noisier than choices, within $S_y$ the curve must lie below the 45-degree line in the region labeled $V_y$, whereas within $S_X$ the curve must lie above the 45-degree line in the region labeled $V_X$. Finally, the further is the ratio of variances from one, the more the curve deviates from the 45-degree line.

Again, the curves in Figure \ref{fig:individual_curves} reflect the attainable set for known noise distributions. When analyzing experimental data, however, one won't know the specific noise distributions, and thus we want to identify the attainable set when we permit all possible noise distributions that satisfy a set of assumptions. From Figure \ref{fig:individual_curves}, one can see that under symmetric normal-normal noise, if we have the flexibility to choose any $\sigma^2_{c}$ and $\sigma^2_{v}$, we can produce a curve that goes through any point strictly in the interior of $S_y \cup S_X$. \cref{lm_weak_ind} establishes that under more general symmetric noise, the conclusion is much the same except we can generate outcomes along the edges as well.\footnote{The proof for \cref{lm_weak_ind} and all propositions are collected in Appendix \ref{ap:proof}.}

\begin{lemma}\label{lm_weak_ind}
Under symmetric noises:    
 \begin{enumerate}
     \item $\mathcal{A}_i = S_y \cup S_X$;
     \item If $y > y^*_i$, $(\mP_{ci}(y),\mP_{vi}(y))\in S_y$, and if $y < y^*_i$, $(\mP_{ci}(y),\mP_{vi}(y))\in S_X$; and 
     \item If $y = y^*_i$, then $(\mP_{ci}(y),\mP_{vi}(y)) = \left( \frac12,\frac12 \right)$.
 \end{enumerate}  
\end{lemma}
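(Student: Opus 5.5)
I would prove parts 2 and 3 first, since part 1 then reduces to a construction. Under symmetric noises, every symmetric continuous CDF satisfies $F(0)=1-F(0)$, so $F(0)=\frac12$. If $y=y_i^*$, then by \eqref{eqn:indiv_key_eqns} we have $\mP_{ci}(y)=F_c(0)=\frac12$ and $\mP_{vi}(y)=F_v(0)=\frac12$, which is part 3. If $y>y_i^*$, then $x\equiv y-y_i^*>0$. Monotonicity of CDFs gives $F_c(x)\ge F_c(0)=\frac12$, and likewise $F_v(x)\ge\frac12$, so the pair lies in $S_y$. The case $y<y_i^*$ is symmetric and gives values $\le \frac12$, hence a pair in $S_X$. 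These facts give the inclusion $\mathcal{A}_i\subseteq S_y\cup S_X$, because every $y_i^*$ falls into one of the three cases and $(\frac12,\frac12)$ lies in both regions.

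For the reverse inclusion, I would fix $y$ and a target $(a,b)\in S_y$; the case $S_X$ follows by reflection, taking $y_i^*>y$ and using $F(-x)=1-F(x)$. I would choose $y_i^*=y-1$, so the requirement becomes $F_c(1)=a$ and $F_v(1)=b$ for some symmetric continuous $F_c,F_v$. It therefore suffices to show that for every $p\in[\frac12,1]$ there is a symmetric continuous distribution $F$ with $F(1)=p$; the two noises can then be chosen independently.

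\textbf{Interior case $p\in[\frac12,1)$.} Take $\varepsilon$ uniform on $[-s,s]$, so that $F(1)=\frac12+\frac{1}{2s}$ for $s\ge1$. Setting $s=1/(2p-1)$ hits any $p\in(\frac12,1]$, with $s=1$ giving exactly $p=1$. For $p=\frac12$ the uniform family fails, since $s\to\infty$ would be needed. Here I would instead use a symmetric continuous distribution with no mass on $[-1,1]$, which is allowed because only continuity of the CDF is required, not a density bounded away from zero. A concrete choice is uniform on $[-3,-2]\cup[2,3]$ with equal weights, which gives $F(1)=\frac12$.

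\textbf{Edge case $p=1$.} This is the point where a little care is needed, since $p=1$ requires all mass within $[-1,1]$. Uniform on $[-1,1]$, or indeed any continuous symmetric distribution supported there, gives $F(1)=1$. Thus the edges and corners of $S_y$ are all attained, including mixed cases such as $(a,b)=(1,\frac12)$, obtained by pairing different noise families for $c$ and $v$.

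The main obstacle is exactly these boundary points. The normal family only reaches the interior, as the text notes, so the construction must use compactly supported or gapped symmetric distributions, and one must check that they remain continuous and mean zero. Both properties are immediate for uniform mixtures. Combining the two inclusions gives $\mathcal{A}_i=S_y\cup S_X$.
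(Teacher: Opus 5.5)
Your proof is correct. Parts 2--3 and the inclusion $\mathcal{A}_i\subseteq S_y\cup S_X$ match the paper's argument: symmetry gives $F(0)=\frac12$, and monotonicity of the CDFs does the rest.

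For the reverse inclusion you take a different route. You fix $y-y_i^*=1$ and then handle the two coordinates separately. This works because $\mP_{ci}(y)$ and $\mP_{vi}(y)$ depend only on the marginals of $\varepsilon_c$ and $\varepsilon_v$. The problem therefore reduces to a one-dimensional one: realize any $p\in[\frac12,1]$ as $F(1)$ for a symmetric continuous $F$. Uniform noise on $[-s,s]$ covers $(\frac12,1]$, and a single gapped uniform covers $p=\frac12$.

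The paper instead splits the problem in two.
\begin{itemize}
\item For the interior it uses normal noises and varies $\sigma_c^2$ and $\sigma_v^2$.
\item For the edges it fixes one pair of gapped-uniform distributions with $x'_c\ge x''_v$ (or $x'_v\ge x''_c$) and lets $y_i^*$ vary. Each edge is then traced out by a single curve.
\end{itemize}

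Your version is shorter and treats interior and boundary points in one construction, with no case split. The paper's version yields a stronger by-product: a single fixed pair $(F_c,F_v)$ whose curve contains two opposite edges of $S_y\cup S_X$. It reuses this in the proof of Proposition~\ref{prop_weak_pop}, where independence of preferences and noise forces everyone to share the same noise distributions. Your pointwise construction, with one pair of distributions per target point, would not deliver that directly.

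One minor point: your heading ``Interior case $p\in[\frac12,1)$'' is mislabeled. The uniform family covers $(\frac12,1]$, and $p=\frac12$ corresponds to an edge of $S_y$. This does not affect correctness, since every $p\in[\frac12,1]$ is covered.
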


\bigskip

The intuition for \cref{lm_weak_ind} is exactly as already expressed above. Whenever $y\ge\yist$ and thus $y \succsim_i X$ in terms of underlying preferences, symmetric noise implies $\mP_{ci}(y) \ge \frac{1}{2}$ and $\mP_{vi}(y) \ge \frac{1}{2}$ (because $\mP_{ci}(y)=F_{c}(y-\yist)\ge F_{c}(0) = \frac{1}{2}$ and $\mP_{vi}(y)=F_{v}(y-\yist)\ge F_{v}(0) = \frac{1}{2}$). An analogous intuition holds whenever $y\le\yist$ and thus $y \precsim_i X$ in terms of underlying preferences.\footnote{Note that \cref{as_even_c_v} is sufficient but not necessary for \cref{lm_weak_ind}. All that is required is that $F_{c}$ and $F_{v}$ have the same median, in which case $y-y_i^*$ larger than that median must yield an outcome in $S_y$ and $y-y_i^*$ smaller than that median must yield an outcome in $S_X$. Also note that part (3) of Lemma \ref{lm_weak_ind} is a detail that relies on continuity of $F_c$ and $F_v$; without continuity, $y=y^*_i$ implies $(\mP_{ci}(y),\mP_{vi}(y))\in S_y$.}

The symmetric normal-normal case in Figure \ref{fig:individual_curves} also illustrates the impact of choices being noisier than valuations versus valuations being noisier than choices. We first formalize the regions in panel B of Figure 1:
\[
C_y \equiv S_y \cap \{(a,b)\mid b \geq a\} \qquad \text{and} \qquad C_X \equiv S_X \cap \{(a,b)\mid b \leq a\},
\]
and
\[
V_y \equiv S_y \cap \{(a,b)\mid b \leq a\} \qquad \text{and} \qquad V_X \equiv S_X \cap \{(a,b)\mid b \geq a\}.
\]
In the normal-normal case, when choices are noisier than valuations (i.e., $\sigma^2_c > \sigma^2_v)$, the outcome must lie in $C_y \cup C_X$, whereas when valuations are noisier than choices (i.e., $\sigma^2_v > \sigma^2_c)$, the outcome must lie in $V_y \cup V_X$. This conclusion generalizes to a more general definition of the noises being ordered (which we introduce later as \cref{as_ordered}). However, since one could easily imagine noises not being ordered or being ordered in different ways for different people, and since one cannot easily assess whether the noises are ordered without assuming additional structure, we do not yet make any assumptions on this dimension.

One might wonder whether additional restrictions on noise at the individual level could shrink $\mathcal{A}_i$ to something smaller than $S_y \cup S_X$. However, because the normal-normal case can attain any point in the interior of $S_y \cup S_X$, any such additional restriction would need to be something that excludes the normal-normal case. Since this seems unappealing from an applied perspective, we impose no further restrictions.

\cref{lm_weak_ind} characterizes the attainable set at the individual level under stable preferences and symmetric noises. If we had many repeated observations of choices and valuations at the individual level so that we could measure $\mP_{ci}(y)$ and $\mP_{vi}(y)$, we could assess the null of stable preferences combined with symmetric noises at the individual level. However, in experiments, we virtually always observe aggregate data from a heterogeneous population. Hence, before we can develop tests of the null of stable preferences, we need to build a model of aggregate behavior.

\subsection{Aggregate Behavior}\label{sub:agg}

To specify a model of stochastic behavior at the aggregate level, we need to introduce heterogeneity in preferences and heterogeneity in decision noise. To do so, we use the following structure:

\begin{itemize}
    \item Heterogeneity in preferences: Individual preferences are realizations of a random variable $y_i^* = \mu + \eta$, where $\mu \in \R$ is the population mean preference, and $\eta$ is a mean-zero random variable with CDF $H$. Under Assumption \ref{as_key-null}, each individual gets a single draw of $\eta$ (and thus a single $y_i^*$) that applies for both choices and valuations. 
    \item Heterogeneity in decision noise: The additive noise shocks $\varepsilon_c$ and $\varepsilon_v$ can be correlated with $\eta$.
\end{itemize}

With this structure, the aggregate decision probabilities are:
\begin{align*}
     \mP_c(y) &= \mP(y^*_i + \varepsilon_c\le y) = \mP(\mu+\eta+\varepsilon_c\le y); \\
   \mP_v(y) &= \mP(y^*_i + \varepsilon_v\le y) = \mP(\mu+\eta+\varepsilon_v\le y).
\end{align*}

We are now interested in the attainable set $\mathcal{A}$ for the population proportions $(\mP_c(y),\mP_v(y))$, which we define analogously to the attainable set $\mathcal{A}_i$ for $(\mP_{ci}(y),\mP_{vi}(y))$. 

\begin{definition}
    Given a set of assumptions on the distribution over $(\eta,\varepsilon_c,\varepsilon_v)$, the \textit{attainable set} $\mathcal{A}$ is the set of all $(a,b) \in [0,1]^2$ such that $\mP_{c}(y)=a$ and $\mP_{v}(y)=b$ for some $\mu$ and some distribution over $(\eta,\varepsilon_c,\varepsilon_v)$ that is consistent with the assumptions. 
\end{definition} 

We first note that if we make no assumptions beyond symmetry of each individual's noises, that is, symmetry of $\varepsilon_c|\eta$ and $\varepsilon_v|\eta$, then the attainable set at the population level is the convex hull of $S_y \cup S_X$. 

\begin{proposition}\label{prop_big_hexagram}
    Under stable preferences and (conditionally) symmetric noises, $\mathcal{A} = \con(S_y \cup S_X)$.
\end{proposition}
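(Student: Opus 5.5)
We will prove the two inclusions $\mathcal{A}\subseteq \con(S_y\cup S_X)$ and $\con(S_y\cup S_X)\subseteq\mathcal{A}$ separately. Both directions rest on one decomposition: aggregate probabilities are mixtures of individual probabilities. Conditioning on $\eta$ (equivalently on $y_i^*=\mu+\eta$), we have
\begin{equation*}
\bigl(\mP_c(y),\mP_v(y)\bigr)=\int \bigl(\mP(\varepsilon_c\le y-\mu-\eta\mid\eta),\,\mP(\varepsilon_v\le y-\mu-\eta\mid\eta)\bigr)\,\dd H(\eta).
\end{equation*}
Because $\varepsilon_c|\eta$ and $\varepsilon_v|\eta$ are symmetric and continuous, each integrand is exactly an individual pair $(\mP_{ci}(y),\mP_{vi}(y))$ for an individual with indifference value $y_i^*=\mu+\eta$ and noise distributions $F_c(\cdot\mid\eta)$, $F_v(\cdot\mid\eta)$. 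By \cref{lm_weak_ind}, every such pair therefore lies in $S_y\cup S_X$.

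\textbf{Upper inclusion.} The aggregate pair is an expectation of a random vector supported in $S_y\cup S_X$. An expectation of a random vector supported in a compact set lies in the closed convex hull of that set. Since $S_y\cup S_X$ is a compact subset of $\R^2$, its convex hull is already closed (the convex hull of a compact set in finite dimensions is compact). This gives $\mathcal{A}\subseteq\con(S_y\cup S_X)$.

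\textbf{Lower inclusion.} By Carathéodory, or more simply by noting that $S_y$ and $S_X$ are convex squares, every point of $\con(S_y\cup S_X)$ can be written as $\lambda p+(1-\lambda)q$ with $p\in S_X$, $q\in S_y$ and $\lambda\in[0,1]$. By \cref{lm_weak_ind}(1), there are individual-level constructions attaining $p$ and $q$. Part (2) lets us take $p$ to come from some $y^*_p\ge y$ with symmetric noises $F^p_c,F^p_v$, and $q$ from some $y^*_q\le y$ with symmetric noises $F^q_c,F^q_v$.

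We then build a two-type population. With probability $\lambda$ set $\eta=y_p^*-\mu$ and draw the noises from $(F^p_c,F^p_v)$; with probability $1-\lambda$ set $\eta=y_q^*-\mu$ and draw from $(F^q_c,F^q_v)$. Here $\mu$ is chosen so that $\eta$ has mean zero, i.e.\ $\mu=\lambda y_p^*+(1-\lambda)y_q^*$. This choice keeps each $y_i^*$ fixed at its target value. Conditional symmetry holds by construction, since the noise distribution is allowed to depend on $\eta$ and each conditional law is symmetric and continuous. The aggregate pair then equals $\lambda p+(1-\lambda)q$.

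\textbf{Main obstacle.} The delicate step is checking that the two-type construction is well-defined when the two types would otherwise collide. If $y_p^*=y_q^*$, both types share the same $\eta$ but need different noise laws. One fix is to note that \cref{lm_weak_ind} lets us pick $y_p^*>y>y_q^*$ strictly, whenever $p\neq(\tfrac12,\tfrac12)$ or $q\neq(\tfrac12,\tfrac12)$, by rescaling the noise distributions. The degenerate case where $p$ or $q$ equals $(\tfrac12,\tfrac12)$ is handled by shifting that point.

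For the edge points, such as $(1,\tfrac12)$, we must verify that the realizing constructions in \cref{lm_weak_ind} indeed allow $y^*\neq y$. These points are realized with $y^*\neq y$ by using a noise with a flat region of its CDF at level $\tfrac12$ around zero. Alternatively, the model could allow $\eta$ to carry an auxiliary type label so that the conditional noise law can differ at the same $\eta$. I would first try the rescaling argument and fall back to the flat-CDF constructions for the boundary.
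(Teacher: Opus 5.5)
Your proof is correct and is essentially the paper's argument. The upper inclusion holds because aggregate probabilities are averages of individual pairs lying in $S_y\cup S_X$ by Lemma~\ref{lm_weak_ind}, and the lower inclusion comes from a two-type population that mixes any point of $S_y$ (a type with $y^*<y$) with any point of $S_X$ (a type with $y^*>y$), each type having its own symmetric noise.

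The added care (closedness of the hull, centering $\eta$ through the choice of $\mu$) is fine, but the ``collision'' obstacle is moot and needs no rescaling, flat-CDF fallback, or auxiliary label. With $y_q^*\le y\le y_p^*$, the two types can coincide only when both equal $y$, which forces $p=q=(\tfrac12,\tfrac12)$. In any case, two types sharing one $\eta$ just give a conditional noise law that is a mixture of symmetric continuous laws, which is again symmetric and continuous and still yields $\lambda p+(1-\lambda)q$.
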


Panel A in \cref{fig:aggregate} depicts the attainable set from \cref{prop_big_hexagram}. The intuition is straightforward. \cref{lm_weak_ind} establishes that for any point in $S_y \cup S_X$, an individual could exhibit those decision probabilities. If we then put no restrictions on the distribution of preferences and permit people with different $y_i^*$ to have different noise distributions, we can create a population that consists of any distribution across points in $S_y \cup S_X$. It immediately follows that we can get average behavior anywhere in $\con(S_y \cup S_X)$.

\begin{figure}[h!]
\begin{center}
    \includegraphics[width=\textwidth]{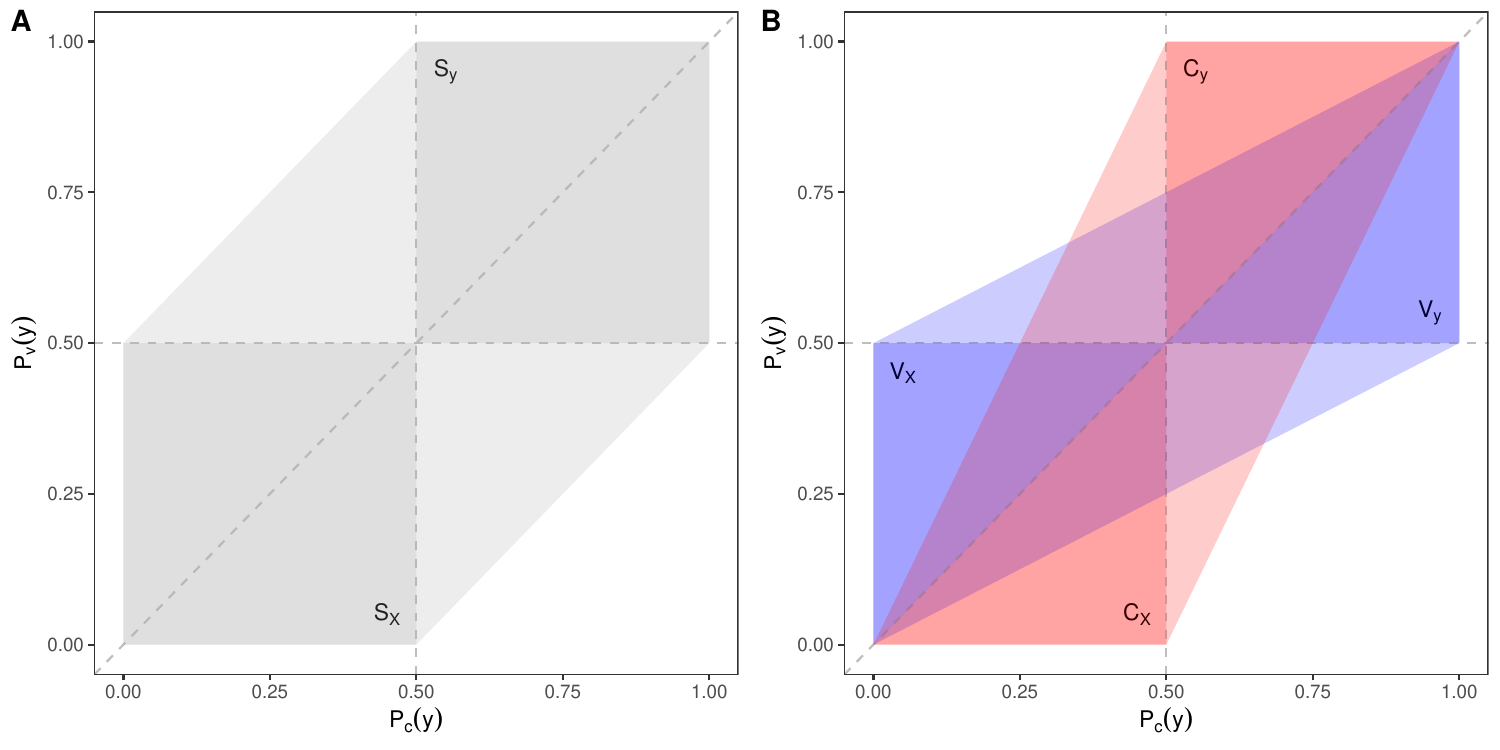}
    \caption{Attainable Sets from \cref{prop_big_hexagram,prop_weak_pop}}
    \label{fig:aggregate}
    \end{center}
     \footnotesize{\emph{Notes:} 
     Figure presents possible $(\mP_{c}(y),\mP_{v}(y))$ combinations given unknown choice and valuation noise that satisfy assumptions of \cref{prop_big_hexagram,prop_weak_pop}. Panel A depicts possible combinations under unknown noise that is only symmetric (Assumption \ref{as_even_c_v}), with attainable set $\mathcal{A} = \con(S_y \cup S_X)$. Panel B depicts possible combinations under unknown noise that is both symmetric and independent of preferences (Assumptions \ref{as_even_c_v} and \ref{as_prefs-noise-independent}), with attainable set $\mathcal{A} = \con(C_y \cup C_X) \cup \con(V_y \cup V_X)$.}
\end{figure}

\cref{prop_big_hexagram} suggests a possible test of the null of stable preferences: If we want to assume very little---specifically, only that everyone has symmetric noises---then to reject the null of stable preferences, we would need to reject that an experiment lies in $\con(S_y \cup S_X)$. As we will see, nearly the entirety of the preference-reversals literature lies within this set. Thus, one might be interested in imposing additional assumptions to gain restrictions on the set of attainable outcomes.

A natural next step to restrict the attainable set is to assume that preferences and decision noise are independent from each other:

\begin{as}[preferences independent of noise]\label{as_prefs-noise-independent}
$(\varepsilon_c,\varepsilon_v)$ is independent of $\eta$, and thus all individuals have the same CDF $F_c$ for $\varepsilon_c$, and all individuals have the same CDF $F_v$ for $\varepsilon_v$.
\end{as}

Under Assumption \ref{as_prefs-noise-independent}, for any fixed $F_c$ and $F_v$ that are symmetric and the same for everyone, the set of $(\mP_{ci}(y),\mP_{vi}(y))$ that could emerge across different $y_i^*$ must be a weakly increasing curve that is symmetric around $(\frac{1}{2},\frac{1}{2})$ (as in Figure \ref{fig:individual_curves}). Without any restrictions on the distribution of $\eta$, we could then attain any $(\mP_c(y),\mP_v(y))$ in the convex hull of that curve. Because we can choose any symmetric $F_c$ and $F_v$ that we want, it follows that a point is in the attainable set $\mathcal{A}$ if it is part of the convex hull of any weakly increasing curve that is symmetric around $(\frac{1}{2},\frac{1}{2})$. Applying this insight yields Proposition \ref{prop_weak_pop}.

\begin{proposition}\label{prop_weak_pop}
    Under stable preferences and symmetric noises, if preferences are independent of noise, then $\mathcal{A} = \con(C_y \cup C_X) \cup \con(V_y \cup V_X)$.
\end{proposition}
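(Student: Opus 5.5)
We prove the two inclusions separately. The key object is the curve traced by a fixed pair of noise distributions. Under \cref{as_prefs-noise-independent}, everyone shares the same symmetric continuous $F_c,F_v$. As $t\equiv y-y_i^*$ ranges over $\R$, the individual probabilities trace
\[
\gamma(t)=(F_c(t),F_v(t)).
\]
This curve is weakly increasing in both coordinates. It satisfies $\gamma(0)=(\frac12,\frac12)$ and is point-symmetric: $\gamma(-t)=(1,1)-\gamma(t)$. Since $\eta$ has an arbitrary mean-zero distribution and $\mu$ is free, the aggregate is
\[
(\mP_c(y),\mP_v(y))=\E{\gamma(y-\mu-\eta)}.
\]
By choosing $\mu$ we can shift the support of $y-\mu-\eta$ anywhere, so the mean-zero constraint on $\eta$ is harmless. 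The attainable set is therefore the union, over admissible pairs $(F_c,F_v)$, of the convex hulls of the curves $\gamma$. One caveat: only limits of the curve at $\pm\infty$ might be missing, which matters only for boundary points, and we handle these with finite-support distributions for $\eta$.

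\textbf{Inclusion $\subseteq$.} Fix $(F_c,F_v)$. The main obstacle is that a generic symmetric pair need not be ordered: $F_c(t)-F_v(t)$ can change sign on $t>0$. In that case the curve wanders between $C_y$ and $V_y$, and its convex hull is not obviously inside either $\con(C_y\cup C_X)$ or $\con(V_y\cup V_X)$.

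My first attempt is to describe the target set explicitly. Write $d=b-a$ and $s=a+b-1$. In these coordinates:
\begin{itemize}
\item $C_y$ is $\{s\ge0,\ d\ge0\}$ intersected with $S_y$;
\item $C_X$ is $\{s\le0,\ d\le0\}$ intersected with $S_X$;
\item $\con(C_y\cup C_X)$ is the hexagon with vertices $(\frac12,\frac12)$, $(\frac12,1)$, $(1,1)$, $(0,0)$, $(\frac12,0)$, together with the segments joining them. Concretely, it is the region bounded by the 45-degree line and the lines $a=\frac12$ and $b=\frac12$, swept across.
\end{itemize}
I would then check whether an unordered curve's hull stays in the union. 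If it does not, the claim must be read as asserting that the extreme points achievable lie there, and the proof would instead show that any point of $\con(\gamma)$ lies in the union. This could use Carathéodory, writing the point as a combination of at most three curve points, followed by a symmetrization argument: replace $\eta$ by its mixture with $-\eta$ around $\mu$. I expect this step to need the most care.

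A cleaner route is a direct inequality. Suppose a point $p=\E{\gamma(T)}$ lies in neither hull. Then it is strictly off the diagonal on the "wrong side", for instance $b>a$ with $a+b<1$ but outside $\con(V_y\cup V_X)$. One would show this forces a linear functional, supporting one of the hexagon's edges, to be violated by some point of $\gamma$. That contradicts $\gamma\subset S_y\cup S_X$ (\cref{lm_weak_ind}) combined with monotonicity of $\gamma$.

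\textbf{Inclusion $\supseteq$.} Take a point $p\in\con(C_y\cup C_X)$; the $V$ case follows by swapping the roles of choice and valuation. We construct symmetric $F_c,F_v$ with $F_c$ more dispersed than $F_v$. We want $\gamma$ to pass through a point $q\in C_y$ and its reflection $(1,1)-q\in C_X$, and possibly also through corners such as $(\frac12,1)$ and $(1,1)$.

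Here a piecewise construction works. Take $F_v$ to jump almost to $1$ by some small $t_1$, while $F_c$ equals $\frac12$ until $t_1$ and then rises. Symmetric uniform or two-point-smoothed distributions do this, and they realize the vertices $(\frac12,1)$, $(1,1)$ and their reflections on a single curve. A finite-support $\eta$ (after fixing $\mu$) then places weights on those curve points to produce $p$. Continuity of $F_c,F_v$ only costs approximation at corners; the vertices themselves are reached at finite $t$ by using compactly supported noise.
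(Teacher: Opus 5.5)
Your $\supseteq$ direction is fine and is essentially the paper's construction. Take $\varepsilon_v$ uniform on $[-x_v'',-x_v']\cup[x_v',x_v'']$, and $\varepsilon_c$ of the same form with $x_c'\ge x_v''$. The curve then contains $\{\tfrac12\}\times[0,1]$, $[\tfrac12,1]\times\{1\}$ and $[0,\tfrac12]\times\{0\}$, hence the points $(0,0),(\tfrac12,0),(\tfrac12,1),(1,1)$. A finite-support $\eta$, with $\mu$ absorbing the mean, reaches their hull.

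The genuine gap is the $\subseteq$ direction, which carries the content and which you do not prove. You leave open whether the hull of an unordered curve stays in the union, and the routes you sketch do not work.

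\begin{itemize}
\item \textbf{The target is misdescribed.} $\con(C_y\cup C_X)$ is the parallelogram $\{0\le b\le 1,\ 2a-1\le b\le 2a\}$ with vertices $(0,0),(\tfrac12,0),(1,1),(\tfrac12,1)$. The point $(\tfrac12,\tfrac12)$ is its center, not a vertex. So the complement of the union in $[0,1]^2$ is the corner region $\{b>2a,\ a<2b-1\}$ near $(0,1)$ together with its mirror image near $(1,0)$.
\item \textbf{Symmetrizing $\eta$ does not help.} It moves the aggregate point; with symmetric $\eta$ you only obtain $S_y\cup S_X$ (\cref{prop:symm_prefs}). So it cannot locate a given point of $\con(\gamma)$.
\item \textbf{A single violating curve point gives no contradiction.} A normal--normal curve with valuations noisier passes through $(0.1,0.4)$, which violates $b\le 2a$. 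It also passes through the reflection $(0.9,0.6)$, all consistent with monotonicity and \cref{lm_weak_ind}.
\item \textbf{One curve can straddle both parallelograms.} A piecewise-linear symmetric pair can put $(0.8,0.55)$ (outside the $C$-parallelogram) and $(0.85,0.95)$ (outside the $V$-parallelogram) on the same curve. So the hull of one curve need not lie in one parallelogram, and excluding the corners must come from pairs of points.
\end{itemize}

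The missing idea is the paper's pairing argument.

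\begin{itemize}
\item \textbf{Reduce to two points.} Continuity of $F_c,F_v$ makes the curve $K$ connected. By Fenchel's refinement of Carath\'eodory, every point of $\con(K)$ is a convex combination of \emph{two} curve points. Your three-point Carath\'eodory would force a much messier case analysis.
\item \textbf{Place the two points.} $S_X$ and $S_y$ lie in the target. So a point $p$ in the upper-left corner must be $p=\tau q_1+(1-\tau)q_2$ with $q_1\in K\cap S_y$ and $q_2\in K\cap S_X$.
\item \textbf{Reflect and compare.} Point symmetry puts $\tilde q_2=(1,1)-q_2$ in $K\cap S_y$. Monotonicity makes $q_1$ and $\tilde q_2$ comparable in the product order.
\item \textbf{Finish with the inequalities.} Split on which of $q_1,\tilde q_2$ is larger and use only the box constraints. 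This gives $b\le 2a$ when $\tau\ge\tfrac12$, and $a\ge 2b-1$ when $\tau\le\tfrac12$; in the paper's centered coordinates these are $\Psi_C\le\tfrac12$ and $\Psi_V\le\tfrac12$. Either one contradicts the corner conditions $b>2a$ and $a<2b-1$. The other corner is symmetric.
\end{itemize}

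This reflection-plus-comparability step is exactly what your proposal lacks.
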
  

To provide some intuition for Proposition \ref{prop_weak_pop}, consider the restriction of the attainable set relative to Proposition \ref{prop_big_hexagram}. Specific values of $(\mP_c(y), \mP_v(y))$ are eliminated; these values correspond to combinations of correlated types (e.g., combining individuals exhibiting $y_i^*>y$ and valuations noisier than choices with those exhibiting  $y_i^*<y$ and choices noisier than valuations). With preferences independent of noise, $(\mP_c(y), \mP_v(y))$  values attainable by relying on such correlation are no longer permissible.

\cref{prop_weak_pop} suggests another possible test of the null of stable preferences: If we are willing to assume that everyone has symmetric noises and that preferences and noise are independent, then to reject the null of stable preferences we would need to reject that an experiment lies in $\con(C_y \cup C_X) \cup \con(V_y \cup V_X)$. This region is illustrated in panel B of \cref{fig:aggregate}. Notably, whether preferences are required to be independent of noise makes relatively little difference to the set of attainable outcomes, such that many studies (as we will see) will lie within the attainable regions for both Proposition \ref{prop_big_hexagram} and \ref{prop_weak_pop}.

Hence, further assumptions are required to more meaningfully restrict the set of attainable outcomes. Propositions \ref{prop_big_hexagram} and \ref{prop_weak_pop} rely on the freedom to use any distribution of preferences. We next restrict preferences to be symmetric.

\begin{as}[symmetric preferences]\label{as_sure_sym_delta}
    The distribution of preferences is continuous and symmetric, i.e., $H$ is continuous with $H(x)=1-H(-x)$ for all $x$.
\end{as} 

Under Assumptions \ref{as_even_c_v}, \ref{as_prefs-noise-independent}, and \ref{as_sure_sym_delta}, it is natural to define the random variables $y_{c}$ and $y_{v}$ as
$$
y_{c} \equiv y_i^* + \varepsilon_{c} = \mu + \eta + \varepsilon_{c} = \mu + \nu_{c}
$$
and
$$
y_{v} \equiv y_i^* + \varepsilon_{v} = \mu + \eta + \varepsilon_{v} = \mu + \nu_{v}
$$
where $\nu_{c}=\eta + \varepsilon_{c}$ and $\nu_{v}=\eta + \varepsilon_{v}$. Note that $\nu_{c}$ and $\nu_{v}$ are both sums of independent symmetric mean-zero random variables and thus are themselves symmetric mean-zero random variables. Letting $G_c$ and $G_v$ denote the CDFs for $\nu_{c}$ and $\nu_{v}$, we can write the aggregate decision probabilities as:
\begin{equation}\label{eqn:forRemark2_b}
\mP_c(y)=\mP(y \ge \mu + \nu_{c}) = G_c(y-\mu) \quad \text{and} \quad \mP_v(y)=\mP(y \ge \mu + \nu_{v}) = G_v(y-\mu).
\end{equation}
Comparing \eqref{eqn:indiv_key_eqns} and \eqref{eqn:forRemark2_b}, we have:

\bigskip

\begin{remark}\label{rem:observe}
  In terms of the attainable set, an aggregate-level model with symmetric noise and symmetric preferences with mean $\mu$ that are independent of noise is observationally equivalent to an individual-level model with $y_i^* = \mu$ and symmetric mean-zero noise.   
\end{remark}
\cref{rem:observe} immediately leads to Proposition \ref{prop:symm_prefs}, which is the analogue of Lemma \ref{lm_weak_ind}:

\begin{proposition}\label{prop:symm_prefs}
Under stable preferences and symmetric noises, if preferences are symmetric and independent of noise, then:
 \begin{enumerate}
     \item $\mathcal{A} = S_y \cup S_X$;
     \item If $y > \mu$, $(\mP_c(y),\mP_v(y)) \in  S_y$, and if $y < \mu$,  $(\mP_c(y),\mP_v(y)) \in  S_X$; and
     \item If $y = \mu$, then $(\mP_c(y),\mP_v(y)) = \left( \frac12,\frac12 \right)$.
 \end{enumerate}  
\end{proposition}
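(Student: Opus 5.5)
The plan is to reduce the statement to Lemma~\ref{lm_weak_ind} through the observational equivalence in Remark~\ref{rem:observe}.

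First I verify that $\nu_c=\eta+\varepsilon_c$ and $\nu_v=\eta+\varepsilon_v$ satisfy the hypotheses of Assumption~\ref{as_even_c_v}. By Assumption~\ref{as_prefs-noise-independent}, $\eta$ is independent of $\varepsilon_c$ and of $\varepsilon_v$. Each of these variables is symmetric about zero, so $-\nu_c=(-\eta)+(-\varepsilon_c)$ has the same distribution as $\eta+\varepsilon_c$. Hence $G_c(x)=1-G_c(-x)$ for all $x$, and the same argument gives $G_v(x)=1-G_v(-x)$. Continuity holds because the sum of independent random variables, one of which has a continuous CDF (here $\eta$ by Assumption~\ref{as_sure_sym_delta}, and also the noises), has a continuous CDF: $G_c(x)=\E{F_c(x-\eta)}$, and dominated convergence applies. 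By \eqref{eqn:forRemark2_b}, $(\mP_c(y),\mP_v(y))=(G_c(y-\mu),G_v(y-\mu))$, which has exactly the form of \eqref{eqn:indiv_key_eqns} with $y_i^*$ replaced by $\mu$ and $(F_c,F_v)$ replaced by $(G_c,G_v)$.

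Parts (2) and (3) then follow directly from the monotonicity and symmetry argument used for Lemma~\ref{lm_weak_ind}. If $y>\mu$, then $G_c(y-\mu)\ge G_c(0)=\frac12$ and likewise $G_v(y-\mu)\ge\frac12$, so the pair lies in $S_y$. If $y<\mu$, both probabilities are at most $\frac12$, so the pair lies in $S_X$. Continuity and symmetry give $G_c(0)=G_v(0)=\frac12$, which yields part (3). These facts also give the inclusion $\mathcal{A}\subseteq S_y\cup S_X$ in part (1).

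The main obstacle is the reverse inclusion: every point of $S_y\cup S_X$, including edge points such as $(1,\frac12)$ or $(\frac12,1)$, must be attained by some admissible triple $(H,F_c,F_v)$ and some $\mu$. The class of aggregate pairs $(G_c,G_v)$ is smaller than the class of individual noises, because both are convolutions with a common continuous symmetric $H$. So I cannot simply take the construction from Lemma~\ref{lm_weak_ind} and set $G=F$.

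For a target $(a,b)\in S_y$, WLOG $a\ge b\ge\frac12$ with $a<1$ in a first case, I would do the following.
\begin{itemize}
\item Set $\mu=0$ and $y=1$.
\item Take $\eta$ uniform on $[-\delta,\delta]$ with small $\delta$.
\item Take each noise symmetric, placing mass at $\pm s$ together with a small continuous component.
\end{itemize}
For example, let $\varepsilon_c$ be concentrated near $\pm t_c$, with probability $2a-1$ near $0$ and the remainder split near $\pm L$ for large $L$. Then $G_c(1)\approx a$, and the error can be driven to zero. To hit $a$ exactly, I would use continuity in a parameter together with the intermediate value theorem, since $G_c(1)$ varies continuously from $\frac12$ toward $1$ as the central mass varies. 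The same device handles $b$.

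Edge values $a=1$ require $G_c(1)=1$. This is achievable by taking $\varepsilon_c$ and $\eta$ supported in $[-\frac12,\frac12]$, for instance both uniform there. Meanwhile $b=\frac12$ requires $G_v(1)=\frac12$, which is impossible for a symmetric continuous $G_v$ unless $G_v$ is flat on $[0,1]$. That is achievable if $\nu_v$ puts no mass in $(0,1]$, for example with $\varepsilon_v$ uniform on $\pm[L,L+1]$ and $\eta$ on $[-\frac12,\frac12]$ tiny relative to $L$; then $\nu_v$ has no mass in $(-1,1)$ when $L>2$. Since $\eta$ must be common to both, I choose $\eta$ uniform on small $[-\delta,\delta]$ and design both noises as symmetric mixtures of uniforms whose supports avoid or include $[-1,1]$ as needed.

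The case $S_X$ follows by reflection ($y<\mu$) or by symmetry of the construction. Carrying out this explicit mixture construction with exact values via the intermediate value theorem is the step I expect to need the most care.
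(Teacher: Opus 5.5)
Your proof is correct, and for the inclusion $\cA\subseteq S_y\cup S_X$ and for parts (2)--(3) it is the paper's argument. Independence plus symmetry makes $\nu_c,\nu_v$ symmetric, and the monotonicity argument of Lemma~\ref{lm_weak_ind} then applies to $G_c(y-\mu),G_v(y-\mu)$. Your explicit check that $G_c,G_v$ are continuous, which part (3) needs, is a detail the paper leaves implicit.

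You diverge on the reverse inclusion. The paper disposes of it in one line by taking $\eta\equiv 0$ degenerately, so that $G_c=F_c$, $G_v=F_v$ and Lemma~\ref{lm_weak_ind} applies verbatim. As you noticed, a degenerate $\eta$ is strictly speaking excluded by the continuity requirement on $H$ in Assumption~\ref{as_sure_sym_delta}. The paper is implicitly relying on its remark that continuity can be relaxed. Your construction with $\eta$ uniform on $[-\delta,\delta]$ closes this gap at the cost of a few more lines.

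Your construction can also be made exact without any intermediate-value step:
\begin{itemize}
\item Set $y-\mu=1$, $2\delta<1$ and $L-\delta>1$.
\item Let $\varepsilon_c$ put weight $q_c=2a-1$ on the uniform distribution on $[-\delta,\delta]$, and weight $1-q_c$ on the uniform distribution on $[-L-1,-L]\cup[L,L+1]$.
\item The first component of $\nu_c$ then lies in $(-1,1)$, while the second avoids $[-1,1]$ and is symmetric.
\item Hence $G_c(1)=q_c+\frac{1-q_c}{2}=a$ exactly.
\item Do the same for $\varepsilon_v$ with $q_v=2b-1$. This covers all of $S_y$, edges included, and $S_X$ follows with $y-\mu=-1$, $q_c=1-2a$, $q_v=1-2b$.
\end{itemize}

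One small correction: drop the phrasing ``placing mass at $\pm s$.'' Atoms in $\varepsilon_c$ would violate the continuity built into Assumption~\ref{as_even_c_v}, even though $\nu_c$ itself would remain continuous. Use narrow uniforms, as you do later in the proposal.
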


Part 1 of \cref{prop:symm_prefs} provides another test of the null of stable preferences when comparing aggregate shares in an experiment. If we are willing to make the ancillary assumptions that noise is symmetric and that preferences are symmetric and independent of noise, then we can test the null of stable preferences by assessing whether an experiment lies outside of $S_y \cup S_X$. Stated differently, \cref{prop:symm_prefs}(1) requires congruence between the signs of $\mP_c(y)-\frac{1}{2}$ and $\mP_v(y)- \frac{1}{2}$.\footnote{Analogous to part (3) of Lemma \ref{lm_weak_ind}, part (3) or Proposition \ref{prop:symm_prefs} is a detail that relies on continuity of $F_c$ and $F_v$; without continuity, $y=\mu$ implies $(\mP_{c}(y),\mP_{v}(y))\in S_y$.}

While Proposition \ref{prop:symm_prefs} permits $\mP_c(y)=\mP_v(y)$, it does not provide conditions under which we would predict $\mP_c(y)=\mP_v(y)$, which would then reflect conditions under which standard tests are valid. It turns out that, under two further assumptions, we would predict $\mP_c(y)=\mP_v(y)$ only in one special case.

\begin{as}[unimodal preferences]\label{as_unimodal}
    $H$ has a strictly unimodal density $h$ on a convex support $S$. That is, for any $x_1 > x_2 \ge 0$, $h(x_1) \le h(x_2)$ and $h(-x_1) \le h(-x_2)$, where both inequalities are strict when $x_1 \in S$.
\end{as}

\begin{as}[ordered noise]\label{as_ordered}
At least one of the following holds:
\begin{enumerate}
    \item Choices are noisier than valuations:
    $F_v(z)<F_c(z)$ whenever $z<0$ and $F_v(z)>0$, and
    $F_v(z)>F_c(z)$ whenever $z>0$ and $F_v(z)<1$.
    
    \item Valuations are noisier than choices:
    $F_c(z)<F_v(z)$ whenever $z<0$ and $F_c(z)>0$, and
    $F_c(z)>F_v(z)$ whenever $z>0$ and $F_c(z)<1$.
\end{enumerate}
\end{as}

The assumption of unimodal preferences is strong, but it is commonly used (e.g., if preferences are normally distributed). The assumption of ordered noise generalizes what we saw for the normal-normal case in \cref{fig:individual_curves}, where $\sigma^2_c/\sigma^2_v$ greater or less than one determined whether choices or valuations are noisier. 

\begin{proposition}\label{prop:cali}
Under stable preferences and symmetric and ordered noises, if preferences are symmetric, unimodal, and independent of noise, then $\mP_c(y)=\mP_v(y)\in (0,1)$ if and only if $y = \mu$, in which case $\mP_c(y)=\mP_v(y)=\frac{1}{2}$.
\end{proposition}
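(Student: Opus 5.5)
The plan is to use \cref{rem:observe} to reduce everything to the aggregate CDFs $G_c$ and $G_v$ in \eqref{eqn:forRemark2_b}, and then to compare them by conditioning on the noise. The ``if'' direction is immediate from \cref{prop:symm_prefs}(3): when $y=\mu$ we have $\mP_c(y)=\mP_v(y)=\frac12$. For the ``only if'' direction, set $z\equiv y-\mu$. I will show that $z\neq 0$ together with $G_c(z),G_v(z)\in(0,1)$ forces $G_c(z)\neq G_v(z)$, with the sign of the gap fixed by which noise is larger. By the symmetry between $z$ and $-z$ it suffices to treat $z>0$, and I take the case where choices are noisier than valuations (\cref{as_ordered}(1)); case (2) is identical with the roles of $c$ and $v$ swapped.

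The first step is a symmetrization. Since $\eta$ is independent of $\varepsilon_c$, $G_c(z)=\E{H(z-\varepsilon_c)}$. Because $\varepsilon_c$ is symmetric, this equals $\E{\psi_z(|\varepsilon_c|)}$, where
\[
\psi_z(t)\equiv \tfrac12\bigl[H(z+t)+H(z-t)\bigr], \qquad t\ge 0,
\]
and similarly $G_v(z)=\E{\psi_z(|\varepsilon_v|)}$. Differentiating gives $\psi_z'(t)=\frac12[h(z+t)-h(z-t)]$. For $z>0$ and $t>0$ we have $|z+t|>|z-t|$, so symmetry and unimodality of $h$ (\cref{as_sure_sym_delta}, \cref{as_unimodal}) give $\psi_z'(t)\le 0$. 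The inequality is strict whenever $z+t\in S$. Hence $\psi_z$ is weakly decreasing, and it is strictly decreasing on an initial interval $(0,\bar t)$ whenever $z\in \operatorname{int} S$.

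The second step translates ordered noise into a dominance statement. Under \cref{as_ordered}(1) and symmetry, $\mP(|\varepsilon_c|>t)=2F_c(-t)>2F_v(-t)=\mP(|\varepsilon_v|>t)$ for every $t>0$ with $F_v(-t)>0$. Thus $|\varepsilon_c|$ first-order stochastically dominates $|\varepsilon_v|$. Integrating by parts (a Stieltjes integral, which is legitimate since $\psi_z$ is bounded and monotone) yields
\[
G_v(z)-G_c(z)=\int_0^\infty \bigl(-\psi_z'(t)\bigr)\bigl[\mP(|\varepsilon_c|>t)-\mP(|\varepsilon_v|>t)\bigr]\,\dd t\;\ge 0.
\]
It remains to show that the integrand is strictly positive on a set of positive measure.

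That strictness is the step I expect to be the main obstacle, because it requires handling the boundary cases. First, the bracketed term is strictly positive for small $t>0$. Since $F_v$ is continuous and symmetric, $\varepsilon_v$ is not degenerate at $0$, so $F_v(-t)>0$ for all small $t$. Second, I need $-\psi_z'>0$ on part of that same small-$t$ range, which means I need $z\in\operatorname{int}S$ (or at least $z+t\in S$ for small $t$). Suppose instead $z\ge \sup S$. Then $H(z)=1$, and so $\eta\le z$ almost surely. Here I plan to use the hypothesis $G_c(z),G_v(z)\in(0,1)$, which is exactly what rules out degenerate configurations. If $\eta$ lies below $z$ with certainty, I will argue directly from $G_v(z)<1$ that $\varepsilon_v$ must put mass far to the right. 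Then I will compare $G_c$ and $G_v$ through the region where $t$ is large, using the ordered-noise inequality on the upper tail where $F_v<1$. If this direct argument becomes messy, my fallback is to note that the case $z\notin\operatorname{int}S$ still leaves $\psi_z'(t)<0$ for $t$ near $z-\inf S$ whenever that point is in $S$'s closure, and to locate a positive-measure set of $t$ there on which $F_v(-t)>0$. Either route gives $G_c(z)<G_v(z)$ for $z>0$. By symmetry $G_c(z)>G_v(z)$ for $z<0$, so $\mP_c(y)=\mP_v(y)\in(0,1)$ forces $y=\mu$.
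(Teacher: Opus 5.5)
Up to the strictness step your argument is the paper's. Since $-\psi_z'(t)=\tfrac12[h(z-t)-h(z+t)]$ and $\mP(|\varepsilon_c|>t)-\mP(|\varepsilon_v|>t)=2[F_v(t)-F_c(t)]$, your integrand is exactly the paper's $D(s)[h(t-s)-h(t+s)]$ with $(t,s)$ renamed $(z,t)$. You reach it by conditioning on the noise and integrating by parts; the paper conditions on $\eta$ and folds the odd function $D$. For $z\in\operatorname{int}S$ your proof is complete.

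The gap is the case you flagged and left open: $S$ bounded and $z\ge\sup S$. It looks hard only because your criterion for $\psi_z'(t)<0$, namely $z+t\in S$, is too narrow. Write $\bar\eta\equiv\sup S$ and $\bar\varepsilon_v\equiv\esssup\varepsilon_v$. Strict unimodality gives $h>0$ on $(-\bar\eta,\bar\eta)$. So $|z-t|<\bar\eta$ already implies $h(z-t)>h(z+t)$: either $z+t\in S$ and Assumption~\ref{as_unimodal} gives the strict inequality, or $z+t\notin S$ and $h(z+t)=0$. Since $\eta+\varepsilon_v\le\bar\eta+\bar\varepsilon_v$ almost surely, $G_v(z)<1$ forces $z-\bar\eta<\bar\varepsilon_v$. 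Hence the interval $\bigl(\max\{0,z-\bar\eta\},\min\{z+\bar\eta,\bar\varepsilon_v\}\bigr)$ is nonempty, and on it both $-\psi_z'(t)>0$ and $F_v(-t)>0$. This is the paper's window $(t-\bar\eta,t+\bar\eta)\cap(0,\bar\varepsilon_v)$, and it handles every $z>0$ at once with no case split.

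Your fallback, as written, looks at the wrong end of this window: $t$ near $z-\inf S=z+\bar\eta$. There $-\psi_z'$ is positive, but neither noise tail need be. Take $h(x)=1-|x|$ on $[-1,1]$, $\varepsilon_v$ uniform on $[-1.5,1.5]$, $\varepsilon_c$ uniform on $[-2,2]$ (so Assumption~\ref{as_ordered}(1) holds), and $z=2$. Then $G_c(z),G_v(z)\in(0,1)$, yet $\mP(|\varepsilon_c|>t)=\mP(|\varepsilon_v|>t)=0$ for all $t\in(2,3)$, so your integrand vanishes near $t=3$. The usable $t$ lie just above $z-\sup S$, here $(1,1.5)$. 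Your first, unexecuted route was heading there; what it needs is the observation above about where $\psi_z'<0$.
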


\cref{prop:cali} highlights a different problem with the typical approach of testing whether the aggregate shares are the same. Up to now, we have emphasized that a simple test of the null $\mP_c(y)=\mP_v(y)$ is not enough to test the null of stable preferences unless one is willing to make very strong ancillary assumptions (e.g., that the choice noise and the valuation  noise are identical). In \cref{prop:cali}, we see a second problem: Under the types of regularity assumptions frequently made in economics---symmetric continuous noise and symmetric unimodal preferences---stable preferences combined with differential noise across elicitation techniques would in fact yield $\mP_c(y)=\mP_v(y)$ only if we happen to be conducting the experiment at $y=\mu$. Since the population mean preference is typically something one would not know when designing an experiment, a strong interpretation is that, in fact, under stable preferences, we should not expect to see $\mP_c(y)=\mP_v(y)$ in any experiments. 

\subsection{Leveraging Richer Data for Additional Tests}\label{sec:leveraging_richer-data}

In line with the prior literatures on preference reversals and on comparing choices versus valuations, Section \ref{sub:agg} focused on predictions for $\mP_c(y)$ and $\mP_v(y)$. However, the null of stable preferences combined with various ancillary assumptions also makes qualitative predictions for other objects. In this section, we highlight two such predictions, where each takes advantage of richer data structures (if they are available).

\subsubsection{Using Quantitative Information on Mean Preferences from Valuations}\label{subsub:prop-3-2}

Part 2 of Proposition \ref{prop:symm_prefs} suggests another test of stable preferences (combined with Assumptions~\ref{as_even_c_v}--\ref{as_sure_sym_delta}): If $y>\mu$, then $\mP_c(y) \ge \frac{1}{2}$, and if $y<\mu$, then $\mP_c(y) \le \frac{1}{2}$. Testing this prediction requires knowledge of $\mu$. While we cannot observe $\mu$, if a dataset contains the full distribution of $y_{vi}$ in the valuation task, we can obtain an estimate of $\mu$. We can then evaluate Proposition \ref{prop:symm_prefs}(2) using that empirical analog of $\mu$. We describe one such implementation in Section \ref{subsec:qual_test}.

\subsubsection{Analyzing Conditional Choices}

If one has linked data at the individual level---that is, for each individual we observe both their choice between $y$ and $X$ and their stated valuation $y_{vi}$ for $X$---additional tests become possible.

Consider the following natural conjecture: Because it is true under symmetric noise that
$$
\mP_c(y|y=y_i^*) = \frac{1}{2}
$$
and that $\E{y_{vi}}=y_i^*$, one might conjecture that\footnote{While the conditioning event is null, this is formally a regular conditional probability distribution, which is uniquely defined up to a null set. Also note that we can observe $\mP_c(y|y=y_{vi})$ only if $y_{vi}-\mu$ is in the support of $\eta+\varepsilon_v$; hence, everything below makes this assumption.}
$$
\mP_c(y|y=y_{vi}) = \frac{1}{2}.
$$

It turns out that this property is unlikely to hold even in the types of symmetric, smooth, and independent environments often used in economics. In particular, suppose we assume symmetric noises and stable preferences that are symmetric, unimodal, and independent of noise. In addition, assume:

\begin{as}[independent noise]\label{as_indep}
    $\varepsilon_{c}$ and $\varepsilon_{v}$ are independent.
\end{as}

\begin{as}[positive densities]\label{as_f_pos}
$\varepsilon_c$ and $\varepsilon_v$ have densities $f_c$ and $f_v$, respectively, that are positive on an open interval containing zero. 
\end{as}

The following proposition establishes that, under these conditions, this conjecture holds only if $y$ is equal to the mean preference.

\begin{proposition}\label{prop:cond_cali}
Under stable preferences and symmetric noises, if preferences are symmetric, unimodal, and independent of noise, and the noise is independent with positive densities, then $ \mP_c(y\mid y=y_{vi}) =\frac{1}{2}$ if and only if $y = \mu$. Moreover, if $y > \mu$ then $\mP_c(y\mid y=y_{vi}) > \frac{1}{2}$, and if $y<\mu$ then $\mP_c(y\mid y=y_{vi}) < \frac{1}{2}$.
\end{proposition}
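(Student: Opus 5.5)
The plan is to write the conditional choice probability as an explicit ratio of integrals and then use the symmetries of $F_c$, $f_v$ and $h$ to sign the numerator of $\mP_c(y\mid y=y_{vi})-\frac12$. Set $d\equiv y-\mu$. Under stable preferences, $y_{vi}=\mu+\eta+\varepsilon_v$ and the choice event is $\{\mu+\eta+\varepsilon_c\le y\}$. By \cref{as_prefs-noise-independent,as_indep}, the triple $(\eta,\varepsilon_c,\varepsilon_v)$ is mutually independent, and by \cref{as_unimodal,as_f_pos} we have densities $h$ and $f_v$. So the conditional density of $\eta$ given $\eta+\varepsilon_v=d$ is proportional to $h(t)f_v(d-t)$. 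Conditional on $\eta=t$, the choice probability is $F_c(d-t)$, because $\varepsilon_c$ is independent of $(\eta,\varepsilon_v)$. Substituting $s=d-t$ gives
\[
\mP_c(y\mid y=y_{vi})=\frac{\int F_c(s)\,f_v(s)\,h(d-s)\,\dd s}{\int f_v(s)\,h(d-s)\,\dd s}.
\]
The denominator is the density of $\eta+\varepsilon_v$ at $d$, which is positive by the standing observability assumption (footnote preceding \cref{as_indep}).

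Next I subtract $\frac12$ and fold the integral at $s=0$. \cref{as_even_c_v} gives $F_c(-s)-\frac12=-(F_c(s)-\frac12)$ and $f_v(-s)=f_v(s)$. Hence the sign of $\mP_c(y\mid y=y_{vi})-\frac12$ equals the sign of
\[
N(d)\equiv\int_0^\infty \Big(F_c(s)-\tfrac12\Big) f_v(s)\,\big[h(d-s)-h(d+s)\big]\,\dd s .
\]
If $d=0$, the bracket vanishes identically by symmetry of $h$, so $N(0)=0$ and the conditional probability is $\frac12$. For $d>0$ and $s>0$, we have $|d-s|<d+s$. Symmetry of $h$ together with unimodality (\cref{as_unimodal}) then gives $h(d-s)=h(|d-s|)\ge h(d+s)$, and $F_c(s)-\frac12\ge 0$. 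So every factor in the integrand is nonnegative and $N(d)\ge0$. The case $d<0$ follows from the mirror-image argument, or directly from $N(-d)=-N(d)$, which uses $h(-x)=h(x)$.

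The main obstacle is strictness, i.e.\ showing $N(d)>0$ when $d>0$; this is what delivers the ``only if'' direction as well as the strict inequalities. I need a set of $s>0$ of positive measure on which all three factors are strictly positive. By \cref{as_f_pos}, $F_c(s)-\frac12>0$ and $f_v(s)>0$ for all $s\in(0,\bar s)$, for some $\bar s>0$. For the bracket I split into two cases.
\begin{itemize}
\item If $d$ lies in the interior of the convex support $S$, then $d+s\in S$ for small $s$, and the strict part of \cref{as_unimodal} gives $h(|d-s|)>h(d+s)$.
\item If $d$ lies at or beyond the upper edge of $S$, then $h(d+s)=0$. In this case I use observability: the denominator $\int f_v(s)h(d-s)\,\dd s$ is positive, and I must show that $h(d-s)>0$ on a positive-measure subset of the region where $f_v(s)>0$ and $F_c(s)>\frac12$.
\end{itemize}
The second case is the delicate step. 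There I would argue via convexity of $S$ and the fact that $f_v$ is positive near $0$ and $F_c$ is strictly above $\frac12$ for every $s>0$ where mass has accumulated. If necessary, I would restrict to $s$ in $\{f_v>0\}\cap(d-S)$, which has positive measure exactly when the density of $\eta+\varepsilon_v$ at $d$ is positive.

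Combining the three cases $d>0$, $d=0$, $d<0$ yields the proposition: the sign of $\mP_c(y\mid y=y_{vi})-\frac12$ matches the sign of $y-\mu$, and equality to $\frac12$ holds if and only if $y=\mu$.
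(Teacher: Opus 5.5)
Your proposal is correct and follows the paper's proof essentially step for step. You use the same ratio-of-integrals representation of $\mP_c(y\mid y=y_{vi})$, the same folding at $0$ (using that $F_c-\frac12$ is odd and $f_v$ is even), and the same appeal to symmetry and unimodality of $h$ to sign $h(d-s)-h(d+s)$. The one difference is how strictness is obtained: the paper does it in one stroke, using the positive denominator to get a positive-measure set where $f_v(z)h(t-z)>0$ (reflected to $z>0$ if needed), while you split on whether $d$ is interior to $S$.

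Your ``delicate'' edge case closes immediately. If $d\ge\sup S$, then $d-S\subseteq[0,\infty)$, so positivity of the denominator already gives a positive-measure set of $s>0$ with $f_v(s)h(d-s)>0$ and $h(d+s)=0$. On that set the first factor is also positive, because $F_c(s)>\frac12$ for every $s>0$, not just near $0$: $F_c$ is nondecreasing and $f_c>0$ on a neighborhood of $0$.
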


The intuition for \cref{prop:cond_cali} is easy to see when the noise is also unimodal. Consider an experiment with $y > \mu$. When we observe that a person reports a valuation $y_{vi} = y$, it need not be the case that $\E{y^*_i|y_{vi} =y}=y$; indeed, with symmetric and unimodal preferences and symmetric and unimodal noise, it must be that $\E{y^*_i|y_{vi} =y}$ is in between $\mu$ and $y$. Hence, for this person, $y$ will be relatively attractive, and thus it leads to a conditional choice probability greater than $\frac{1}{2}$. An analogous argument holds for an experiment with $y < \mu$, leading to a conditional choice probability less than $\frac{1}{2}$.

On one hand, this result is a point of caution. It is tempting to think that one way to test for stable preferences is to test whether $\mP(y|y_{vi}=y)=\frac{1}{2}$. Analogous to our discussion after \cref{prop:cali}, however, such a test is valid only if we are confident that we are conducting an experiment precisely at $y=\mu$.

On the other hand, \cref{prop:cond_cali} suggests another test of stable preferences (combined with Assumptions \ref{as_even_c_v}-\ref{as_unimodal} and \ref{as_indep}-\ref{as_f_pos}): If $y>\mu$ then $\mP_c(y|y_{vi}=y) > \frac{1}{2}$, and if $y<\mu$ then $\mP_c(y|y_{vi}=y) < \frac{1}{2}$. As for the prediction in \cref{subsub:prop-3-2}, testing this prediction requires knowledge of $\mu$. Again, if a linked dataset contains the full distribution of $y_{vi}$ in the valuation task, we can obtain an estimate of $\mu$ and conduct the exercise with the empirical analog of $\mu$. We describe one such implementation in Section \ref{sec:prop3_part2}.

\subsection{Tests Based on Estimates of Structural Parameters}\label{sec:nnn_analysis}

Up to now, our assumptions have involved qualitative features of the distributions of noise and preferences and have led to assessments of stability free from precise distributional assumptions. If one is willing to make structural assumptions about the distributions, and if one has the appropriate variation in one's data, one can develop tests based on estimates of structural parameters. Moreover, with such estimates, one can also obtain quantitative estimates of the magnitudes of instability. In this section, we present an example of such a structural approach in which we assume all distributions are normal.

\subsubsection{Underlying Structural Assumptions}

Suppose that the population distribution of underlying preferences is given by
\begin{equation*}
 \begin{pmatrix}
 y^*_{ci} \\
 y^*_{vi} \\
\end{pmatrix}
\sim N \left (
 \begin{pmatrix}
 \mu_c  \\
\mu_v \\
\end{pmatrix} ,
 \begin{pmatrix}
\gamma_c^2   & \gamma_{cv} \\
\gamma_{cv} & \gamma_v^2 \\
\end{pmatrix} 
\right).
\end{equation*}
With this structure, the null of stable preferences would mean that $\mu_c = \mu_v$ and $\gamma_c^2=\gamma_v^2=\gamma_{cv}$ (which together imply that every individual has $y^*_{ci}=y^*_{vi}$). Because the approaches below do not provide direct estimates for $\gamma_c^2$ and $\gamma_v^2$, we focus our analysis on testing the null of $\mu_c = \mu_v$, that is, of stable mean preferences.

We assume that noise is independent of preferences, and we further assume that $\varepsilon_{c}$ and $\varepsilon_{v}$ both have mean-zero normal distributions with variances $\sigma^2_c$ and $\sigma^2_v$, respectively. Finally, we assume the two types of noise are independent of each other.\footnote{Note that this structure satisfies all of Assumptions \ref{as_even_c_v}-\ref{as_f_pos}, unless $\sigma^2_c=\sigma^2_v$, in which case ordered noise only weakly holds.}

Recall that a person chooses $y$ over $X$ when $y$ is larger than $y_{ci} \equiv y_{ci}^*+\varepsilon_{c}$, and the person reports valuation for $X$ of $y_{vi} \equiv  y_{vi}^*+\varepsilon_{v}$. Under the assumptions above, the joint distribution of  $y_{ci}$ and $y_{vi}$ is bivariate normal: 
\begin{equation}\label{eqn:struct_distn}
 \begin{pmatrix}
 y_{ci} \\
 y_{vi} \\
\end{pmatrix}
\sim N \left (
 \begin{pmatrix}
 \mu_c  \\
\mu_v \\
\end{pmatrix} ,
 \begin{pmatrix}
\sigma^2_c + \gamma_c^2   & \gamma_{cv} \\
\gamma_{cv} & \sigma^2_v + \gamma_v^2 \\
\end{pmatrix} 
\right).
\end{equation}

Equation \eqref{eqn:struct_distn} can be used to generate structural predictions from which one can base additional tests of stable preferences and quantification of deviations therefrom.

\subsubsection{Probit Formulation}\label{subsub:probit_theory}

If one collects within-subjects responses for both  choices and valuations, this framework generates a prediction for $\mP_c(y|y_{vi})$. In particular, because  $y_{ci}$ and $y_{vi}$ are joint normal, the conditional distribution of $y_{ci}$ given a specific valuation of $y_{vi}$ is normal.\footnote{\label{FN:probit_cond_distn}Specifically, $y_{ci}|_{y_{vi}} \sim N\left( \mu_c + \frac{\gamma_{cv}}{ \sigma^2_v + \gamma_v^2 }(y_{vi} - \mu_v),V^2 \right)$, where $V^2 \equiv \sigma^2_c + \gamma^2_c - \frac{(\gamma_{cv})^2}{ \sigma^2_v + \gamma_v^2}$.} From that conditional distribution one can derive that 
\[
\mP_c(y|y_{vi}) = \Phi\left( \frac{\mu_v-\mu_c}{V}  + \frac{\left( 1 - \frac{\gamma_{cv}}{ \sigma^2_v + \gamma_v^2 }  \right)}{V} (y - \mu_v) + \frac{\frac{\gamma_{cv}}{ \sigma^2_v + \gamma_v^2 }}{V}(y - y_{vi})  \right)
\]
where $V^2 \equiv \sigma^2_c + \gamma^2_c - \frac{(\gamma_{cv})^2}{ \sigma^2_v + \gamma_v^2}$ and $\Phi$ is the CDF for a standard normal distribution. 
Replacing $y-\mu_v$ with its sample analog  from valuations, $y-\overline{y_{vi}}$, yields a simple probit regression:
\begin{equation}\label{eq:probit_condcal}
         \mP_c(y|y_{vi}) =   \Phi\left( \alpha +  \beta_1 (y - \overline{y_{vi}})  + \beta_2 (y - y_{vi}) \right),  
\end{equation}
where $\alpha = \frac{\mu_v-\mu_c}{V}$, $\beta_1 = \frac{\left( 1 - \frac{\gamma_{cv}}{ \sigma^2_v + \gamma_v^2 }  \right)}{V}$ $\beta_2=  \frac{\frac{\gamma_{cv}}{ \sigma^2_v + \gamma_v^2 }}{V}$,  and $\beta_1 + \beta_2 = \frac{1}{V}$. Hence, if one has ``groupings'' of experiments (i.e.,  multiple values of $y$ for a fixed $X$ such that there is variation in $y - \overline{y_{vi}}$), one can obtain estimates for $\alpha$, $\beta_1$, and $\beta_2$.\footnote{\label{fn:struct_ID}The key challenge is obtaining enough variation in $y-\overline{y_{vi}}$ to separately identify $\alpha$ and $\beta_1$ since $y-\overline{y_{vi}}$ will vary only across experiments through variation in $y$.}

With these estimates in hand, one can test the null of stable mean preferences. Specifically, an estimate for $\mu_v - \mu_c$ can be derived from combining regression coefficients,
$$
\widehat{\mu_v - \mu_c} = \frac{\hat{\alpha}}{\hat{\beta_1} + \hat{\beta_2}},
$$
with standard error derived from the delta method. We can then conduct a formal test of the null of $\mu_v - \mu_c = 0$.

The probit estimates can also be used to quantify the behavioral impact of deviations from stable preferences. The quantity $\mP_c(\mu_v|y_{vi}=\mu_v)$ is the choice probability in the idealized experiment where we are able to select $y=\mu_v$ and in addition focus on participants who state a valuation equal to that $\mu_v$. Under joint normality and stable preferences, this quantity would be $\frac{1}{2}$. Within our probit structure, $\Phi(\hat{\alpha})$ represents an estimate of $\mP_c(\mu_v|y_{vi}=\mu_v)$.

\subsubsection{MLE Formulation}\label{subsub:MLE_theory}

The probit estimator of equation (\ref{eq:probit_condcal}) relies on replacing $y-\mu_v$ with its sample analog $y-\overline{y_{vi}}$. This introduces the possible challenge of measurement error biasing corresponding estimates. Attenuation bias could naturally lead to measures of $\alpha$ closer to zero, working against finding evidence of preference instability. A natural alternative is to estimate mean preferences alongside the other variables of interest by pursuing a maximum likelihood approach.

For such an approach, one again needs within-subjects responses for both  choices and valuations. For an MLE approach, it is also natural to account for the interval nature of most valuations data. In particular, imagine a person provides a valuation in the region $(y_L, y_H)$ (e.g., the interval that corresponds to the switching point in  a MPL eliciting $y_{vi}$), and that when offered a choice between $X$ and  $y$, they choose $y$. The likelihood of such an observation is the probability of $(y_{ci},y_{vi})$ falling within the rectangle $(-\infty, y) \times (y_L, y_H)$. Under the normal-normal-normal specification here, it is straightforward to calculate this probability given the joint normal distribution in equation (\ref{eqn:struct_distn}). Maximizing the log likelihood of observations with respect to the parameters $\mu_c$, $\mu_v$, $(\sigma^2_c + \gamma^2_c)$, $(\sigma^2_v + \gamma^2_v)$, and $\gamma_{cv}$ yields estimates of the corresponding parameters using conventional techniques. See Appendix \ref{appsec:MLE_details} for details.

Because the MLE approach yields estimates for both $\hat \mu_c$ and $\hat \mu_v$, we can directly test the null of $\mu_v - \mu_c = 0$. In addition, under the MLE approach, we obtain an estimate of $\mP_c(\mu_v|y_{vi}=\mu_v)$, which again provides a sense of the behavioral impact of deviations from stable preferences, using $\Phi\left( \frac{\widehat{\mu_v - \mu_c}}{\hat V} \right)$.\footnote{$\hat V$ is the estimate for $V$, the conditional standard deviation for $y_{ci}|_{y_{vi}}$; see footnote \ref{FN:probit_cond_distn}.}

\section{Empirical Application}\label{sec:empirical}

Our theoretical analysis in Section \ref{sec:theory} provides multiple ways to test the null of stable preferences; we summarize these tests in Table \ref{tab:tests_summary}. Note that each test is in fact a joint test of stable preferences combined with some combination of ancillary assumptions. In this section, we conduct an empirical implementation of these tests.

\begin{table}[t!]
\centering
\caption{Summary of Tests of Stable Preferences}\label{tab:tests_summary}
\begin{footnotesize} 
\setlength{\tabcolsep}{2pt} 
\begin{tabularx}{\linewidth}{>{\raggedright\arraybackslash\hangindent=1em}p{2.5cm} *{2}{>{\raggedright\arraybackslash\hangindent=1em}X}}
\toprule
Result & Assumptions 
& Prediction to Test \\
\midrule
\midrule
\addlinespace[0.5em]
\multicolumn{3}{c}{\textbf{Panel A: Simple Qualitative Tests}}\\
\cmidrule(lr){1-3}
\addlinespace[0.3em]
\cref{rem:identical}  & Stable preferences plus identical noise & $\mP_c(y) = \mP_v(y)$ \\
\addlinespace[0.5em]
\cref{prop_big_hexagram} & Stable preferences plus symmetric noise & $(\mP_c(y),\mP_v(y)) \in \con(S_y \cup S_X)$ \\
\addlinespace[0.5em]
\cref{prop_weak_pop} & Stable preferences plus symmetric noise that is independent of preferences & $(\mP_c(y),\mP_v(y)) \in \con(C_y \cup C_X) \cup \con(V_y \cup V_X)$ \\
\addlinespace[0.5em]
\cref{prop:symm_prefs}(1) & Stable preferences plus symmetric preferences and symmetric noise that is independent of preferences & $(\mP_c(y),\mP_v(y)) \in S_y \cup S_X$ \\
\midrule
\addlinespace[0.5em]
\multicolumn{3}{c}{\textbf{Panel B: Qualitative Tests Using Valuations to Estimate $\mu$}}\\
\cmidrule(lr){1-3}
\addlinespace[0.3em]
\cref{prop:symm_prefs}(2) & Stable preferences plus symmetric preferences and symmetric noise that is independent of preferences & If $y>\mu$, $\mP_c(y)>\frac{1}{2}$;
if $y<\mu$, $\mP_c(y)<\frac{1}{2}$ \\
\addlinespace[0.5em]
\cref{prop:cond_cali} & Stable preferences plus symmetric unimodal preferences and symmetric independent noise with positive densities that is independent of preferences & If $y>\mu$, $\mP_c(y|y_{vi}=y)>\frac{1}{2}$; if $y<\mu$, $\mP_c(y|y_{vi}=y)<\frac{1}{2}$ \\
\midrule
\addlinespace[0.5em]
\multicolumn{3}{c}{\textbf{Panel C: Structural Tests Using Linked Choices and Valuations}}\\
\cmidrule(lr){1-3}
\addlinespace[0.3em]
& $\begin{pmatrix}
 y^*_{ci} \\
 y^*_{vi} \\
\end{pmatrix}
\sim N \left (
 \begin{pmatrix}
 \mu_c  \\
\mu_v \\
\end{pmatrix} ,
 \begin{pmatrix}
\gamma_c^2   & \gamma_{cv} \\
\gamma_{cv} & \gamma_v^2 \\
\end{pmatrix} 
\right)$, $\varepsilon_c \sim N(0,\sigma_c^2)$, and $\varepsilon_v \sim N(0,\sigma_v^2)$, all three independent
& Structural (Probit or MLE) estimates for evaluation of $\mu_c = \mu_v$ \\
\bottomrule
\end{tabularx}
\end{footnotesize}
\end{table}

\subsection{Dataset Construction}\label{subsec:data-construct}

We begin by constructing two datasets. The first is constructed using data from the prior literature on preference reversals and the prior literature on choices-versus-MPLs. The second is constructed using data from two recent studies by  \citet{MNOSS-2024-distinguishing, MNOSS-2026-connecting} that are well-suited for the present paper. 

\bigskip

\noindent \textbf{Data from Prior Literature:} To construct a dataset based on the preference-reversal literature, we draw from a recent meta-analysis by \citet{Lu-2026-Meta-Anal}. In Table 2 from Lu's Online Supplementary Material, there are 462 experiments from 54 studies for which one could calculate the empirical quantities $(\widehat{\mP_{c}}(y), \widehat{\mP_{v}}(y))$. A research assistant attempted to collect the 54 original studies to verify (and, in very few instances, adjust) the numbers. The research assistant was able to confirm the numbers for 314 experiments from 42 studies, and that is the data that we use here. See Appendix \ref{appsubsec:PR_data} for details.\footnote{A number of the original studies, including the seminal work of \citet{grether1979economic}, report only combined numbers across multiple conditions---i.e., across multiple ``experiments''. Rather than exclude these, we chose to consider the combined numbers to be one experiment. See Appendix \ref{appsubsec:PR_data} for more details.}

To construct data based on the choices-versus-MPLs literature, we pull the needed data ourselves from the three key studies in this literature \citep{brown2018separated,freeman2019eliciting, freeman2019choice}. Specifically, we identify nine experiments in these studies; see Appendix \ref{appsubsec:CV_data} for details.

On net, then, our prior-literature dataset consists of 323 experiments. For each experiment, we identify the safer and riskier lotteries, and we harmonize the data such that $\widehat{\mP_{c}}(y)$ corresponds to the proportion of subjects choosing the safer alternative and $\widehat{\mP_{v}}(y)$ corresponds to the proportion of subjects stating a higher valuation for the safer alternative (in the case of a preference-reversal experiment) or having a valuation that would indicate a preference for the safer alternative (in the case of a choices-versus-MPLS experiment). Hence, for each experiment, the data include $\widehat{\mP_{c}}(y)$, $\widehat{\mP_{v}}(y)$, and the number of participants for each. Across all 323 experiments, valuations are provided by an average of 101 subjects and choices are provided by an average of 96 subjects, yielding 32,695 total valuations and 30,949 total choices.

\bigskip

\noindent\textbf{Data from \citet{MNOSS-2024-distinguishing, MNOSS-2026-connecting}:} These studies collect choices and valuations from the same subjects. For each study, the goal was not to compare choices versus valuations; rather, their focus was comparing pairs of valuations to study the phenomena of the common ratio effect, the common consequence effect, and preferences for probabilistic mixtures. They collected data on corresponding pairs of choices to illustrate an inference problem that arises when studying paired choice tasks (in \cite{MNOSS-2024-distinguishing}) and to show that paired choices and paired valuations can yield similar conclusions in some instances (in \cite{MNOSS-2026-connecting}). However, their data provide a convenience dataset that we can use for the purposes of this paper.

In \cite{MNOSS-2024-distinguishing}, there are 160 experiments---that is, 160 cases where a set of subjects make a binary choice and also provide a valuation linked to that binary choice. In \cite{MNOSS-2026-connecting}, there are another 264 experiments. For each of these 424 experiments, we identify the safer and riskier lotteries, and we harmonize the data such that $\widehat{\mP_{c}}(y)$ corresponds to the proportion of subjects choosing the safer alternative and $\widehat{\mP_{v}}(y)$ corresponds to the proportion of subjects having a valuation that would indicate a preference for the safer alternative. Hence, for each experiment, the data include $\widehat{\mP_{c}}(y)$, $\widehat{\mP_{v}}(y)$, and the number of participants. Across all 424 experiments, there are 68,448 observations for both choices and valuations, and thus an average of 161 subjects per experiment (again, each experiment is within-subject).

These two studies have two additional advantages. First, the data contain each individual valuation, and thus we can calculate the average valuation  (i.e., $\overline{y_{vi}}$) as a proxy for mean preferences for different samples. Second, the experiments come in precisely the ``groupings'' needed to conduct the structural tests from Section \ref{sec:nnn_analysis}. Specifically, a grouping consists of a set of experiments that all use the same valuation task for a specific lottery $X$, but which differ in $y$.\footnote{This variation in $y$ for a fixed valuation task yields variation in $y-\overline{y_{vi}}$, which is key to identification in our structural approaches (see footnote \ref{fn:struct_ID}).} In \cite{MNOSS-2024-distinguishing}, there are 40 different groupings (40 different $X$'s and corresponding valuation tasks), where each grouping contains four different experiments (4 different values for $y$). In \cite{MNOSS-2026-connecting}, there are 44 different groupings, where each grouping contains six different experiments. On net, there are 84 groupings, with an average of 815 subjects in each grouping (i.e., completing each of the 84 valuation tasks across the two studies).

There are some differences across the 424 experiments and across the 84 groupings---see Appendix \ref{appsubsec:MNOSS_data} for more details. While most of those details are unimportant for our analysis in this paper, we do highlight one detail that turns out to be important. Across the two studies, subjects are asked to make three kinds of comparisons: (i) a comparison between a certain amount and a binary lottery (which they label an $AB$ decision), (ii) a comparison between a certain amount and a trinary lottery (an $AB'$ decision), and (iii) a comparison between two binary lotteries (a $CD$ decision).\footnote{For certain-vs.-binary comparisons, there are 64 experiments (on average, 403 subjects per experiment) across 14 groupings (on average, 1844 subjects per grouping). For certain-vs.-trinary comparisons, there are 120 experiments (140 subjects per experiment) across 20 groupings (840 subjects per grouping). And for binary-vs.-binary comparisons, there are 240 experiments (107 subjects per experiment) across 50 groupings (516 subjects per grouping).}

\subsection{The Standard Test from the Prior Literature}

Using our collected datasets, we first conduct the standard test used in the prior literature on preference reversals and the prior literature on choices-vs-MPLs---that is, we test the null of $\mP_c(y) = \mP_v(y)$.

In Figure \ref{fig:traditional}, we plot for each experiment the observed values of $\widehat{\mP_c}(y)$ and $\widehat{\mP_v}(y)$ with reference to the 45-degree line of equality. In panel A, we present the data for the prior literature; in panel B, we present the data from \citet{MNOSS-2024-distinguishing,MNOSS-2026-connecting}. Within the prior literature, 
232 of 323 experiments (72\%) exhibit $\widehat{\mP_c}(y) > \widehat{\mP_v}(y)$, and within the \citet{MNOSS-2024-distinguishing,MNOSS-2026-connecting} data 341 of 424 experiments (80\%) exhibit $\widehat{\mP_c}(y) > \widehat{\mP_v}(y)$. Hence, both datasets are consistent with the received wisdom that people are more likely to choose a safer option than valuations would imply. 

\begin{figure}[t!]
\begin{center}
    \includegraphics[width=\textwidth]{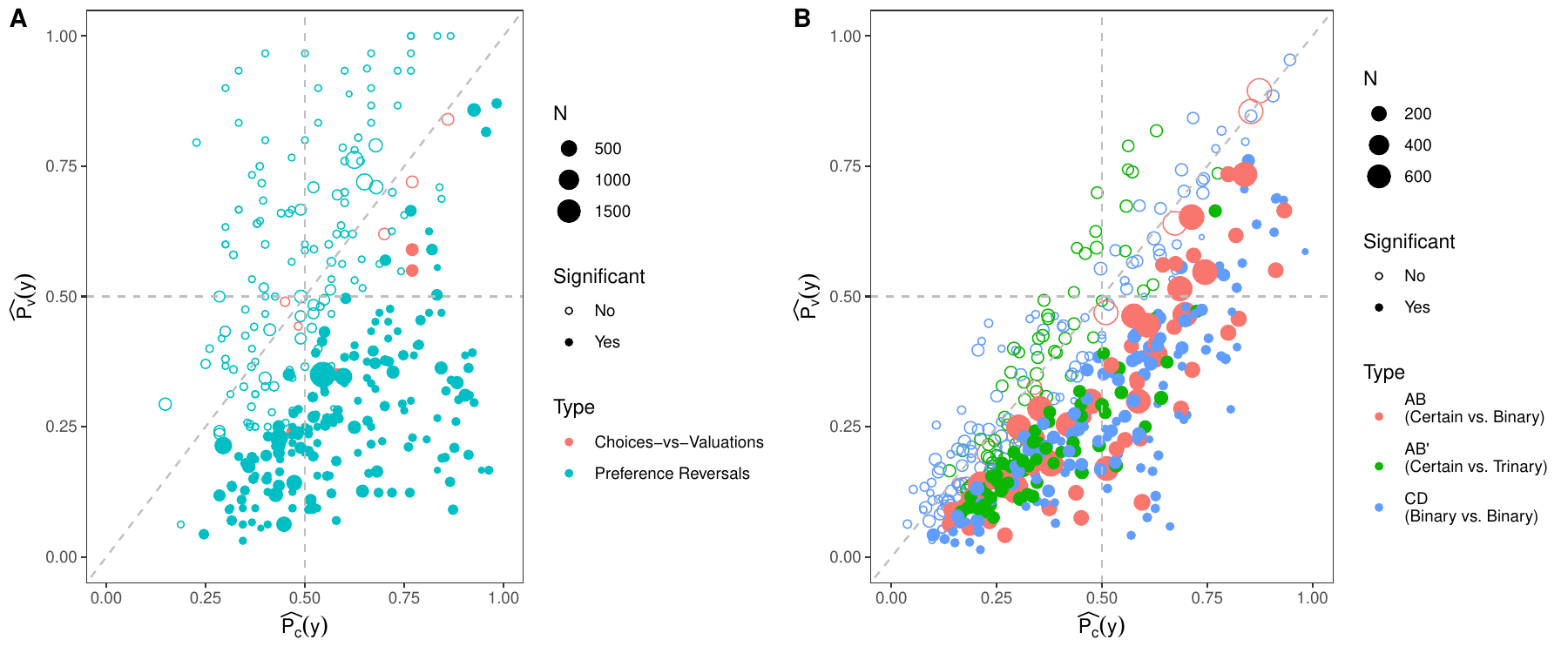}
    \caption{Applying the Standard Test of Stable Preferences}
    \label{fig:traditional}
    \end{center}
     \footnotesize{\emph{Notes:}
     Figure presents observed aggregate choice and valuation proportions, $\widehat{\mP_c}(y)$ and $\widehat{\mP_v}(y)$, for 323 experiments from prior literature (panel A) and 424 experiments from  \citet{MNOSS-2024-distinguishing,MNOSS-2026-connecting} (panel B). Data harmonized such that greater probability corresponds to greater likelihood of preferring safer alternative. Each circle reflects one experiment, size corresponds to experimental sample size.  Shading reflects application of standard test of stable preferences: solid circles correspond to experiments for which we reject the null of $\mP_c(y) = \mP_v(y)$ at the 5\% level in a one-sided test.}
\end{figure} 

Of course, merely observing $\widehat{\mP_c}(y) \neq \widehat{\mP_v}(y)$ is not sufficient to statistically reject the null of $\mP_c(y)=\mP_v(y)$ because even under the null we could observe different proportions due to sampling variation. To assess statistical significance, we use a standard test of proportions under the assumption of independence. Specifically, for each experiment, we calculate the $z$-score 
$$
z = \frac{ \widehat{\mP_c}(y) - \widehat{\mP_v}(y) }{ \sqrt{ \frac{\widehat{\mP_c}(y)(1-\widehat{\mP_c}(y))}{N_c} + \frac{\widehat{\mP_v}(y)(1-\widehat{\mP_v}(y))}{N_v} } },
$$
where $N_c$ and $N_v$ refer to the number of subjects completing choices and valuations, respectively. If we think a two-sided test is appropriate, we reject the null of $\mP_c(y) = \mP_v(y)$ if $|z| > 1.96$ (i.e., we impose a two-sided threshold of 0.05). Under a two-sided test, 67\% of prior experiments and 55\% of \citet{MNOSS-2024-distinguishing,MNOSS-2026-connecting} experiments reject the null hypothesis of $\mP_c(y) = \mP_v(y)$. Given the received wisdom from the prior literature that people are more risk averse for choices than for valuations, we might instead reject the null of $\mP_c(y) = \mP_v(y)$ in favor of $\mP_c(y) > \mP_v(y)$ if $z > 1.645$ (i.e., we impose a one-sided threshold of 0.05). Under this one-sided test, 56\% of prior experiments and 56\% of \citet{MNOSS-2024-distinguishing,MNOSS-2026-connecting} experiments reject the null hypothesis of  $\mP_c(y) = \mP_v(y)$. Figure \ref{fig:traditional} illustrates significance under the latter, one-sided test as shaded points.\footnote{Within the paired data of  \citet{MNOSS-2024-distinguishing,MNOSS-2026-connecting}, we can also evaluate the test of proportions accounting for the correlation between choices and valuations, adjusting the $z$-score. With this adjustment,  61\% of \citet{MNOSS-2024-distinguishing,MNOSS-2026-connecting} experiments exhibit $z> 1.645$, rejecting the null hypothesis of $\mP_c(y) = \mP_v(y)$ in a one-sided test.}  

Hence, if one uses the standard approach of testing the null of $\mP_c(y) = \mP_v(y)$, one indeed finds frequent rejections of the null in both datasets, and support for the received wisdom that people are more risk averse for choices than for valuations. As Table \ref{tab:tests_summary} highlights, however, rejecting the null of $\mP_c(y) = \mP_v(y)$ could reflect a failure of stable preferences, but it also could reflect a failure of identical noise. Because identical noise is a strong assumption to make without supporting evidence, we now turn to our new tests based on weaker assumptions regarding noise.

\subsection{Statistical Tests Based on Weaker Assumptions}\label{subsec:qual_test}

Propositions \ref{prop_big_hexagram}, \ref{prop_weak_pop}, and \ref{prop:symm_prefs}(1) provide the basis for tests of stable preferences based on weaker assumptions, as summarized in Table \ref{tab:tests_summary}. Each test relies on assessing whether observed values for $\widehat{\mP_c}(y)$ and $\widehat{\mP_v}(y)$ could come from some null region of attainable $(\mP_c(y), \mP_v(y))$ combinations derived in each proposition. In Figure \ref{fig:prop1-3analysis}, we again plot observed values of $\widehat{\mP_c}(y)$ and $\widehat{\mP_v}(y)$, but now with reference to the three attainable regions from Propositions \ref{prop_big_hexagram}-\ref{prop:symm_prefs}.

Under our strongest set of assumptions---when noise is symmetric and independent of preferences and preferences are symmetric as well---Proposition \ref{prop:symm_prefs} establishes that the attainable set is $S_X\cup S_y$. Within the prior literature, 45\% of observations lie outside of this region, and within the \citet{MNOSS-2024-distinguishing,MNOSS-2026-connecting}  datasets, 23\% of observations lie outside of this region.

\begin{figure}[t!]
\begin{center}
    \includegraphics[width=\textwidth]{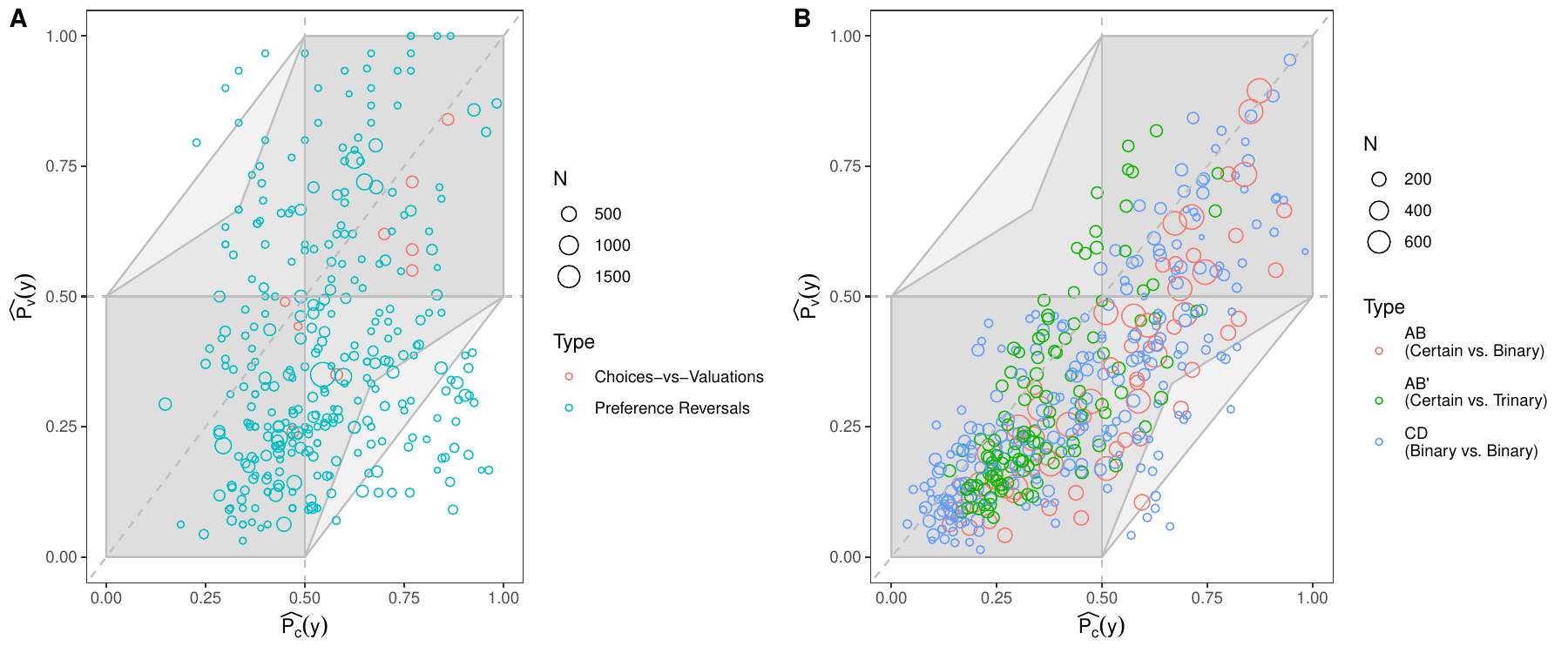}
    \caption{Applying Tests Based on Propositions \ref{prop_big_hexagram}, \ref{prop_weak_pop}, and \ref{prop:symm_prefs}(1)}
    \label{fig:prop1-3analysis}
    \end{center}
     \footnotesize{\emph{Notes:} 
        Figure presents observed aggregate choice and valuation proportions, $\widehat{\mP_c}(y)$ and $\widehat{\mP_v}(y)$, for 323 experiments from prior literature (panel A) and 424 experiments from  \citet{MNOSS-2024-distinguishing,MNOSS-2026-connecting} (panel B). Data harmonized such that greater probability corresponds to greater likelihood of preferring safer alternative. Each circle reflects one experiment, size corresponds to experimental sample size. Gray regions correspond to attainable sets (i.e., null hypotheses to test) under stable preferences and assumptions of Propositions \ref{prop_big_hexagram}, \ref{prop_weak_pop}, and \ref{prop:symm_prefs}(1).}
\end{figure}

To formally assess whether we can reject the null that an observation came from $S_X\cup S_y$, we deploy the test recommended by \citet{miller2025testing} for sign congruence between two estimators.\footnote{This test was originally proposed in the statistics literature by \cite{russek1993qualitative}.} Proposition \ref{prop:symm_prefs}(1) requires $\mP_c(y) - 0.5$ and $\mP_v(y) - 0.5$ to be sign-congruent such that $(\mP_c(y) - 0.5)(\mP_v(y) - 0.5) \ge 0$. The test is based on two relevant $z$-scores:
$$
z_c =  \frac{ \widehat{\mP_c}(y) -  0.5 }{ \sqrt{ \frac{\widehat{\mP_c}(y)(1-\widehat{\mP_c}(y))}{N_c}}}
\qquad \text{ and } \qquad
z_v =  \frac{ \widehat{\mP_v}(y) -  0.5 }{ \sqrt{ \frac{\widehat{\mP_v}(y)(1-\widehat{\mP_v}(y))}{N_v}}}.
$$ 
The test rejects the null hypothesis of sign congruence if  both $(\widehat{\mP_c}(y) - 0.5)(\widehat{\mP_v}(y) - 0.5) <0$ and $\min( |z_c|, |z_v|) > \overline{z}_\alpha$. The threshold $\overline{z}_\alpha$ is set to a critical value $\Phi^{-1}(1-\alpha)$, and we select  $\alpha = 0.05$ such that  $\overline{z}_\alpha = 1.645$. For details see Appendix \ref{app:signcongruence}.\footnote{If $\mP_c(y)$ and $\mP_v(y)$ are known to be highly negatively correlated (i.e., a correlation smaller than around $-0.8$) an adjustment to  $\overline{z}_\alpha$ is required, and when the correlation is $-1$, the threshold value for $\alpha =0.05$ is actually $\overline{z}_\alpha = 1.96$. \citet{russek1993qualitative} Table 2 provides a tabulation of the relevant threshold for $\alpha = 0.05$ depending on the correlation. For testing Proposition \ref{prop:symm_prefs}(1), the issue of negative correlation is largely irrelevant as $\mP_c(y)$ and $\mP_v(y)$ should generally be positively correlated; indeed, within the \citet{MNOSS-2024-distinguishing,MNOSS-2026-connecting} datasets 414 of 424 experiments have a positive correlation and the smallest correlation is $-0.19$.} Deploying this test, we reject the null that an observation came from $S_X\cup S_y$ for only 20\% of experiments from the prior literature and only 12\% of experiments from the \citet{MNOSS-2024-distinguishing,MNOSS-2026-connecting} datasets. In other words, even under our strongest set of assumptions, roughly 80-90\% of experiments across the two datasets are consistent with stable preferences.

Moving to weaker sets of assumptions, even more experiments become consistent with stable preferences. If we only assume that noise is symmetric and independent of preferences, Proposition \ref{prop_weak_pop} establishes that the attainable set is $\con(C_y \cup C_X) \cup \con(V_y \cup V_X)$. Within the prior literature, only 16\% of observations lie outside of this region, and within the \citet{MNOSS-2024-distinguishing,MNOSS-2026-connecting}  datasets, only 4\% of observations lie outside of this region. Deploying a modified version of the sign-congruence test, we reject the null that an observation came from $\con(C_y \cup C_X) \cup \con(V_y \cup V_X)$ for only 9\% of experiments from the prior literature and only 0.5\% of experiments from the \citet{MNOSS-2024-distinguishing,MNOSS-2026-connecting} datasets.\footnote{As suggested by \citet{miller2025testing}, we implement a base change corresponding to the linear restrictions of the boundaries for $\con(C_y \cup C_X) \cup \con(V_y \cup V_X)$, after which we can apply the sign-congruence test. Notably, the base change induces negative correlation that requires adjustment of the threshold value $\overline{z}_\alpha$, which we implement in our analysis. For details see Appendix \ref{app:signcongruence}.}

Finally, under our weakest set of assumptions, where we assume only that the noise is symmetric, Proposition \ref{prop_big_hexagram} establishes that the attainable set is $\con(S_y\cup S_X)$. Only 11\% of experiments from the prior literature and only 1\% of experiments from the \citet{MNOSS-2024-distinguishing,MNOSS-2026-connecting} datasets lie outside of this region; and we can reject the null that an observation came from $\con(S_y\cup S_X)$ for only 3\% of experiments from the prior literature and for no experiments from the \citet{MNOSS-2024-distinguishing,MNOSS-2026-connecting} datasets.\footnote{To develop a statistical test, 
we use the linear restrictions imposed by the lower and upper boundaries of $\con(S_y\cup S_X)$,  $L:\mP_v = \mP_c - 0.5$ and $U:\mP_v = \mP_c + 0.5$. The associated $z$-scores for testing these linear restrictions are
$$
z_L = \frac{ \widehat{\mP_v}(y) - \widehat{\mP_c}(y)  +0.5 }{ \sqrt{ \frac{\widehat{\mP_v}(y)(1-\widehat{\mP_v}(y))}{N_v} + \frac{\widehat{\mP_c}(y)(1-\widehat{\mP_c}(y))}{N_c} }}
\qquad \text{ and } \qquad 
z_U = \frac{ \widehat{\mP_v}(y) - \widehat{\mP_c}(y)  - 0.5 }{ \sqrt{ \frac{\widehat{\mP_v}(y)(1-\widehat{\mP_v}(y))}{N_v} + \frac{\widehat{\mP_c}(y)(1-\widehat{\mP_c}(y))}{N_c} } }.$$
We reject the null that an observation came from $\con(S_y\cup S_X)$ if either (i) $\widehat{\mP_v}(y) < \widehat{\mP_c}(y)$ and $z_L < - 1.645$ or (ii) $\widehat{\mP_v}(y) > \widehat{\mP_c}(y)$ and $z_U > 1.645$.}

\subsection{Using Information on Mean Preferences}\label{sec:prop3_part2}

Because the \citet{MNOSS-2024-distinguishing, MNOSS-2026-connecting} dataset contains within-subject data on both choices and valuations, we can use it to assess our conditional predictions from Propositions \ref{prop:symm_prefs}(2) and \ref{prop:cond_cali}. Formal tests of these predictions are challenging as we have only an estimate, rather than a perfect measure, of the conditioning events $y>\mu$ and $y < \mu$, and thus we pursue a more qualitative analysis. At the same time, these predictions more naturally lend themselves to aggregating observations across experiments.

When noise is symmetric and independent of preferences and preferences are symmetric as well, an implication of Proposition \ref{prop:symm_prefs}(2) is that $\mP_c(y) > \frac{1}{2}$ when $y > \mu$ and $\mP_c(y) < \frac{1}{2}$ when $y < \mu$.\footnote{Additionally, if the distributions of heterogeneity (i.e., of $\eta$) and noise (i.e., of $\varepsilon _c$ and $\varepsilon_v$) are comparable across experiments, then the model would predict a positive relationship between $\widehat{\mP_c}(y)$ and $y-\overline{y_{vi}}$.} To assess this prediction, we need an estimate for $\mu$; we use the average valuation $\overline{y_{vi}}$ calculated for each of the 84 groupings.\footnote{Again, a grouping is a set of experiments that all use the same valuation task for a specific lottery $X$, but which differ in $y$. On average, there are 815 subjects per grouping.} We then analyze how $\widehat{\mP_c}(y)$ varies with $y-\overline{y_{vi}}$, where a more positive (negative) $y-\overline{y_{vi}}$ is a stronger signal that $y>\mu$ ($y<\mu)$.

Panel A of Figure \ref{fig:prop32and5analysis} plots $\widehat{\mP_c}(y)$ by $y-\overline{y_{vi}}$ in a bin-scatter. Consistent with the prediction, there is a tight connection between distance to average valuation and choice proportions. The correlation between binned values is 0.896, and 95\% of binned combinations (38 of 40) lie within the prediction region for Proposition \ref{prop:symm_prefs}(2). In addition, overall average behavior is also consistent with Proposition \ref{prop:symm_prefs}(2): For the 47,748 observations with $y-\overline{y_{vi}}<0$, the average choice proportion is 0.364, whereas for the 20,700 observations with $y-\overline{y_{vi}}>0$, the average choice proportion is 0.645.

\begin{figure}[t!]
\begin{center}
    \includegraphics[width=\textwidth]{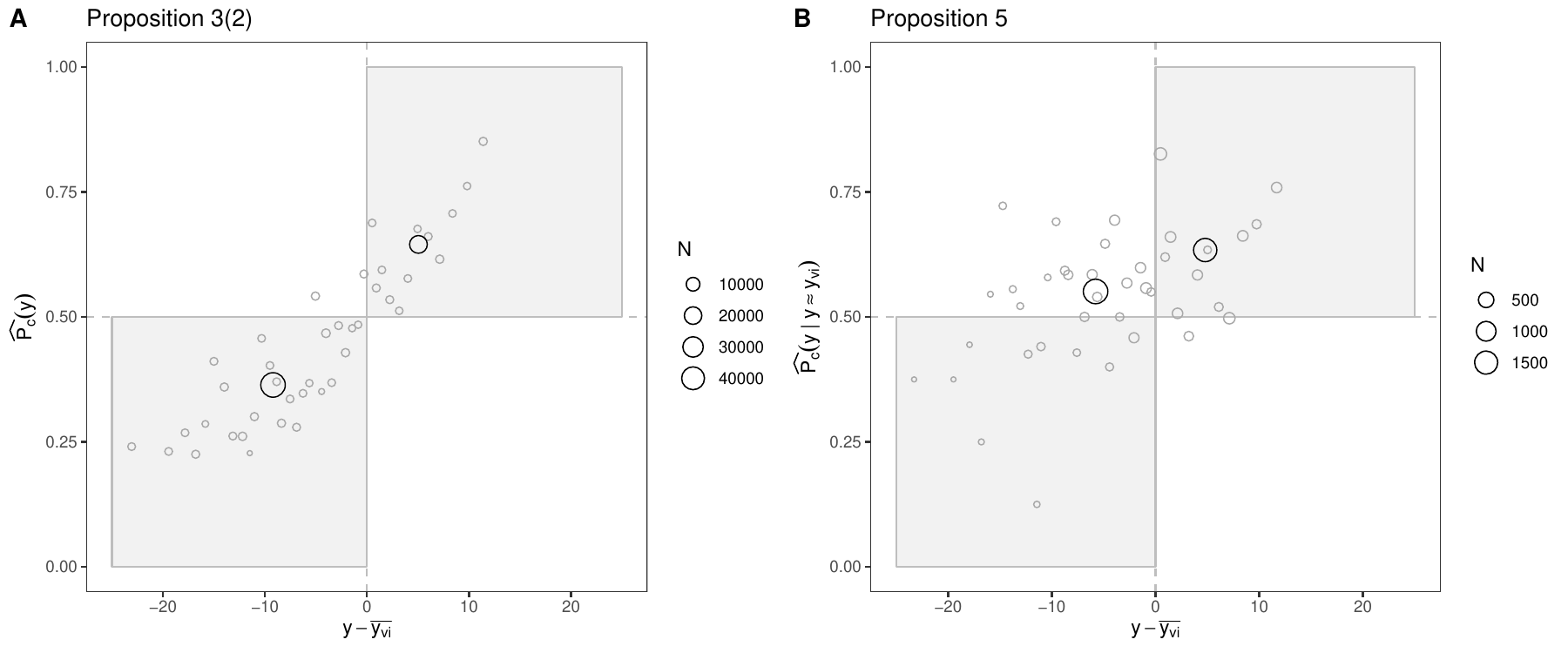}
    \caption{Evaluating Conditional Predictions from Propositions \ref{prop:symm_prefs}(2) and \ref{prop:cond_cali}}
    \label{fig:prop32and5analysis}
    \end{center}
\footnotesize{\emph{Notes:}
Figure presents observed choice proportions $\widehat{\mP_c}(y)$ (panel A) and conditional choice proportions $\widehat{\mP_c}(y| y \approx y_{vi})$ (panel B) against distance to mean valuation ($y-\overline{y_{vi}}$), where light gray regions correspond to predicted behavior. Panel A uses all 68,448 observations from the \cite{MNOSS-2024-distinguishing,MNOSS-2026-connecting} datasets. We first calculate $\overline{y_{vi}}$ for each of the 84 groupings and then calculate $y-\overline{y_{vi}}$ for each of the 68,448 observations. We bin x-axis into 40 approximately equally sized bins of $y-\overline{y_{vi}}$, and for each bin panel A plots the proportion $\widehat{\mP_c}(y)$ choosing safer option against average value of $y-\overline{y_{vi}}$, where circle size corresponds to number of observations in a bin. Panel B presents the same exercise except, after calculating $y-\overline{y_{vi}}$ for all observations, we restrict data to the 3280 observations with $|y-y_{vi}|\le 1$; thus, proportion choosing safer option in each bin is interpreted as $\widehat{\mP_c}(y| y \approx y_{vi})$. In both panels, black circles correspond to overall averages when $y-\overline{y_{vi}} < 0 $ versus when $y-\overline{y_{vi}}>0$.}
\end{figure}

When we make additional assumptions that preferences are unimodal and that the choice and valuation noise are independent and both have positive densities, an implication of Proposition \ref{prop:cond_cali} is that $\mP_c(y|y_{vi}=y) > \frac{1}{2}$ when $y > \mu$ and $\mP_c(y|y_{vi}=y) < \frac{1}{2}$ when $y < \mu$. To assess this prediction, we use $y-\overline{y_{vi}}$ as above, but now we restrict attention to individuals where  $ y_{vi} \approx y$, specifically $|y-y_{vi}| \le 1$ (i.e., whose valuation price list switch points are within 1 row of the choice experimental value of $y$). Of 68,448 individual observations, 3,280 satisfy this criterion. 

Panel B of Figure \ref{fig:prop32and5analysis} plots $\widehat{P_c}( y | y_{vi} \approx y)$ by $y-\overline{y_{vi}}$ in a bin-scatter. Here, the evidence looks inconsistent with stable preferences. While there is a positive correlation of 0.485 between binned values, only 55\% of binned combinations (22 of 40) lie within the prediction region for Proposition \ref{prop:cond_cali}. In terms of overall average behavior, for the the 1532 observations with $y-\overline{y_{vi}}>0$, the average choice proportion of 0.634 is consistent with Proposition \ref{prop:cond_cali}, however, for the 1748 observations with $y-\overline{y_{vi}}<0$, the average choice proportion 0.551 is inconsistent with Proposition \ref{prop:cond_cali}.\footnote{While panel B of Figure \ref{fig:prop32and5analysis} is inconsistent with the predictions of Proposition \ref{prop:cond_cali}, and this inconsistency could be due to a failure of stable preferences, it could also be due to a failure of ancillary assumptions. In fact, Appendix Figure \ref{fig:symmetry} assesses the symmetry assumptions by presenting a histogram of $y_{vi} - \overline{y_{vi}}$ across all 68,448 observations. While the mean value of $y_{vi} - \overline{y_{vi}}$ is 0 by construction, the median value is \$2.77 indicating a failure of symmetry.}

\subsection{Structured Empirical Analysis}\label{sec:MLEestimates}

When one has one has within-subject data on both choices and valuations (as in the \cite{MNOSS-2024-distinguishing,MNOSS-2026-connecting} dataset), and one is willing to make structural assumptions about the distributions of preference heterogeneity and noise, then one can develop a structural test for the null of stable mean preferences and, when stability is rejected, one can estimate the magnitude of deviations from stability. In this section, we implement the two methods described in Section \ref{sec:nnn_analysis} that are built on an assumption that heterogeneity and noise are joint normally distributed.

For each of the 84 groupings in the \citet{MNOSS-2024-distinguishing, MNOSS-2026-connecting} data, we use the probit specification in equation (\ref{eq:probit_condcal}) from Section \ref{subsub:probit_theory} to obtain estimates for $\alpha$, $\beta_1$, and $\beta_2$. The left-hand top panel of Table \ref{tab:probitsummary} summarizes the results by presenting the sample-weighted mean and median of each regression coefficient along with the sample-weighted 25th-75th percentiles. For each of the 84 groupings, we also conduct the MLE approach described in Section \ref{subsub:MLE_theory} (also see \cref{appsec:MLE_details}). This approach yields, for each grouping, estimates of the structural parameters in equation (\ref{eqn:struct_distn}): $\mu_v$, $\mu_c$, $\sqrt{\sigma^2_v+\gamma^2_v}$, $\sqrt{\sigma^2_c+\gamma^2_c}$, and $\gamma_{cv}$. The right-hand top panel of Table \ref{tab:probitsummary} summarizes the results in an analogous way.

\begin{table}[h!]
  \begin{center}
    \caption{Summary of Structural Estimation}
     \label{tab:probitsummary}
     \scalebox{0.9}{
    \begin{tabular}{lccclc}
    \hline\hline
\multicolumn{2}{c}{(1)} & & &  \multicolumn{2}{c}{(2)} \\
\multicolumn{2}{c}{Probit} & & &  \multicolumn{2}{c}{MLE} \\
\cline{1-2} \cline{5-6} \\
\vspace*{.1in}
Constant ($\hat{\alpha}$) & 0.13/0.14 & & & $\hat{\mu}_v$ & 32.04/35.4 \\
         & [-0.26,0.47] & & & &  [28.71,37.77] \\

\addlinespace

$y - \overline{y_{vi}}$ ($\hat{\beta}_1$) & 0.03/0.01 & & & $\hat{\mu}_c$ &  29.44/30.75\\
& [0.00,0.03] & & & & [20.56,41.4]\\
\addlinespace 

$y - y_{vi}$ ($\hat{\beta}_2$) & 0.03/0.03 & & & $\sqrt{ \widehat{\sigma_v^2 + \gamma_v^2} }$& 12.94/13.79 \\
& [0.03,0.04] & & & &   [9.36,15.27]\\
\addlinespace
 & & & & $\sqrt{ \widehat{\sigma_c^2 + \gamma_c^2} }$& 28.32/23.87 \\
&  & & & &  [17.26,37.25]\\
\addlinespace
&& & &$\sqrt{ \widehat{\gamma_{cv}}}$ & 10.99/9.88 \\
&&& & & [8.67,15.7] \\
\cline{1-2} \cline{5-6} \\ 
\addlinespace 
    \# Observations & 68448 & & &  \# Observations&  68448 \\
    \# Groupings & 84 & & &    \# Groupings  & 84 \\

\addlinespace
$\widehat{\mu_v - \mu_c} = \frac{\hat{\alpha}}{\hat{\beta}_1 + \hat{\beta}_2}$ & -0.11/1.97  & & & $\widehat{\mu_v - \mu_c}$  &  -2.2/-1.58\\
& [-4.86,6.44] & & & & [-9.15,5.52]\\
\addlinespace
$(\widehat{\mu_v - \mu_c})^{\text{norm}}$ & 0.47/0.60 & & & $(\widehat{\mu_v - \mu_c})^{\text{norm}}$ &  -0.21/-0.21 \\
&  [-1.76,2.96] & & & &  [-2.23,2.12]\\
    \addlinespace

$\Phi(\hat{\alpha})$  & 0.54/0.56 & & & $\Phi\left( 
\frac{\widehat{\mu_v - \mu_c}}{\hat{V}} \right) $  &0.51/0.48 \\
& [0.40,0.68] & & & & [0.38,0.62] \\ 
\addlinespace
& & & & $\widehat{\frac{\gamma_{cv}}{\sqrt{\widehat{\sigma_v^2  + \gamma_v^2}}\sqrt{\widehat{\sigma_c^2  + \gamma_c^2}} }}$ &  0.36/0.38 \\
&&& & & [0.31,0.42] \\

     \hline \hline 
    \end{tabular}
    }
    \end{center}
\footnotesize{\emph{Notes:} Upper panel presents summary of raw coefficient estimates for 84 separate probit regressions in column (1) and raw parameter estimates for 84 separate maximum likelihood estimations (MLE) in  column (2). For each coefficient or parameter, table presents sample-weighted mean/median along with sample-weighted 25th and 75th percentiles in brackets. Lower panel uses raw estimates to provide estimates of other objects. Estimated preference difference $\widehat{\mu_v - \mu_c}$ calculated as $\frac{\hat{\alpha}}{\hat{\beta}_1 + \hat{\beta}_2}$ for probit and directly from parameter estimates with sign adjustments as appropriate for MLE (see \cref{appsec:MLE_details}); for both, $\widehat{\mu_v - \mu_c}>0$ indicates greater risk aversion in choices. $(\widehat{\mu_v - \mu_c})^{\text{norm}}$ expresses mean preference difference in expected value units: if elicited value occurs with probability $q$, $(\widehat{\mu_v - \mu_c})^{\text{norm}} = q (\widehat{\mu_v - \mu_c})$. Estimated $\mP(\mu_v|y_{vi}=\mu_v)$ calculated as $\Phi(\hat{\alpha})$ in probit and $\Phi\left( 
\frac{\widehat{\mu_v - \mu_c}}{\hat{V}} \right)$ in MLE, where $\Phi$ is standard normal CDF and $\hat{V} = \sqrt{\widehat{\sigma_c^2 + \gamma_c^2} - \frac{\widehat{\gamma_{cv}}^2}{\widehat{\sigma_v^2 + \gamma_v^2}}}$. In MLE, $\frac{\widehat{\gamma_{cv}}}{\sqrt{\widehat{\sigma_v^2  + \gamma_v^2}}\sqrt{\widehat{\sigma_c^2  + \gamma_c^2}} }$ provides lower bound on correlation between choice and valuation preferences.
}
\end{table}

The null of stable preferences requires $\mu_v-\mu_c=0$, which we can test using either approach. For the probit estimation, the structure implies $\mu_v-\mu_c = \frac{\alpha}{\beta_1 + \beta_2}$ (see Section \ref{subsub:probit_theory}); hence, we can transform the estimated probit coefficients into an estimate for $\mu_v - \mu_c$. For the MLE, it is straightforward to transform the estimated $\mu_v$ and $\mu_c$ into an estimate for $\mu_v-\mu_c$. These estimates are summarized in the bottom panel of Table \ref{tab:probitsummary}. Both approaches are well-powered, and we reject the null of $\mu_v-\mu_c=0$ in 65 groupings (77\%) under the probit approach and in 69 groupings (82\%) under the MLE approach. Thus, with the additional structure of joint-normality, we frequently reject the null hypothesis of stability.

That said, when we study the nature of deviations from $\mu_v-\mu_c=0$, the message differs from the received wisdom that choices induce greater risk aversion than valuations. Overall, there appears to be no systematic direction. The mean estimate $\widehat{\mu_v-\mu_c}$ is $-\$0.11$ under probit and $-\$2.20$ under MLE (and normalizing the estimates to expected-value units yields means of $\$0.47$ and $-\$0.21$). The estimates indicate that choices induce greater risk aversion in 54\% of groupings under probit and 48\% under MLE, whereas the estimates indicate that valuations induce greater risk aversion in 46\% of groupings under probit and 52\% under MLE.

In terms of the behavioral impact of these differences, it is useful to quantify $\mP(\mu_v|y_{vi}=\mu_v)$, which is the choice probability in the idealized experiment where we are able to select $y=\mu_v$ and in addition focus on participants who state a valuation equal to that $\mu_v$. Under joint normality and stable preferences, this quantity would be $\frac{1}{2}$. To obtain an estimate for this quantity, we can use $\Phi(\hat{\alpha})$ under probit, and we can use $\Phi\left( \frac{ \widehat{\mu_v - \mu_c}}{\hat{V}} \right)$ under MLE. Estimates for these objects are also summarized in the bottom panel of Table \ref{tab:probitsummary}.

Figure \ref{fig:probitresults} provides a visual depiction of our estimates for both $\mu_v-\mu_c$ and $\mP(\mu_v|y_{vi}=\mu_v)$ for both the probit approach (in panel A) and the MLE approach (in panel B). For each panel, a point reflects one of the 84 groupings, and solid points reflect the groupings for which we reject the null of $\mu_v-\mu_c=0$. The two panels both highlight how we frequently reject the null, and that the magnitude of the deviations can be large in terms of either the magnitude of $\widehat{\mu_v - \mu_c}$ or the deviation of $\widehat{\mP(\mu_v|y_{vi}=\mu_v)}$ from $\frac12$. At the same time, neither panel suggests a systematic overall trend toward more risk aversion in choices (i.e., $\widehat{\mu_v-\mu_c} >0$ and $\widehat{\mP(\mu_v|y_{vi}=\mu_v}) > \frac{1}{2}$) or more risk aversion in valuations (i.e., $\widehat{\mu_v-\mu_c} < 0$ and $\widehat{\mP(\mu_v|y_{vi}=\mu_v)} < \frac{1}{2}$).

\begin{figure}[t!]
\begin{center}
    \includegraphics[width=\textwidth]{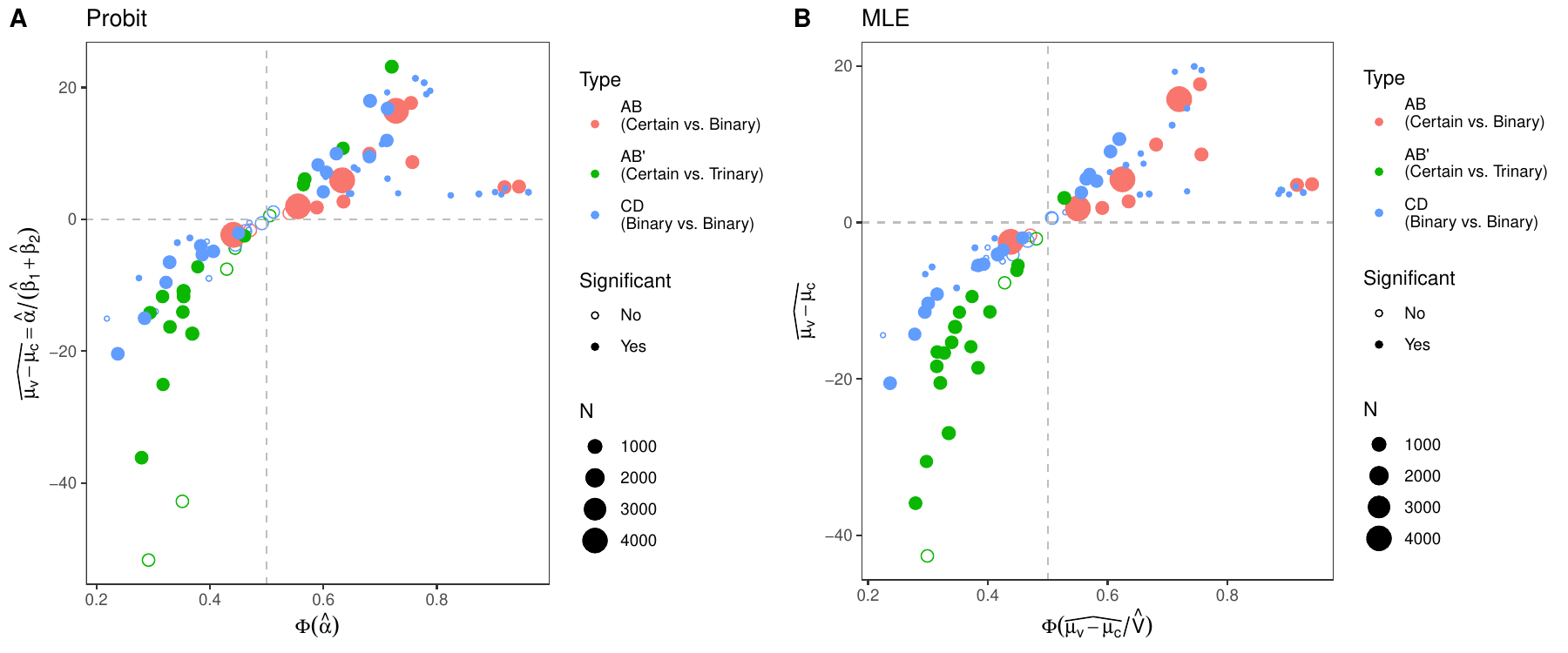}
    \caption{Visual Depiction of Probit and MLE Estimates}
    \label{fig:probitresults}
    \end{center}
     \footnotesize{\emph{Notes:}
     Figure presents estimates of $\widehat{\mu_v - \mu_c}$ (preference difference) against estimates of $\widehat{\mP(\mu_v|y_{vi}=\mu_v)}$ (choice probability in idealized experiment) for each of 84 separate probit estimates in panel A and 84 separate MLE estimates in panel B. In each panel, each circle corresponds to one estimation; circle size corresponds to number of observations in grouping used for that estimation; and filled circles represent groupings for which the null hypothesis of stable mean preferences ($H_0: \mu_v -\mu_c = 0$) is rejected at the 5\% level in a two-sided test. Estimates from the two approaches are strongly correlated: estimates for $\mu_v - \mu_c$ are correlated 0.92; estimates for $\Phi(\cdot)$ are correlated 0.97.}
\end{figure}

Beyond testing the null of $\mu_v-\mu_c=0$, our structural approaches yields further insights. First, because they provide a measure of the magnitudes of deviations, they permit us to study differences across different types of problems. For instance, as described in Section \ref{subsec:data-construct}, in the \citet{MNOSS-2024-distinguishing, MNOSS-2026-connecting} data, subjects make three types of comparisons: (i) a comparison between a certain amount and a binary lottery (which they label an $AB$ decision), (ii) a comparison between a certain amount and a trinary lottery (an $AB'$ decision), and (iii) a comparison between two binary lotteries (a $CD$ decision). Figure \ref{fig:probitresults} uses different colors to denote the different types of comparisons (see also Appendix Table \ref{tab:app_mlsummary} for a summary analogous to Table \ref{tab:probitsummary} for each type of comparison), and a clear pattern emerges: For certain vs. binary comparisons, the estimates indicate that choices tend to induce more risk aversion (i.e., a trend toward $\widehat{\mu_v-\mu_c} >0$ and $\widehat{\mP(\mu_v|y_{vi}=\mu_v)} > \frac{1}{2}$); for certain vs. trinary comparisons, the estimates indicate that valuations tend to induce more risk aversion (i.e., a trend toward $\widehat{\mu_v-\mu_c} <0$ and $\widehat{\mP(\mu_v|y_{vi}=\mu_v)} < \frac{1}{2}$); and for binary vs. binary comparisons, the estimates indicate no systematic deviation.\footnote{\citet{freeman2019eliciting} contrast certain vs. binary and binary vs. binary comparisons. They show significant deviations between $\widehat{\mP_c}(y)$ and $\widehat{\mP_v}(y)$ when considering  $(\$3, 1)$ vs. $(\$4,.8)$  using the traditional test of equality; and they show insignificant deviations when considering $(\$3, .5)$ vs. $(\$4,.4)$. The statistical differences between one safe alternative and one risky alternative as opposed to two risky alternatives resonates with our own findings.}$^\text{,}$\footnote{The \citet{MNOSS-2024-distinguishing, MNOSS-2026-connecting} data also includes variation in the probabilities used in lotteries, and in Appendix \ref{app:parameters} we document some systematic patterns in how $\widehat{\mu_v - \mu_c}$ varies with those probabilities.}

Our MLE approach also provides evidence on the nature of heterogeneity and noise. For instance, our theoretical analysis highlights the importance of whether decision noise has a larger impact on choices versus valuations. Table \ref{tab:probitsummary} provides a summary of estimates for $\sqrt{\gamma^2_c + \sigma^2_c}$ and  $\sqrt{\gamma^2_v + \sigma^2_v}$. In standard deviation units, choices are roughly twice as variable as valuations. Because under the null of stable preferences $\gamma^2_c = \gamma^2_v$, these estimates suggest substantially greater decision noise for choices than for valuations.

Finally, given the structure that we have assumed, another feature of stable preferences is that $y^*_{ci}$ should equal $y^*_{vi}$ at the individual level, and thus $corr(y^*_{ci},y^*_{vi})=1$. Using the MLE estimates, Table \ref{tab:probitsummary} provides a summary of estimates for the quantity $\frac{\gamma_{cv}}{\sqrt{\gamma^2_c + \sigma^2_c}\sqrt{\gamma^2_v + \sigma^2_v}}$, which reflects a lower bound on the correlation in preferences (since $\sigma^2_c \ge 0$ and $\sigma^2_v \ge 0$). Interestingly, these lower-bound correlations are relatively high, with an inter-quartile range of $0.31$ to $0.42$. These correspond favorably to correlations discussed in the psychology literature on stability of personality traits and correlations discussed in the economics literature on test-retest stability in preference measurements over time \citep[for discussion on the relationship between stability in economic preference measurements and psychological personality traits, see][]{meier2015temporal}.

\section{Discussion} \label{sec:conclusion}

Measurement of economic preferences requires confidence in our measurement or elicitation techniques. In this paper, we make both theoretical and empirical contributions to the study of preference stability in the domain of risk preferences across two elicitation techniques: choices versus valuations.

Theoretically, we highlight that the null of stable preferences can be assessed only in conjunction with ancillary assumptions about preference heterogeneity and noise, and how it is important to make these ancillary assumptions explicit. 
We derive a series of predictions under varying ancillary assumptions, motivating new tests of the null of stable preferences in which one compares an observed combination $(\mP_c(y),\mP_v(y))$  to null attainable regions instead of the traditional test of identical measurements.
 We also develop more structured approaches to testing for stability if the analyst is willing to make functional assumptions for preference heterogeneity and noise.
Critically, rejection in any of our tests may reflect a failure of stability or may reflect a failure of ancillary assumptions.

Empirically, we implement tests of the various predictions that we derive so as to obtain an updated sense of the extent to which the evidence is consistent with stable preferences. When we restrict ourselves to tests based on only $\mP_c(y)$ and $\mP_v(y)$, the message that emerges is that the vast majority of experiments from the prior literature and from the recent convenience dataset from \cite{MNOSS-2024-distinguishing, MNOSS-2026-connecting} are consistent with stable preferences. When we instead use tests based on richer empirical objects, we start to see some evidence that is inconsistent with stable preferences (and/or our ancillary assumptions). However, the nature of these deviations is not in line with the received wisdom that choices induce more risk aversion than valuations. In fact, overall, we see no systematic directional pattern, although we do see some systematic variation with problem types that could be explored in future work.

We conclude by discussing some broader messages from our analysis. 
Our analysis highlights the value of focusing on richer empirical objects than traditional choice and valuation proportions. In our domain, if we limit attention to predictions for $\mP_c(y)$ and $\mP_v(y)$, then even under our strongest assumptions of full joint normality, the prediction is merely that $(\mP_c(y),\mP_v(y))$ is in $S_y \cup S_X$ (minus the extreme edges). However, under the same assumptions, if the data contain information on both choices and valuations for the same subjects, we can use this more detailed data to implement our structural approaches. For the data we study---and under the same assumptions of full joint normality---when using data on only $\mP_c(y)$ and $\mP_v(y)$, we find substantial consistency with stable preferences, while when using richer data, we find substantial inconsistency.

 Important considerations emerge from our development for researchers linking models of decision making to data. First, it is critical to pay careful attention to heterogeneity and to make one's assumptions explicit. For instance, it is tempting to conjecture that if the average preference in the population is for $y$ over $X$, then the aggregate population averages for $\mP_c(y)$ and $\mP_v(y)$ ought to lie in $S_y$. Our analysis highlights that this conjecture is not generally correct, and holds only under certain assumptions on the nature of heterogeneity and noise. Second, when assessing whether data is consistent with a theoretical prediction, sampling variation must be accounted for given (typically small) experimental samples. Our results show the importance of developing appropriate statistical tests, as we document sizable differences between the proportion of experiments that are outside our attainable regions and the proportion that are \emph{significantly} outside our attainable regions.

We also provide some guidance to experimentalists deciding which elicitation techniques to use (even when they are not interested per se in the stability of preferences). In the data we study, it turns out that choice data and valuations data seem to be highly correlated, and thus it is not clear that either yields less biased measures of preferences.  At the same time, our structural analysis suggests that, at least under the null of stable preferences, valuations yield more precise measurements than choices. While one might be tempted to conclude, then, that valuations are better to use, we add two caveats. First, we suspect both techniques can be subject to bias due to framing effects, and such effects are not part of our analysis in this paper. Second, and more important, our analysis---especially our structural analysis---highlights the value of collecting both (and more generally of using multiple elicitation techniques). Doing so can provide a richer picture of preferences, and can permit one to assess the existence of inconsistencies, and possibly to correct for the impact of elicitation procedures on behavior.

Finally, while we have framed our theoretical analysis in the domain of risk preferences, it can be applied more broadly. Whenever one is interested in measuring preferences between some generic options A and B, one could do so via binary choices or via valuations, and all of our results apply to any such generic comparison (see the general version of our model in Appendix \ref{ap:general_model}). Moreover, our key intuitions clearly extend to other assessments of procedural invariance. For example, a substantial debate continues on appropriate payment mechanisms in experimental protocols depending on whether subjects ``isolate'' each experimental choice \citep[see, e.g.,][]{holt1986preference, starmer1991does, harrison2014experimental, cox2015paradoxes}. This debate revolves around evidence of inconsistencies between choice proportions depending on whether only a single choice is made or the same choice is embedded in a larger experiment. Our analysis extends naturally to such comparisons and implies that a simple comparison of choice proportions is perhaps not appropriate. The techniques that we develop in this paper could thus prove useful in this and many other domains.

\bibliography{refs}

\begin{thebibliography}{25}
\providecommand{\natexlab}[1]{#1}
\providecommand{\url}[1]{\texttt{#1}}
\expandafter\ifx\csname urlstyle\endcsname\relax
  \providecommand{\doi}[1]{doi: #1}\else
  \providecommand{\doi}{doi: \begingroup \urlstyle{rm}\Url}\fi

\bibitem[Allais(1953)]{allais1953comportement}
M.~Allais.
\newblock Le comportement de l'homme rationnel devant le risque: critique des postulats et axiomes de l'{\'e}cole am{\'e}ricaine.
\newblock \emph{Econometrica}, 21\penalty0 (4):\penalty0 503--546, 1953.

\bibitem[Bouchouicha et~al.(2026)Bouchouicha, Oprea, Vieder, and Wu]{Bouchouicha-et-al-wp-26}
R.~Bouchouicha, R.~Oprea, F.~M. Vieder, and J.~Wu.
\newblock Cognitive frictions and canonical patterns in risk-taking.
\newblock Technical report, Working Paper, 2026.

\bibitem[Brown and Healy(2018)]{brown2018separated}
A.~L. Brown and P.~J. Healy.
\newblock Separated decisions.
\newblock \emph{European Economic Review}, 101:\penalty0 20--34, 2018.

\bibitem[Cox et~al.(2015)Cox, Sadiraj, and Schmidt]{cox2015paradoxes}
J.~C. Cox, V.~Sadiraj, and U.~Schmidt.
\newblock Paradoxes and mechanisms for choice under risk.
\newblock \emph{Experimental Economics}, 18\penalty0 (2):\penalty0 215--250, 2015.

\bibitem[Freeman and Mayraz(2019)]{freeman2019choice}
D.~J. Freeman and G.~Mayraz.
\newblock Why choice lists increase risk taking.
\newblock \emph{Experimental Economics}, 22:\penalty0 131--154, 2019.

\bibitem[Freeman et~al.(2019)Freeman, Halevy, and Kneeland]{freeman2019eliciting}
D.~J. Freeman, Y.~Halevy, and T.~Kneeland.
\newblock Eliciting risk preferences using choice lists.
\newblock \emph{Quantitative Economics}, 10\penalty0 (1):\penalty0 217--237, 2019.

\bibitem[Grether and Plott(1979)]{grether1979economic}
D.~M. Grether and C.~R. Plott.
\newblock Economic theory of choice and the preference reversal phenomenon.
\newblock \emph{American Economic Review}, 69\penalty0 (4):\penalty0 623--638, 1979.

\bibitem[Harrison and Swarthout(2014)]{harrison2014experimental}
G.~W. Harrison and J.~T. Swarthout.
\newblock Experimental payment protocols and the bipolar behaviorist.
\newblock \emph{Theory and Decision}, 77\penalty0 (3):\penalty0 423--438, 2014.

\bibitem[Holt(1986)]{holt1986preference}
C.~A. Holt.
\newblock Preference reversals and the independence axiom.
\newblock \emph{American Economic Review}, 76\penalty0 (3):\penalty0 508--515, 1986.

\bibitem[Khaw et~al.(2021)Khaw, Li, and Woodford]{Khaw-Li-Woodford-21}
M.~W. Khaw, Z.~Li, and M.~Woodford.
\newblock Cognitive imprecision and small-stakes risk aversion.
\newblock \emph{Review of Economic Studies}, 88\penalty0 (4):\penalty0 1979--2013, 2021.

\bibitem[Lichtenstein and Slovic(1971)]{lichtenstein1971reversals}
S.~Lichtenstein and P.~Slovic.
\newblock Reversals of preference between bids and choices in gambling decisions.
\newblock \emph{Journal of Experimental Psychology}, 89\penalty0 (1):\penalty0 46--55, 1971.

\bibitem[Lichtenstein and Slovic(2006)]{Lichtenstein2006construction}
S.~Lichtenstein and P.~Slovic, editors.
\newblock \emph{The Construction of Preference}.
\newblock Cambridge University Press, New York, 2006.

\bibitem[Lu(2026)]{Lu-2026-Meta-Anal}
Y.~Lu.
\newblock Magnitude effects in preference reversals: A meta-analysis.
\newblock \emph{Review of Economic Design}, 30:\penalty0 169--202, 2026.

\bibitem[McGranaghan et~al.(2024)McGranaghan, Nielsen, O’Donoghue, Somerville, and Sprenger]{MNOSS-2024-distinguishing}
C.~McGranaghan, K.~Nielsen, T.~O’Donoghue, J.~Somerville, and C.~D. Sprenger.
\newblock Distinguishing common ratio preferences from common ratio effects using paired valuation tasks.
\newblock \emph{American Economic Review}, 114\penalty0 (2):\penalty0 307--347, 2024.

\bibitem[McGranaghan et~al.(2026)McGranaghan, Nielsen, O’Donoghue, Somerville, and Sprenger]{MNOSS-2026-connecting}
C.~McGranaghan, K.~Nielsen, T.~O’Donoghue, J.~Somerville, and C.~D. Sprenger.
\newblock Connecting common ratio and common consequence preferences.
\newblock \emph{Journal of Political Economy}, forthcoming, 2026.

\bibitem[Meier and Sprenger(2015)]{meier2015temporal}
S.~Meier and C.~D. Sprenger.
\newblock Temporal stability of time preferences.
\newblock \emph{Review of Economics and Statistics}, 97\penalty0 (2):\penalty0 273--286, 2015.

\bibitem[Miller et~al.(2026)Miller, Molinari, and Stoye]{miller2025testing}
D.~L. Miller, F.~Molinari, and J.~Stoye.
\newblock Testing sign congruence between two parameters.
\newblock \emph{Journal of Applied Econometrics}, 41\penalty0 (1):\penalty0 3--11, 2026.

\bibitem[Mosteller and Nogee(1951)]{mosteller1951experimental}
F.~Mosteller and P.~Nogee.
\newblock An experimental measurement of utility.
\newblock \emph{Journal of Political Economy}, 59\penalty0 (5):\penalty0 371--404, 1951.

\bibitem[Russek-Cohen and Simon(1993)]{russek1993qualitative}
E.~Russek-Cohen and R.~M. Simon.
\newblock Qualitative interactions in multifactor studies.
\newblock \emph{Biometrics}, 49\penalty0 (2):\penalty0 467--477, 1993.

\bibitem[Shubatt and Yang(2026)]{Shubatt-Yang-wp-26}
C.~Shubatt and J.~Yang.
\newblock Valuations under tradeoff complexity.
\newblock Technical report, Working Paper, 2026.

\bibitem[Slovic(1995)]{slovic1995construction}
P.~Slovic.
\newblock The construction of preference.
\newblock \emph{American Psychologist}, 50\penalty0 (5):\penalty0 364--371, 1995.

\bibitem[Starmer and Sugden(1991)]{starmer1991does}
C.~Starmer and R.~Sugden.
\newblock Does the random-lottery incentive system elicit true preferences? {A}n experimental investigation.
\newblock \emph{American Economic Review}, 81\penalty0 (4):\penalty0 971--978, 1991.

\bibitem[Thurstone(1931)]{thurstone1931indifference}
L.~L. Thurstone.
\newblock The indifference function.
\newblock \emph{Journal of Social Psychology}, 2\penalty0 (2):\penalty0 139--167, 1931.

\bibitem[Tversky and Kahneman(1992)]{TverskyKahneman1992}
A.~Tversky and D.~Kahneman.
\newblock Advances in prospect theory: Cumulative representation of uncertainty.
\newblock \emph{Journal of Risk and Uncertainty}, 5\penalty0 (4):\penalty0 297--323, 1992.

\bibitem[Tversky and Thaler(1990)]{tversky1990anomalies}
A.~Tversky and R.~H. Thaler.
\newblock Anomalies: preference reversals.
\newblock \emph{Journal of Economic Perspectives}, 4\penalty0 (2):\penalty0 201--211, 1990.

\end{thebibliography}

\newpage
\clearpage

\appendix

\renewcommand{\thefigure}{\thesection.\arabic{figure}}
\renewcommand{\thetable}{\thesection.\arabic{table}}
\renewcommand{\theequation}{\thesection.\arabic{equation}}
\setcounter{figure}{0}
\setcounter{table}{0}
\setcounter{page}{1}

\begin{center}
    {\Large Measuring Economic Preferences in the Presence of Noise: \\
    The Connections Between Choices and Valuations} \\[0.5em]
    {Ted O'Donoghue, Charles D. Sprenger, Po Hyun Sung, and Ben Wincelberg}
    \\[0.5em]
    August 30, 2026 \\[1em]
    {\Large Supplementary Appendix}
\end{center}

\bigskip

\bigskip

\section{Additional Tables and  Figures}\label{app:additional_figures}

\begin{figure}[h!]
\begin{center}
    \includegraphics[scale=0.8]{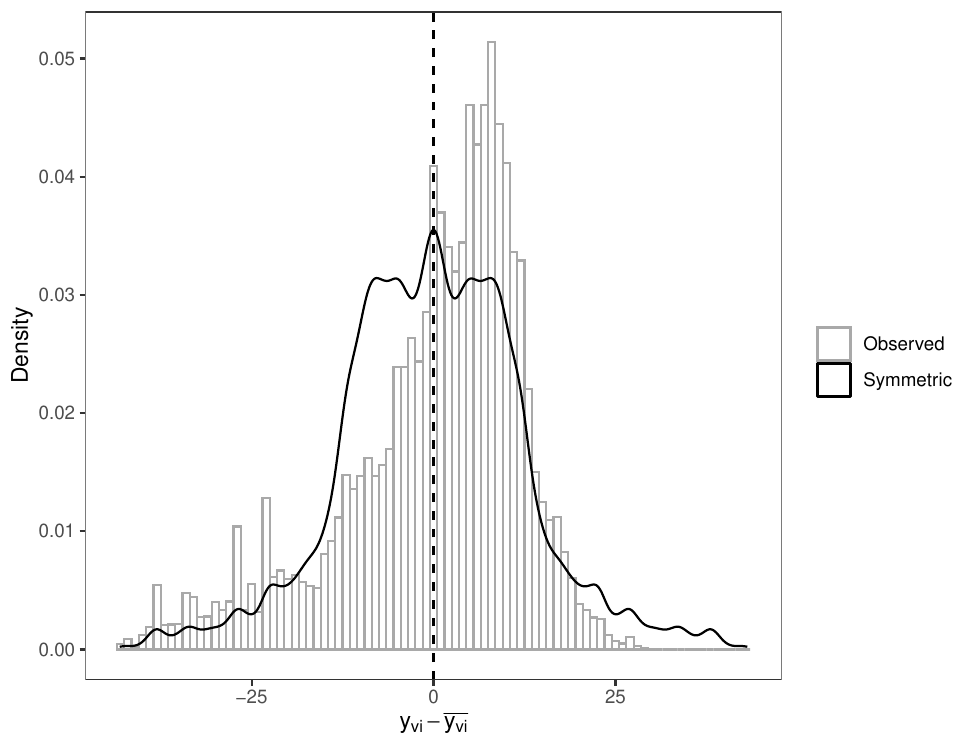}
    \caption{Evaluating Symmetry of $y_{vi} - \overline{y_{vi}}$}
    \label{fig:symmetry}
    \end{center}
     \footnotesize{\emph{Notes:} 
     Figure presents histogram of $y_{vi} - \overline{y_{vi}}$ for 68,448 observations in \citet{MNOSS-2024-distinguishing,MNOSS-2026-connecting} (gray bars). Solid black line corresponds to symmetrized distribution constructed from a combined dataset of $y_{vi} - \overline{y_{vi}}$ and $\overline{y_{vi}} - y_{vi}$. Median of $y_{vi} - \overline{y_{vi}}$ is 2.77.
     }
\end{figure}

\begin{table}[h!]
  \begin{center}
    \caption{Summary of Structural Estimation Separated by Problem Type}
     \label{tab:app_mlsummary}
     \scalebox{0.65}{
    \begin{tabular}{lccccclccc}
    \hline\hline
       Sample:& $AB$ & $AB'$  & $CD$ &&&  Sample:& $AB$ & $AB'$  & $CD$\\
   & Certain vs. & Certain vs. & Binary vs. &&&&  Certain vs. & Certain vs. & Binary vs. \\
      & Binary & Trinary & Binary &&&&  Binary & Trinary & Binary \\
& (1) & (2) & (3) &&&& (4) &(5) &(6) \\
\multicolumn{4}{c}{Probit} & & &  \multicolumn{4}{c}{MLE} \\
\cline{1-4} \cline{7-10} \\
\vspace*{.1in}
Constant ($\hat{\alpha}$) &0.35/0.34 & -0.22/-0.38 & 0.12/0.03 & & & $\hat{\mu}_v$ &  33.28/37.67 & 32.97/33.98 & 30.21/33.9 \\
         & [0.14,0.61] & [-0.44,-0.1] & [-0.29,0.47] & & & & [30.36,39.35] & [28.71,37.79] & [28.95,35.64] \\

\addlinespace

$y - \overline{y_{vi}}$ ($\hat{\beta}_1$) &0.04/0.03 & 0/-0.01 & 0.04/0.02 & & & $\hat{\mu}_c$ &  37.49/41.69 & 16.83/20 & 29.62/30.65\\
&  [0.01,0.04] & [-0.01,0] & [0.01,0.03] & & & & [36.88,44.71] & [6.89,24.45] & [25.3,37.93]\\
\addlinespace 

$y - y_{vi}$ ($\hat{\beta}_2$) & 0.04/0.04 & 0.03/0.03 & 0.03/0.03 & & & $\sqrt{ \widehat{ \sigma_v^2 + \gamma_v^2 }}$&  12.14/13.53 & 15.4/15.29 & 12.14/12.84 \\
&  [0.03,0.05] & [0.03,0.03] & [0.02,0.03]& & & &  [8.5,14.64] & [14.93,15.99] & [8.22,14.94]\\
\addlinespace
 & & &&& & $\sqrt{ \widehat{\sigma_c^2 + \gamma_c^2} }$& 18.35/17.26 & 50.74/47.89 & 23.69/22.23 \\
&  & & &&& &  [15.72,22.98] & [41.93,62.44] & [19.6,29.47\\
\addlinespace
&& &&& &$\sqrt{ \widehat{\gamma_{cv} }}$ & 8.92/9.58 & 17.59/17.41 & 8.75/9.16 \\
&&& &&& & [7.88,12.02] & [16.26,19.17] & [6.39,11.62] \\
\cline{1-4} \cline{7-10} \\ 
\addlinespace 
    \# Observations &25816 & 16816 & 25816 & & &  \# Observations&  25816 & 16816 & 25816\\
    \# Groupings & 14 & 20 & 50 & & &    \# Groupings  & 14 & 20 & 50 \\
    \addlinespace

$\widehat{\mu_v - \mu_c} = \frac{\hat{\alpha}}{\hat{\beta}_1+ \hat{\beta}_2}$ & 5.51/4.93 & -11.46/-11.7 & 1.68/1.12  & & & $\widehat{\mu_v - \mu_c}$  & 5.28/4.85 & -16.14/-15.87 & -0.59/-1.97\\
&[1.8,8.69] & [-17.34,-2.52] & [-4.52,8.28] & & & &  [1.8,8.69] & [-20.5,-9.46] & [-5.49,5.29]\\
\addlinespace
$(\widehat{\mu_v - \mu_c})^{\text{norm}}$ & 2.14/2.83 & -1.63/-2.25 & 0.17/0.03& & & $(\widehat{\mu_v - \mu_c})^{\text{norm}}$ &   2.03/2.73 & -3.27/-2.69 & -0.45/-0.21 \\
&   [1,4.35] & [-2.95,-0.23] & [-0.83,1.66] & & & &   [1.12,4.35] & [-4.58,-1.28] & [-1.41,0.92]\\
\addlinespace
$\Phi(\hat{\alpha})$  &  0.63/0.63 & 0.42/0.35 & 0.54/0.51 & & & $\Phi\left( \frac{\widehat{\mu_v - \mu_c}}{\hat{V}} \right) $  &0.62/0.62 & 0.37/0.35 & 0.5/0.47 \\
& [0.56,0.73] & [0.33,0.46] & [0.39,0.68] & & & & [0.55,0.72] & [0.32,0.4] & [0.38,0.6] \\ 

\addlinespace
& & &&& & $\frac{\widehat{\gamma_{cv}}}{\sqrt{\widehat{\sigma_v^2  + \gamma_v^2}}\sqrt{\widehat{\sigma_c^2  + \gamma_c^2}} }$ &   0.4/0.4 & 0.41/0.41 & 0.3/0.3 \\
&&&&& & &  [0.35,0.45] & [0.38,0.43] & [0.27,0.35] \\

     \hline \hline 
    \end{tabular}
    }
    \end{center}
\footnotesize{\emph{Notes:} 
Upper panel presents summary of raw coefficient estimates for 84 separate probit regressions in columns (1)-(3) and raw parameter estimates for 84 separate maximum likelihood estimations (MLE) in  column (4)-(6). For each coefficient or parameter, table presents sample-weighted mean/median along with sample-weighted 25th and 75th percentiles in brackets. Lower panel uses raw estimates to provide estimates of other objects. Estimated preference difference $\widehat{\mu_v - \mu_c}$ calculated as $\frac{\hat{\alpha}}{\hat{\beta}_1 + \hat{\beta}_2}$ for probit and directly from parameter estimates with sign adjustments as appropriate for MLE (see \cref{appsec:MLE_details}); for both, $\widehat{\mu_v - \mu_c}>0$ indicates greater risk aversion in choices. $(\widehat{\mu_v - \mu_c})^{\text{norm}}$ expresses mean preference difference in expected value units: if elicited value occurs with probability $q$, $(\widehat{\mu_v - \mu_c})^{\text{norm}} = q (\widehat{\mu_v - \mu_c})$. Estimated $\mP(\mu_v|y_{vi}=\mu_v)$ calculated as $\Phi(\hat{\alpha})$ in probit and $\Phi\left( \frac{\widehat{\mu_v - \mu_c}}{\hat{V}} \right) $ in MLE, where $\Phi$ is standard normal CDF and $\hat{V} = \sqrt{\widehat{\sigma_c^2 + \gamma_c^2} - \frac{\widehat{\gamma_{cv}}^2}{\widehat{\sigma_v^2 + \gamma_v^2}}}$. In MLE, $\frac{\widehat{\gamma_{cv}}}{\sqrt{\widehat{\sigma_v^2  + \gamma_v^2}}\sqrt{\widehat{\sigma_c^2  + \gamma_c^2}} }$ provides lower bound on correlation between choice and valuation preferences.}
\end{table}

\section{Proofs}\label{ap:proof}

\setcounter{figure}{0}
\setcounter{table}{0}
\setcounter{equation}{0}

\subsection{Proof of \cref{lm_weak_ind}}\label{ap:lm_weak_ind}

\begin{proof}[Proof of \cref{lm_weak_ind}]
Recall that $S_y=[\frac{1}{2},1]^2$ and $S_X=[0,\frac{1}{2}]^2$. To establish $\mathcal{A}_i = S_y \cup S_X$, we first prove that $\cA_i \subseteq S_y \cup S_X$, and then prove that $S_y \cup S_X \subseteq \cA_i$.

The former is straightforward. $F_c$ and $F_v$ both symmetric and mean zero implies: (i) if $y>y^*_i$ then $\mP_{ci}(y) = F_c(y-y^*_i) \ge \frac{1}{2}$ and $\mP_{vi}(y) = F_v(y-y^*_i) \ge \frac{1}{2}$, and thus $(\mP_{ci}(y),\mP_{vi}(y)) \in S_y$; (ii) if $y<y^*_i$ then $\mP_{ci}(y) = F_c(y-y^*_i) \le \frac{1}{2}$ and $\mP_{vi}(y) = F_v(y-y^*_i) \le \frac{1}{2}$, and thus $(\mP_{ci}(y),\mP_{vi}(y)) \in S_X$; (iii) if $y=y^*_i$ then $\mP_{ci}(y) = F_c(0) = \frac12$ and $\mP_{vi}(y) = F_v(0) = \frac12$, and thus $(\mP_{ci}(y),\mP_{vi}(y)) =\left( \frac12,\frac12 \right)$. Note that parts (2) and (3) follow from these statements, and it further follows that if $(a,b) \in \cA_i$, then $(a,b) \in S_y \cup S_X$.\footnote{Note that the same conclusion holds if we relax the symmetry assumption, requiring only that $\varepsilon_c$	and $\varepsilon_v$ have a median of zero.}

We prove $S_y \cup S_X \subseteq \cA_i$ by construction; many are possible, we use a construction that follows the narrative in the text. First, suppose $F_c$ is the CDF of $N(0,\sigma^2_c)$ and $F_v$ is the CDF of $N(0,\sigma^2_v)$. Note that, for any $a \in (0,\frac{1}{2})$ and $z<0$, there exists a $\sigma^2_c$ such that $F_c(z)=a$; and for any $a \in (\frac{1}{2},1)$ and $z>0$, there exists a $\sigma^2_c$ such that $F_c(z)=a$. Analogous properties hold for $F_v$. Hence, for any $(a,b)$ in the interior of $S_X$, if we select a $y_i^*>y$, there exists a $\sigma^2_c$ such that $F_c(y-y^*_i)=a$ and a $\sigma^2_v$ such that $F_v(y-y^*_i)=b$; thus any $(a,b)$ in the interior of $S_X$ is also in $\cA_i$. Analogously, for any $(a,b)$ in the interior of $S_y$, if we select a $y_i^*<y$, there exists a $\sigma^2_c$ such that $F_c(y-y^*_i)=a$ and a $\sigma^2_v$ such that $F_v(y-y^*_i)=b$; thus any $(a,b)$ in the interior of $S_y$ is also in $\cA_i$.

It remains to prove that the edges of $S_X$ and $S_y$ are in $\cA_i$. Suppose that for each $k \in \{c,v\}$, there exists $x''_k > x'_k \ge 0$ such that the $\varepsilon_k$ is uniformly distributed on $[-x''_k,-x'_k]\cup [x'_k,x''_k]$. Note that $\varepsilon_k$ is symmetric about zero with CDF
\begin{equation}\label{eqn:L1proof}
F_k(x)=\begin{cases}
    0 & x \le -x''_k \\
    \frac{1}{2(x''_k-x'_k)} (x - (-x''_k)) & -x''_k \le x \le -x'_k \\
    \frac{1}{2} & -x'_k \le x \le x'_k \\
    \frac{1}{2} + \frac{1}{2(x''_k-x'_k)} (x - x'_k) & x'_k \le x \le x''_k \\
    1 & x \ge x''_k.
\end{cases} 
\end{equation}

Consider first the case where $x'_c \ge x''_v$. If we set $y^*_i$ such that $y-y^*_i \in [-x''_v,x''_v]$, we can select all $(a,b)$ on the edge with $a=\frac{1}{2}$ and $b \in [0,1]$. If we set $y^*_i$ such that $y-y^*_i \in [x'_c,x''_c]$, we can select all $(a,b)$ on the edge with $a\in[\frac{1}{2},1]$ and $b=1$. And if we set $y^*_i$ such that $y-y^*_i \in [-x''_c,-x'_c]$, we can select all $(a,b)$ on the edge with $a\in[0,\frac{1}{2}]$ and $b=0$. If we instead consider the case where $x'_v \ge x''_c$, we can attain all $(a,b)$ on the remaining three edges, completing the proof.\footnote{To parallel the text, we used a normal-normal construction to prove that all points in the interior of $S_y \cup S_X$ are attainable; however, we could have also used the construction in \eqref{eqn:L1proof} for that as well. Again, many different constructions are possible.}

\end{proof}

\subsection{Proof of \cref{prop_big_hexagram}}\label{ap:prop_big_hexagram}

\begin{proof}[Proof of \cref{prop_big_hexagram}]

$\cA \subseteq \con(S_y \cup S_X)$ follows directly from Lemma \ref{lm_weak_ind} because the aggregate probabilities are expectations of individual probabilities lying in $S_y \cup S_X$. We prove $\con(S_y \cup S_X) \subseteq \cA$ by construction. Suppose a population consists of two groups: Group $1$ has $\eta_1$ such that $y^*_1<y$; Group $2$ has $\eta_2$ such that $y^*_2>y$; and the population shares are $\pi_1$ and $\pi_2$ such that $\pi_1+\pi_2=1$. By Lemma \ref{lm_weak_ind}, because Group 1 has $y-y^*_1>0$, and because there are no restrictions on their noise beyond symmetry, their decision probabilities could be any $(a,b) \in S_y$. Analogously, because Group 2 has $y-y^*_1<0$, and because there are also no restrictions on their noise beyond symmetry---including no requirement that their noise be the same as that for Group 1---their decision probabilities could be any $(a,b) \in S_X$. Finally, because there are no restrictions on the population proportions $\pi_1$ and $\pi_2$, it follows that the aggregate decision probabilities can be any mixture of any point in $S_y$ with any point in $S_X$, and thus $\con(S_y \cup S_X) \subseteq \cA$.

\end{proof}

\subsection{Proof of \cref{prop_weak_pop}}\label{ap:prop_weak_pop}

\begin{proof}[Proof of \cref{prop_weak_pop}]

\noindent Define $M \equiv \con(C_y \cup C_X) \cup \con(V_y \cup V_X)$, $\hat M = M - \left( \frac12,\frac12 \right)$, and $\hat \cA = \mathcal{A} - \left( \frac12,\frac12 \right)$, so the goal is to prove that $\hat \cA = \hat M$. Note that 
\begin{equation}\label{eq:translate}
 \hat M
=
\left\{(a,b)\in\left[-\frac12,\frac12\right]^2 :
|b-2a|\leq\frac12\right\}
\cup
\left\{(a,b)\in\left[-\frac12,\frac12\right]^2:
|2b-a|\leq\frac12\right\}.   
\end{equation}

Since preferences are independent of noise, $(a,b) \in \hat \cA$ if and only if there exist $F_c$ and $F_v$ that are symmetric such that $(a,b) \in \con(K)$, where
$$
K \equiv \left\{
\left(F_c(x)-\frac12,F_v(x)-\frac12\right) \; \middle\vert{} \; x\in \R
\right\} \subseteq \left[-\frac12,\frac12\right]^2.
$$

\bigskip

We first prove $\cA \subseteq M$, or equivalently, $\hat \cA \subseteq \hat M$. To this end, we fix $F_c$ and $F_v$ and show that $\con(K) \subseteq \hat M$. Since $F_c$ and $F_v$ are CDFs, $K$ is weakly increasing, i.e., for any two points $(x_1,y_1),(x_2,y_2) \in K$, either $(x_1,y_1)\ge (x_2,y_2)$ or $(x_1,y_1)\le (x_2,y_2)$. Moreover, since $F_c$ and $F_v$ are symmetric, $K=-K$, i.e., if $(x,y)\in K$, then $(-x,-y) \in K$. Thus, every point of $K$ lies in one of the two diagonal orthants, since otherwise a point and its negative would violate monotonicity. Hence, if we define $K^+ \equiv K\cap \left[0,\frac12\right]^2$, then $K=K^+\cup (-K^+)$.

By Fenchel's refinement of Carath\'eodory's theorem for connected sets,\footnote{Note that $K$ is connected since $F_c$ and $F_v$ are continuous. The theorem states that for a connected set $A \subseteq \R^n$, any point in $\con(A)$ is a convex combination of at most $n$ points in $A$. \cref{prop_weak_pop} holds without continuity of $F_c$ and $F_v$ with a slightly longer proof.} if $(a,b) \in \con(K)$, then $(a,b)$ is a convex combination of two points in $K$. Suppose, toward a contradiction, that $(a,b) \notin \hat M$. Since $a\cdot b <0$, $(a,b)$ must be a convex combination of a point in $K^+$ and a point in $K^-$. Since $K=-K$, 
 there exist $(x_1,y_1),(x_2,y_2) \in K^+$ such that 
\begin{align*}
    (a,b)&=\tau^+(x_1,y_1)+\tau^-(-x_2,-y_2)\\
    &=(\tau^+x_1-\tau^-x_2,\tau^+y_1-\tau^-y_2)
\end{align*}
for some $\tau^+,\tau^-\in [0,1]$ and $\tau^++\tau^-=1$.

Then by \eqref{eq:translate}, 
\begin{equation}\label{eq:violations}
    |b-2a|>\frac12
\qquad\text{and}\qquad
|2b-a|>\frac12.
\end{equation}
We consider first the case that $b-2a>\frac 12$ and $2b-a > \frac12$. Define
\begin{align*}
\Psi_C(x_1,y_1,x_2,y_2)
&:=
\tau^+(y_1-2x_1)-\tau^-(y_2-2x_2),\\
\Psi_V(x_1,y_1,x_2,y_2)
&:=
\tau^+(2y_1-x_1)-\tau^-(2y_2-x_2),
\end{align*}
so that 
\[
\Psi_C(x_1,y_1,x_2,y_2)=b-2a,
\qquad
\Psi_V(x_1,y_1,x_2,y_2)=2b-a.
\]

Suppose first that $\tau^+\geq\tau^-$, and note that either $(x_1,y_1)\le(x_2,y_2)$ or $(x_1,y_1)\ge(x_2,y_2)$. If $(x_1,y_1)\le(x_2,y_2)$ and thus $y_1 \le y_2$, then
\[
\Psi_C(x_1,y_1,x_2,y_2)
\leq
(\tau^+-\tau^-)y_1-2\tau^+x_1+2\tau^-x_2
\leq
\frac{\tau^+-\tau^-}{2}+\tau^-
=\frac12,\]
where the second inequality follows because $(x_1,y_1),(x_2,y_2) \in K^+$ implies boundary conditions $y_1 \le \frac12$, $x_1 \ge 0$, $x_2 \le \frac12$.
Similarly, if $(x_1,y_1)\ge(x_2,y_2)$ and thus $x_2 \le x_1$, then
\[
\Psi_C(x_1,y_1,x_2,y_2)
\leq
\tau^+y_1-\tau^-y_2-2(\tau^+-\tau^-)x_2
\leq \frac{\tau^+}{2}
\leq\frac12,
\]
applying the boundary conditions $y_1\le\frac12$, $y_2 \ge 0$, and $x_2\ge0$. Thus, if $\tau^+\ge\tau^-$ then $\Psi_C(x_1,y_1,x_2,y_2) \leq \frac12$, contradicting \eqref{eq:violations}. An analogous argument implies that $\Psi_V(x_1,y_1,x_2,y_2) \le \frac12$ when $\tau^+\le\tau^-$, also contradicting \eqref{eq:violations}. The case of $b-2a<-\frac 12$ and $2b-a < -\frac12$ follows a symmetric argument. The other two cases lie outside $\left[-\frac12,\frac12\right]^2$. Therefore $\con(K) \subseteq \hat M$. 

\medskip

We next show $M \subseteq \cA$. We prove this by construction. Suppose $F_c$ and $F_v$ are as in  \eqref{eqn:L1proof} from the proof of Lemma \ref{lm_weak_ind}, which again satisfies \cref{as_even_c_v}. Applying the logic there, when $x'_c \ge x''_v$, the resulting curve will include both the edge where $b=0$ and $a \in [0,\frac{1}{2}]$ and the edge where $b=1$ and $a \in [\frac{1}{2},1]$. Because we can choose any population distribution we want over this curve, we can attain any point in the convex hull of these two line segments, which is just $\con(C_y \cup C_X)$. Analogously, when $x'_v \ge x''_c$, the resulting curve will include both 
the edge where $a=0$ and $b \in [0,\frac{1}{2}]$ and the edge where $a=1$ and $b \in [\frac{1}{2},1]$, from which we can attain any point in $\con(V_y \cup V_X)$.

\end{proof}

\subsection{Proof of \cref{prop:symm_prefs}}\label{ap:symm_prefs}

\begin{proof}[Proof of \cref{prop:symm_prefs}]

From the text, the aggregate choice proportions are:
$$
\mP_c(y) = G_c(y-\mu) \qquad \text{and} \qquad \mP_v(y) = G_v(y-\mu)
$$
where $G_c$ is the CDF for random variable $\nu_{c}=\eta + \varepsilon_{c}$ and $G_v$ is the CDF for random variable $\nu_{v}=\eta + \varepsilon_{v}$. Since $\eta$, $\varepsilon_{c}$, and $\varepsilon_{v}$ are symmetric about zero and $\eta \perp (\varepsilon_c,\varepsilon_v)$, $\nu_{c}$ and $\nu_{v}$ are both symmetric about zero.

Given that $\nu_{c}$ and $\nu_{v}$ are both symmetric about zero, the proof that $\cA \subseteq S_y \cup S_X$ is exactly analogous to the proof that $\cA_i \subseteq S_y \cup S_X$ for Lemma \ref{lm_weak_ind}. In addition, letting $\eta=0$ degenerately, $S_y \cup S_X \subseteq \cA$ follows from Lemma \ref{lm_weak_ind}.

\end{proof}

\subsection{Proof of \cref{prop:cali}}\label{ap:cali}

\begin{proof}[Proof of \cref{prop:cali}]
As in the proof of Proposition \ref{prop:symm_prefs}, the aggregate choice proportions are:
$$
\mP_c(y) = G_c(y-\mu) \qquad \text{and} \qquad \mP_v(y) = G_v(y-\mu)
$$
where $G_c$ is the CDF for random variable $\nu_{c}=\eta + \varepsilon_{c}$ and $G_v$ is the CDF for random variable $\nu_{v}=\eta + \varepsilon_{v}$. Since $\eta$, $\varepsilon_{c}$, and $\varepsilon_{v}$ are symmetric about zero and $\eta \perp (\varepsilon_c,\varepsilon_v)$, $\nu_{c}$ and $\nu_{v}$ are both symmetric about zero.

When $y=\mu$, $\mP_c(y)=G_c(0),\mP_v(y)=G_v(0)$, which equal $\frac{1}{2}$ since $\nu_c$ and $\nu_v$ are symmetric and continuous. Letting $t=y-\mu$, it remains to show $G_c(t) \neq G_v(t)$ for all $t \neq 0$.

Because $\eta$ has a density by Assumption \ref{as_unimodal} and $\eta \perp (\varepsilon_c,\varepsilon_v)$, we have 
\begin{align*}
    G_v(t) - G_c(t) =\int_{-\infty}^{\infty}(F_v(t-\eta)-F_c(t-\eta))h(\eta)\dd \eta. 
\end{align*}
Setting $s = t-\eta$, we have
\begin{align*}
    G_v(t) - G_c(t) =\int_{-\infty}^{\infty}(F_v(s)-F_c(s))h(t-s)\dd s. 
\end{align*}
Define $D(s) \equiv F_v(s)-F_c(s)$, and note that, because $F_v$ and $F_c$ are both symmetric, $D$ is odd, i.e., $D(s)=-D(-s)$ for all $s$. We now have 
\begin{align*}
    G_v(t) - G_c(t) =\int_{-\infty}^{\infty}D(s)h(t-s)\dd s = \int_{-\infty}^{0}D(s)h(t-s)\dd s + \int_{0}^{\infty}D(s)h(t-s)\dd s. 
\end{align*}
We can rewrite the first integral as
\begin{align*}
    \int_{-\infty}^{0}D(s)h(t-s)\dd s = \int_{0}^{\infty}D(-s)h(t+s)\dd s = \int_{0}^{\infty}-D(s)h(t+s)\dd s, 
\end{align*}
where the first equality follows from substituting $-s$ for $s$ and the second equality applies that $D$ is odd. Hence, we have
\begin{align*}
    G_v(t) - G_c(t) = \int_{0}^{\infty}D(s)(h(t-s)-h(t+s))\dd s.
\end{align*}

Suppose $t>0$. For any $s > 0$, symmetry and unimodality of $h$ imply $h(t-s)-h(t+s)\ge 0$ because $|t-s|<t+s$. Moreover, the inequality is strict whenever $t-s$ lies in the interior of the support of $h$. Without loss of generality, we assume choices are noisier (\cref{as_ordered}). Then, for all $s>0$, we have $D(s)\ge 0$ with strict inequality whenever $F_v(s) < 1$. Thus the integrand is non-negative. We now show that $D(s)(h(t-s)-h(t+s)) > 0$ on some interval of $s$. To this end, let $\bar \eta=\esssup(\eta)$ and let $\bar{\varepsilon_v}=\esssup(\varepsilon_v)$. Note that if $s \in (0,\bar{\varepsilon_v})$, then $D(s)>0$ and if $|t-s|<\bar\eta$, or equivalently, $s \in (t-\bar\eta,t+\bar\eta)$, then $h(t-s)-h(t+s)>0$. Since $G_v(t)=\mP_v(y)\in(0,1)$ by assumption and since $\eta$ and $\varepsilon_v$ are independent, $t<\bar \varepsilon_v+\bar \eta$. It follows that $(t-\bar\eta,t+\bar\eta) \cap (0,\bar{\varepsilon_v})$ is a non-empty interval on which the integrand is positive. An analogous argument applies when $t < 0$, where $h(t-s)-h(t+s) \le 0$. Thus, integrating over $s$, we have $G_v(t)-G_c(t)\neq0$, or equivalently, $\mP_c(y)\neq \mP_v(y)$, whenever $y\neq \mu$ and $\mP_c(y),\mP_v(y)\in (0,1)$.

\end{proof}

\subsection{Proof of Proposition \ref{prop:cond_cali}}\label{ap:cond_cali}

\begin{proof}[Proof of \cref{prop:cond_cali}]
By independence of the choice and valuation noises, whenever the
conditional probability is defined and since $\eta$ and $\varepsilon_v$ have densities by \cref{as_unimodal} and \cref{as_f_pos}, we have
\begin{align}\label{eq_cond}
    \mP_c[y\mid y_{vi}=y] 
    &= \frac{\int F_c(y-\mu-\eta)f_v(y-\mu-\eta)h(\eta)\,\dd \eta}{\int f_v(y-\mu-\eta)h(\eta)\,\dd \eta}\\
    &=\frac{1}{2}+\frac{\int O_c(y-\mu-\eta)f_v(y-\mu-\eta)h(\eta)\,\dd \eta}{\int f_v(y-\mu-\eta)h(\eta)\,\dd \eta},\end{align}
where $O_c(x)=F_c(x)-\frac{1}{2}$, which is odd by the symmetric noises assumption. 

Thus at $y=\mu$, we have
\begin{align*}
    \mP_c[\mu\mid y_{vi}=\mu] 
    &=\frac{1}{2}+\frac{\int O_c(-\eta)f_v(-\eta)h(\eta)\,\dd \eta}{\int f_v(-\eta)h(\eta)\,\dd \eta}=\frac{1}{2},
\end{align*}
since the numerator of the second term is zero as $O_c$ is odd, $f_v$ is even by symmetric noises, and $h$ is even by symmetric preferences.

We next show that $\mP_c[y\mid y_{vi}=y]=\frac12$ only at $\mu$. Define $t = y - \mu$ and $z = t-\eta$. With this change of variables, we have
\begin{align}
    \mP_c[y\mid y_{vi}=y] - \frac{1}{2}
    &=\frac{\int O_c(z)f_v(z)h(t-z)\,\dd z}{\int f_v(z)h(t-z)\,\dd z}.
    \label{eq:condcali}
\end{align}

By symmetry, $O_c$ is odd and $f_v$ is even. Pairing $z$ and $-z$,
the numerator becomes
\begin{equation}\label{eq:condcali_pair}
\int_0^\infty
O_c(z)f_v(z)
\bigl[h(t-z)-h(t+z)\bigr]\,\dd z .
\end{equation}

Suppose first that $t>0$. Symmetry and unimodality of $h$ imply
\[
h(t-z)\ge h(t+z)
\qquad\text{for every }z>0,
\]
since $|t-z|<t+z$. Moreover, the inequality is strict
whenever either term is positive. Positive noise density around zero together with symmetry of noise imply that $O_c(z)>0$ for  all $z>0$. Hence the integrand in \eqref{eq:condcali_pair} is nonnegative.

The denominator in \eqref{eq:condcali} is positive (which follows from $\mP_c[y\mid y_{vi}=y]$ being well-defined), so
$f_v(z)h(t-z)>0$ for a positive-measure set of $z$. If this occurs
on a positive-measure set of $z>0$, then the inequality
$h(t-z)>h(t+z)$ is strict there. If instead it occurs on a
positive-measure set of $z<0$, symmetry of $f_v$ implies that on this same set $f_v(-z)h(t-z)>0$. Reflecting this set, we conclude that for a positive-measure set of $z>0$,
$f_v(z)h(t+z)>0$, which again implies
$h(t-z)>h(t+z)$. Thus the integrand in
\eqref{eq:condcali_pair} is strictly positive on a set of positive
measure. Therefore
\[
\mP_c[y\mid y_{vi}=y]>\frac12.
\]

A symmetric argument applies in the case that $t<0$, demonstrating that 
\[
\mP_c[y\mid y_{vi}=y]<\frac12.
\]
\end{proof}

\bigskip

\section{General Version of Model}\label{ap:general_model}

\setcounter{figure}{0}
\setcounter{table}{0}
\setcounter{equation}{0}

In this appendix, we develop a general version of our model that accommodates lottery-lottery comparisons, the preference-reversal paradigm, and measurement noise added to utility rather than the response variable.

Suppose we want a person to compare two lotteries, $X$ and $Y$. When these lotteries can be ranked in terms of risk, we let $Y$ denote the safer lottery, but they need not be rankable.\footnote{In fact, none of what follows requires the structure of lotteries. We focus on lotteries rather than abstract alternatives because they are the subject of our data.} Individual $i$ has a metric $\Omega_{ti}$ for comparing lotteries in task type $t \in \{c,v\}$---that is, if $\Omega_{ti}(W) > \Omega_{ti}(W')$, then $W \succ_{ti} W'$. This metric could be a certainty equivalent, a more general form of valuation (e.g.,  $\psi$ such that $(\psi,p) \sim_{ti} W$), an expected utility, or some other form of utility; we do assume that this metric respects first-order stochastic dominance (i.e., if $W$ FOSD $W'$ then $\Omega_{ti}(W) > \Omega_{ti}(W')$). Importantly, decision noise will enter at the level of this metric.

For the choice task between $X$ and $Y$, we assume that individual $i$ chooses $Y$ when $\Omega_{ci}(Y)+\varepsilon_{c}(Y) \ge \Omega_{ci}(X)+\varepsilon_{c}(X)$ for some shocks $\varepsilon_{c}(X)$ and $\varepsilon_{c}(Y)$. Defining $\varepsilon_c \equiv \varepsilon_{c}(X)-\varepsilon_{c}(Y)$, we have $\mP_{ci}(Y)=\mP(\varepsilon_c \le \Omega_{ci}(Y) - \Omega_{ci}(X))$.

To represent a generalized version of a valuation task, we introduce a parameterized lottery $Z(\psi)$, where $\psi \in \R$ and $\psi > \psi'$ implies $Z(\psi)$ FOSD $Z(\psi')$. One example is $Z(\psi) \equiv (\psi,p)$. Given a $Z(\psi)$, a valuation task for lottery $W$ asks person $i$ to state the value $\psi_{Wi}$ such that $Z(\psi_{Wi})$ is of equal value to them as lottery $W$. For such tasks, we assume the person states the value $\psi_{Wi}$ such that $\Omega_{vi}(Z(\psi_{Wi})) = \Omega_{vi}(W) + \varepsilon_v(W)$ (which is unique because $\Omega_i$ respects first-order stochastic dominance). We also define $\psi^*_{Wi}$ such that $\Omega_{vi}(Z(\psi^*_{Wi})) = \Omega_{vi}(W)$, which is the person's indifference value according to their underlying preferences (analogous to $y^*_i$ in Section \ref{sec:theory}).

Across our data sources, there are three relevant types of valuation tasks:
\begin{itemize}
 \item \textit{$X$ Valuation Task}: The parameterized lottery $Z(\psi)$ is such that $Z(\psi_Y) = Y$ for some $\psi_Y$, and we elicit $\psi_{Xi}$ such that $Z(\psi_{Xi})$ is of equal value as $X$. For this type, $\mP_{vi}(Y) = \mP(\psi_{Xi} \le \psi_Y) = \mP(\Omega_{vi}(X) + \varepsilon_v(X) \le \Omega_{vi}(Z(\psi_Y))) = \mP(\varepsilon_v(X) \le \Omega_{vi}(Y)-\Omega_{vi}(X))$.
    \item \textit{$Y$ Valuation Task}: The parameterized lottery $Z(\psi)$ is such that $Z(\psi_X) = X$ for some $\psi_X$, and we elicit $\psi_{Yi}$ such that $Z(\psi_{Yi})$ is of equal value as $Y$. For this type, $\mP_{vi}(Y) = \mP(\psi_{Yi} \ge \psi_X) = \mP(\Omega_{vi}(Y) + \varepsilon_v(Y) \ge \Omega_{vi}(Z(\psi_X))) = \mP(-\varepsilon_v(Y) \le \Omega_{vi}(Y)-\Omega_{vi}(X))$.
    \item \textit{$XY$ Valuation Task}: Given a parameterized lottery $Z(\psi)$, we elicit both the $\psi_{Xi}$ such that $Z(\psi_{Xi})$ is of equal value as $X$ and the $\psi_{Yi}$ such that $Z(\psi_{Yi})$ is of equal value as $Y$. For this type, $\mP_{vi}(Y) = \mP(\psi_{Yi} \ge \psi_{Xi}) = \mP(\Omega_{vi}(Y) + \varepsilon_v(Y) \ge \Omega_{vi}(X)+\varepsilon_v(X)) = \mP(\varepsilon_v(X)-\varepsilon_v(Y) \le \Omega_{vi}(Y)-\Omega_{vi}(X))$.
\end{itemize}
For all three types, $\mP_{vi}(Y)=\mP(\varepsilon_v \le \Omega_{vi}(Y) - \Omega_{vi}(X))$ for some appropriately defined $\varepsilon_v$. Hence, when we apply the null of stable preferences---that is, that $\Omega_{ci}(W) = \Omega_{vi}(W) = \Omega_i(W)$ for all $W$---the two key equations are:
\begin{equation}\label{eqn:general_ind_key_eqn}
    \mP_{ci}(Y) = \mP(\varepsilon_c \le \Omega_i(Y) - \Omega_i(X)) ~~\text{ and } ~~ \mP_{vi}(Y) = \mP(\varepsilon_v \le \Omega_i(Y) - \Omega_i(X)).
\end{equation}

Note that \eqref{eqn:general_ind_key_eqn} takes the same form as \eqref{eqn:indiv_key_eqns} except that $\Omega_i(Y) - \Omega_i(X)$ replaces $y-y_i^*$. Assumptions about decision noise have the same meaning as in the text, and the two main individual results go through as follows.\footnote{The proofs for all results for the general model are essentially the same as the proofs for the special case considered in the text, and thus are omitted.}

\medskip

\noindent \textbf{Remark C1.} If noise is identical for choices and valuations, then $\mathcal{A}_i = \{(a,b)|a=b\}$.

\medskip

\noindent \textbf{Lemma C1.} Under symmetric noises:    
 \begin{enumerate}
     \item $\mathcal{A}_i = S_y \cup S_X$; and
     \item If $\Omega_i(Y) > \Omega_i(X)$, then $(\mP_{ci}(Y),\mP_{vi}(Y))\in S_y$, and if $\Omega_i(Y) < \Omega_i(X)$, then $(\mP_{ci}(Y),\mP_{vi}(Y))\in S_X$; and 
     \item If $\Omega_i(Y) = \Omega_i(X)$, then $(\mP_{ci}(Y),\mP_{vi}(Y)) = \left( \frac12,\frac12 \right)$.
 \end{enumerate}  

\medskip

To model heterogeneity in preferences, we use the following structure:
\begin{itemize}
    \item Heterogeneity in preferences: $\Omega_i(Y) - \Omega_i(X) = \Delta - \eta$, where $\Delta \in \R$ is the population mean preference, and $\eta$ is a mean-zero random variable with CDF $H$.\footnote{We assume that $\eta$ enters negatively so that, as in the main text, a larger $\eta$ implies a stronger preference for the riskier lottery $X$.} 
    \item Heterogeneity in decision noise: The additive noise shocks $\varepsilon_c$ and $\varepsilon_v$ can be correlated with $\eta$.
\end{itemize}
This structure is identical to the structure in the text except that $\Delta$ replaces $y-\mu$. Assumptions about the distribution of preferences have the same meaning as in the text, and the main aggregate results go through as follows.

\medskip

\noindent \textbf{Proposition C1.} Under stable preferences and (conditionally) symmetric noises, $\mathcal{A} = \con(S_y \cup S_X)$.

\medskip

\noindent \textbf{Proposition C2.} Under stable preferences and symmetric noises, if preferences are independent of noise, then $\mathcal{A} = \con(C_y \cup C_X) \cup \con(V_y \cup V_X)$.

\medskip

\noindent \textbf{Proposition C3.} Under stable preferences and symmetric noises, if preferences are symmetric and independent of noise, then:
 \begin{enumerate}
     \item $\mathcal{A} = S_y \cup S_X$; and
     \item If $\Delta > 0$, $(\mP_c(Y),\mP_v(Y)) \in  S_y$, and if $\Delta < 0$,  $(\mP_c(Y),\mP_v(Y)) \in  S_X$; and 
     \item If $\Delta = 0$, then $(\mP_{ci}(Y),\mP_{vi}(Y)) = \left( \frac12,\frac12 \right)$.
 \end{enumerate}  

\medskip

\noindent \textbf{Proposition C4.} Under stable preferences and symmetric and ordered noises, if preferences are symmetric, unimodal, and independent of noise, then $\mP_c(y)=\mP_v(y)\in (0,1)$ if and only if $\Delta = 0$, in which case $\mP_c(y)=\mP_v(y)=\frac{1}{2}$.

\medskip

For an analogue to \cref{prop:cond_cali}, we need a way of expressing that a person's response(s) on a valuation task suggest that $X$ and $Y$ are of equal value to them. If we define $\psi_{Yi} \equiv \psi_Y$ for $X$ valuation tasks and $\psi_{Xi} \equiv \psi_X$ for $Y$ valuation tasks, then for all three types of valuation tasks, the needed condition is $\psi_{Yi} = \psi_{Xi}$.

\medskip

\noindent \textbf{Proposition C5.}
Under stable preferences and symmetric noises, if preferences are symmetric, unimodal, and independent of noise, and the noise is independent with positive densities, then $ \mP_c(Y \mid \psi_{Yi} = \psi_{Xi}) =\frac{1}{2}$ if and only if $\Delta = 0$. Moreover, if $\Delta > 0$ then $\mP_c(Y \mid \psi_{Yi} = \psi_{Xi}) > \frac{1}{2}$, and if $\Delta < 0$ then $\mP_c(Y \mid \psi_{Yi} = \psi_{Xi}) < \frac{1}{2}$.

\medskip

Finally, for our structural assessments of stability, we can do analogous assessments based on the following population distribution of underlying preferences:
\begin{equation*}
\begin{pmatrix}
\Omega_{ci}(Y)-\Omega_{ci}(X) \\
\Omega_{vi}(Y)-\Omega_{vi}(X) \\
\end{pmatrix}
\sim N \left (
\begin{pmatrix}
\Delta_c  \\
\Delta_v \\
\end{pmatrix} ,
 \begin{pmatrix}
\gamma_c^2   & \gamma_{cv} \\
\gamma_{cv} & \gamma_v^2 \\
\end{pmatrix} 
\right).
\end{equation*}

\subsection{Translation to Specific Cases}

Given the general version of the model, we can now discuss translations to specific cases, along with a few comments where relevant.

\bigskip

\noindent \textbf{Lottery vs. certain amount, noise at level of certainty equivalents:} This is the case from the main text where we assume $Y=(y,1)$ and use $X$ valuations. If the metric $\Omega_i$ is the certainty equivalent of a lottery, then clearly $\Omega_i(Y)=y$. If we further define $y_i^*$ to be the certainty equivalent for $X$, then $\Omega_i(Y)-\Omega_i(X) = y - y_i^*$, and then the general results above translate directly to the results in the main text.

\bigskip

\noindent \textbf{Lottery vs. lottery, noise at the level of valuations response variable:} Two types of lottery-lottery comparisons appear in our data sources.

First, we might compare lotteries $X$ and $Y$ using $X$ valuations. If the metric $\Omega_i$ is in terms of the valuation response variable $\psi_{Xi}$, then $\Omega_i(X)=\psi^*_{Xi}$ and $\Omega_i(Y) = \psi_Y$. If we then define $y=\psi_Y$ and $y_i^*=\psi^*_{Xi}$, then $\Omega_i(Y)-\Omega_i(X) = y - y_i^*$.

Second, we might compare lotteries $X$ and $Y$ using $Y$ valuations. If the metric $\Omega_i$ is in terms of the valuation response variable $\psi_{Yi}$, then $\Omega_i(X)=\psi_{X}$ and $\Omega_i(Y) = \psi^*_{Yi}$. If we then define $y = -\psi_X$ and $y_i^*=-\psi^*_{Yi}$, then $\Omega_i(Y)-\Omega_i(X) = y - y_i^*$.

In either case, the general results above translate directly to the results in the main text. We note that this formulation is consistent with how we analyze lottery-lottery comparisons in Section \ref{sec:empirical}.

\bigskip

\noindent \textbf{Preference-reversal paradigm, noise at level of certainty equivalents:} In the  preference-reversals paradigm, we compare lotteries $X$ and $Y$ using $XY$ valuations. Suppose the metric $\Omega_i$ is in terms of the certainty equivalent of a lottery, where we use ${\rm CE}_{i}(W)$ to denote $i$'s certainty equivalent of lottery $W$. Then $\Omega_i(Y)-\Omega_i(X) = {\rm CE}_i(Y) - {\rm CE}_i(X)$. Hence, the general results hold with ${\rm CE}_i(Y) - {\rm CE}_i(X)$ in place of $\Omega_i(Y)-\Omega_i(X)$.

\bigskip

\noindent \textbf{Lottery vs. lottery, noise at level of expected utility:} Suppose the metric $\Omega_i$ is in terms of an expected-utility functional, where we use ${\rm EU}_i(W)$ to denote $i$'s expected utility from lottery $W$. Because $\Omega_i(Y)-\Omega_i(X) = {\rm EU}_i(Y) - {\rm EU}_i(X)$, the general results hold with ${\rm EU}_i(Y) - {\rm EU}_i(X)$ in place of $\Omega_i(Y)-\Omega_i(X)$.\footnote{If individuals use some other form of utility functional (e.g., the cumulative prospect theory functional from \cite{TverskyKahneman1992}), things would be analogous.}

\bigskip

When decision noise enters at the level of expected utility, the assumption of symmetric preferences---that is, that the population distribution of ${\rm EU}_i(Y) - {\rm EU}_i(X)$ is symmetric---may be less appealing. In particular, a more natural assumption is symmetry on some underlying preference parameter, for instance, an assumption that $u(z;\eta)=z^{\Delta + \eta}$ with $\eta$ symmetric and mean-zero. However, such an assumption would not generate symmetry in ${\rm EU}_i(Y) - {\rm EU}_i(X)$, and indeed the following example demonstrates that \cref{prop:symm_prefs} does not hold.

\bigskip

\noindent \textbf{Example:} Suppose $X \equiv (100,0.3)$ and $Y=(12,1)$. Suppose $\Delta = 0.5$, in which case a person with the mean preference prefers $Y$ to $X$. Suppose $\eta$ is equally likely to be $-0.2$, $-0.1$, $0.1$, and $0.2$. Then:
\begin{itemize}
    \item For $\eta=-0.2$, ${\rm EU}(Y;\eta)-{\rm EU}(X;\eta) = 12^{0.3} - (0.3) 100^{0.3} = 0.91$
    \item For $\eta=-0.1$, ${\rm EU}(Y;\eta)-{\rm EU}(X;\eta) = 12^{0.4} - (0.3) 100^{0.4} = 0.81.$
    \item For $\eta=0.1$, ${\rm EU}(Y;\eta)-{\rm EU}(X;\eta) = 12^{0.6} - (0.3) 100^{0.6} = -0.31.$
    \item For $\eta=0.2$, ${\rm EU}(Y;\eta)-{\rm EU}(X;\eta) = 12^{0.7} - (0.3) 100^{0.7} = -1.84.$
\end{itemize}
Now suppose $F_c$ is 50-50 $\bar c$ and $-\bar c$, and $F_v$ is 50-50 $\bar v$ and $-\bar v$. If $\bar c \in (0.31, 0.81)$, then the choice probabilities for the four types are 1, 1, $\frac{1}{2}$, and 0, yielding an average of $\mP_c(y)=5/8$. If at the same time $\bar v \in (0.91,1.84)$, then the valuation probabilities for the four types are $\frac{1}{2}$, $\frac{1}{2}$, $\frac{1}{2}$, and 0, yielding an average of $\mP_v(y)=3/8$. Hence, $(\mP_c(y),\mP_v(y))=(5/8,3/8)$, which is outside $S_y \cup S_X$.

\bigskip

\section{Details of Dataset Construction (for Section \ref{subsec:data-construct})}

\setcounter{figure}{0}
\setcounter{table}{0}
\setcounter{equation}{0}

\subsection{Data from the Preference-Reversal Literature}\label{appsubsec:PR_data}

To construct a dataset based on preference-reversal literature, we draw from a recent meta-analysis by \citet{Lu-2026-Meta-Anal}. In Table 2 from the Online Supplementary Material, there are 462 experiments from 54 studies for which one could calculate the empirical quantities $(\widehat{\mP_{c}}(y), \widehat{\mP_{v}}(y))$.\footnote{Note that \citet{Lu-2026-Meta-Anal}'s Table 2 also includes a number of additional entries for studies that report only the proportions of the two types of preference reversals. Because it is not possible to construct $(\widehat{\mP_{c}}(y), \widehat{\mP_{v}}(y))$ from only these proportions, we do not consider these studies.}

We then hired a research assistant to reconstruct the data ourselves. Specifically, the research assistant collected the 54 original studies. There were three outcomes:

\begin{itemize}
    \item For 33 studies, the research assistant was able to confirm the numbers from \cite{Lu-2026-Meta-Anal}.
    \item For 9 studies, the research assistant found discrepancies between the numbers from \cite{Lu-2026-Meta-Anal} and the numbers in the original study; for our dataset, we use the corrected numbers.
    \item For 12 studies, the research assistant was unable to construct the needed numbers from the original study; for our dataset, we excluded these studies.\footnote{\cite{Lu-2026-Meta-Anal} comments that some numbers were obtained from personal correspondence with the original authors.}
\end{itemize}

In the end, our final dataset consists of 314 experiments from 42 studies, and that is the data that we use here. In line with the literature, for each experiment, we take the $P$-bet to be the safer option. We highlight three further details. 

First, a number of the original studies report only the combined numbers across multiple conditions---i.e., across multiple ``experiments''. For instance, in the seminal work of \citet{grether1979economic}, each experiment covers six different pairs of bets (summarized in their Table 2), but they report decisions across all six pairs (in their Tables 5, 6, 8, and 9). Rather than exclude these, we chose to consider the combined numbers to be one experiment.

Second, for a number of the original studies, there is some ambiguity in how to define an experiment. So as not to give ourselves extra flexibility, we use the decisions from \cite{Lu-2026-Meta-Anal}. As an example, Experiment 1 from \citet{grether1979economic} splits subjects into one group that does not have incentives and a second group that does have incentives. In principle, one could consider the two groups to be different experiments, or one could combine the behavior of the two groups into one experiment. \cite{Lu-2026-Meta-Anal} chose to do the former, and thus so do we.

Third, some studies permit participants to report indifference. For such studies, we include responses of indifference in the total count of observations when calculating $\widehat{\mP_{c}}(y)$ or $\widehat{\mP_{v}}(y)$. For instance, if in a choice task, 30 choose the safe option, 10 choose the risky option, and 10 express indifference, we would calculate $\widehat{\mP_{c}}(y)=0.60$.

\subsection{Data from the Choices-versus-MPLs Literature}\label{appsubsec:CV_data}

To construct data based on the choices-versus-MPLs literature, we pull the needed data ourselves from the three key studies in this literature \citep{brown2018separated,freeman2019eliciting, freeman2019choice}. Specifically:

\begin{itemize}
    \item \cite{brown2018separated}: In this study, we identified 1 experiment. Specifically, we use the results reported in their Table 3 for their Experiment 1, where we use O-14 as the choice treatment and L-RPS as the valuations treatment. For the choice experiment, $N=61$ and $\widehat{\mP_{c}}(y)=0.44$; for the valuation experiment, $N=60$ and $\widehat{\mP_{v}}(y)=0.48$. Their Experiment 2 does not include a choice treatment, and thus it does not yield an experiment for us.
    
    \item \cite{freeman2019eliciting}: In this study, we identified 2 experiments. The study includes some minor treatment differences in terms of number of questions asked and how payments are made. However, since the authors ignore these differences in presenting their headline results in their Figure 1, we focus on the two experiments represented in that figure. For the Q1 experiment, the choice treatment has $N=81$ and $\widehat{\mP_{c}}(y)=0.77$, while the valuation treatment has $N=355$ and $\widehat{\mP_{v}}(y)=0.55$. For the Q2 experiment, the choice treatment has $N=83$ and $\widehat{\mP_{c}}(y)=0.70$, while the valuation treatment has $N=354$ and $\widehat{\mP_{v}}(y)=0.62$.

    \item \cite{freeman2019choice}: In this study, we identified 6 experiments, one associated with each of the six different single-choice treatments (SC-Q95, SC-Q90, SC-Q85, SC-Q80, SC-Q75, and SC-Allais). For the valuations treatment, we chose to use the combined R-List treatments. We then pull the proportions from their Table 3---e.g., for Q85, $\widehat{\mP_{c}}(y)=0.77$ and $\widehat{\mP_{v}}(y)=0.59$ (because the table reports the proportion choosing risky, these are one minus the numbers reported in the table). The paper does not report precise sample sizes for each of the thirteen treatments, but each has an average of 120. Hence, roughly, each choice experiment has $N=120$, one valuation experiment (R-List Allais) has $N=120$, and the other five valuation experiments each have  $N=360$ (as each combines three treatments).
\end{itemize}

For all three studies, we follow the authors' labels for which lottery is riskier.

\subsection{Data from \citet{MNOSS-2024-distinguishing,MNOSS-2026-connecting}}\label{appsubsec:MNOSS_data}

\subsubsection{Overview of Studies}

The primary focus of \citet{MNOSS-2024-distinguishing} and \citet{MNOSS-2026-connecting} is to better understand how people react to variants of the common-ratio (CR) and common-consequence (CC) manipulations proposed by \cite{allais1953comportement}. \cite{MNOSS-2024-distinguishing} highlight an inference problem when using paired-choice tasks to study CR problems, and propose using paired-valuation tasks instead to overcome this problem. \cite{MNOSS-2026-connecting} highlight how there has been limited exploration of the parameter space for both CR and CC problems, and they conduct such exploration using both paired-valuation tasks and paired-choice tasks. While neither study focuses directly on comparing decisions for choices versus valuations, it turns out their data is ideally situated for doing so.

These studies ask subjects to make three types of comparisons for risky lotteries:
\begin{itemize}
    \item[] $AB$ Comparison: $ \text{Safer~} A \equiv (M, 1) ~vs. ~\text{Riskier~} B \equiv (H, p)$
    \item[] $AB'$ Comparison: $\text{Safer~} A \equiv (M, 1) ~vs. ~\text{Riskier~} B' \equiv (H, pr;  M, 1-r)$
    \item[] $CD$ Comparison: $\text{Safer~} C \equiv (M, r) ~vs. ~\text{Riskier~} D \equiv (H, pr)$
\end{itemize}

These studies use two types of valuation tasks that differ in terms of the scalar value that is elicited (both types are implemented using a multiple-price list). In an $m$-valuation, they fix $(H,p,r)$ and ask subjects to state a value $m_{vi}$ that makes two options of equal value. In an $h$-valuation, they fix $(M,p,r)$ and ask subjects to state a value $h_{vi}$ that makes two options of equal value. For instance, for a $CD$ comparison with $p=0.5$ and $r=0.4$, the two tasks might be:
\begin{itemize}
     \item[] $m$-Valuation: State an $m_{vi}$ such that $(m_{vi},0.4)$ is of equal value to $(30,0.2)$.
     \item[] $h$-Valuation: State an $h_{vi}$ such that $(15,0.4)$ is of equal value to $(h_{vi},0.2)$.
\end{itemize}
As we describe below, across the two studies, $h$-valuations are used for all three types of comparisons, while $m$-valuations are used only for $AB$ and $CD$ comparisons. It follows that $m$-valuations always elicit a value within the safer option, and thus are analogous to the $X$ valuation tasks defined in Appendix \ref{ap:general_model}, while $h$-valuations always elicit a value within the riskier option, and thus are analogous to the $Y$ valuation tasks defined in Appendix \ref{ap:general_model}.

These studies also present the same subjects with linked choice tasks. Specifically, if a subject faces an $m$-valuation for a particular $(H,p,r)$, then they also face one choice task that uses that same $(H,p,r)$ along with a specific  $m_c$ that is randomly selected from a set of values. Analogously, if a subject faces an $h$-valuation for a particular $(M,p,r)$, then they also face one choice task that uses that same $(M,p,r)$ along with a specific $h_c$ that is randomly selected from a set of values. The sets of values were chosen based on pilot data to (ideally) span both sides of the mean preference.\footnote{Analogous to the choices-versus-MPLs literature, a choice task corresponds to one row of the multiple-price list for the linked valuation task. But unlike the choices-versus-MPLs literature, the comparison can be done within-subjects.} 

The primary analysis in \cite{MNOSS-2024-distinguishing} studies paired-valuation tasks. Specifically, they study CR problems by comparing an $AB$ and a $CD$ valuation for the same $(p,r)$, and they do so using both $m$-valuations and $h$-valuations. The primary analysis in \cite{MNOSS-2026-connecting} also studies paired-valuation tasks. Specifically, they study CR problems by comparing an $AB$ and a $CD$ valuation for the same $(p,r)$, CC problems by comparing an $AB'$ and a $CD$ valuation for the same $(p,r)$, and also preferences for probabilistic mixtures by comparing an $AB$ and a $AB'$ valuation for the same $(p,r)$. They do so using only $h$-valuations. Each study use the data on choice tasks to support their conclusions.

\subsubsection{Construction of Data for Our Purposes}

Again, neither study focuses directly on comparing decisions for choices versus valuations, but it turns out their data is ideally situated for doing so.

\cite{MNOSS-2024-distinguishing} collects data on $AB$ and $CD$ decisions for five different values of $p \in \{0.1,0.2,0.5,0.8,0.9\}$ and three different values of $r \in \{0.2,0.4,0.6\}$ for both $m$-valuations and $h$-valuations. Since $r$ is irrelevant for $AB$ decisions, they collect a total of 10 $AB$ valuations (five $m$-valuations and five $h$-valuations). Since $r$ matters for $CD$ decisions, they collect a total of 30 $CD$ valuations (15 $m$-valuations and 15 $h$-valuations). For each of these 40 valuations, they collect data for linked choices for four different values of $M$ or $H$. Hence, this study yields data on 40 groupings with 4 experiments within each group, and thus a total of 160 experiments. For the $AB$ decisions, there is an average of 225 subjects per experiment and thus 900 subjects per grouping. For the $CD$ decisions, there is an average of 75 subjects per experiment and thus 300 subjects per grouping.

\cite{MNOSS-2026-connecting} collects data on $AB$, $AB'$, and $CD$ decisions for four different values of $p \in \{0.3,0.5,0.8,0.9\}$ and five different values of $r \in \{0.1,0.2,0.3,0.5,0.8\}$ using only $h$-valuations. Again, since $r$ is irrelevant for $AB$ decisions, they collect a total of 4 $AB$ valuations. Since $r$ matters for $AB'$ and $CD$ decisions, they collect a total of 20 $AB'$ valuations and 20 $CD$ valuations. For each of these 44 valuations, they collect data for linked choices for six different values of $H$. Hence, this study yields data on 44 groupings with 6 experiments within each group, and thus a total of 264 experiments. For the $AB$ decisions, there is an average of 701 subjects per experiment and thus 4204 subjects per grouping. For the $AB'$ and $CD$ decisions, there is an average of 140 subjects per experiment and thus 841 subjects per grouping.

Combining the data from both studies, there is a total of 424 experiments across 84 groupings. In total, there are 68,448 observations for both choices and valuations, and thus an average of 161 subjects per experiment 815 subjects per grouping.

\subsubsection{Harmonizing $m$-Tasks and $h$-Tasks}

In an $m$-valuation, a larger $m_{vi}$ reflects a stronger preference for the riskier option, whereas in an $h$-valuation, a larger $h_{vi}$ reflects a stronger preference for the safer option. Analogously, for a choice task linked to an $m$-valuation, a larger selected value for $M$ makes the safer option more attractive, whereas for a choice task linked to an $h$-valuation, a larger selected value for $H$ makes the riskier option more attractive. Given this asymmetry, we must code the data in a way that harmonizes $m$-tasks and $h$-tasks.

When calculating $\widehat{\mP_c}(y)$ and $\widehat{\mP_v}(y)$ for a particular experiment, this harmonization is straightforward. Throughout, we have defined these as the proportion that select the safer alternative, and it is clear that the safer alternative is $A$ in an $AB$ task, $A$ in an $AB'$ task, and $C$ in a $CD$ task. Hence, for each experiment, we merely define $\widehat{\mP_{c}}(y)$ to be the proportion of subjects choosing the safer alternative and $\widehat{\mP_{v}}(y)$ to be to the proportion of subjects having a valuation that would indicate a preference for the safer alternative.

When using quantitative information on valuations, as needed for the tests from Sections \ref{sec:leveraging_richer-data} and \ref{sec:nnn_analysis}, we must be a little more careful. In our theory, a larger $y_{vi}$ means the person has a stronger preference for the risky option, and a larger $y$ implies the safer option is more attractive. $m$-tasks match this logic, and thus we can use data from $m$-tasks at face value. In other words, we can merely set $y_{vi}=m_{vi}$ and $y=m_c$, because the equations from our theory apply directly when $y-\overline{y_{vi}} = m_c - \overline{m_{vi}}$ and $y-y_{vi} = m_c - m_{vi}$. And for the MLE, as in our theory, a person chooses the safer option when $m_c$ is larger than $y_{ci}$.

For $h$-tasks, however, the logic is inverted, and thus we must correct for this. To do so, we set $y-\overline{y_{vi}} = \overline{h_{vi}} - h_c$ and $y-y_{vi} = h_{vi} - h_c$, and then apply the equations from our theory. And for the MLE, a person chooses the safer option when $h_c$ is smaller than $y_{ci}$.

\section{Details of Maximum Likelihood Estimation}\label{appsec:MLE_details}

\setcounter{figure}{0}
\setcounter{table}{0}
\setcounter{equation}{0}

We conduct our MLE approach on each of the 84 groupings defined in Section \ref{subsec:data-construct}. However, we must be careful to harmonize the estimation for $m$-valuations, where a larger valuation indicates less risk aversion, and $h$-valuations, where a larger valuation indicates more risk aversion (see Appendix \ref{appsubsec:MNOSS_data} for details on these).

For either type of valuation task, we let $w$ denote the object that is elicited (so $w \in \{m,h\}$). We then assume, analogous to equation \ref{eqn:struct_distn}, that a person's choice and valuation are determined from $w_{ci}$ and $w_{vi}$, where
\begin{equation}\label{appeqn:struct_distn}
 \begin{pmatrix}
 w_{ci} \\
 w_{vi} \\
\end{pmatrix}
\sim N \left (
 \begin{pmatrix}
 \mu_c  \\
\mu_v \\
\end{pmatrix} ,
 \begin{pmatrix}
\sigma^2_c + \gamma_c^2   & \gamma_{cv} \\
\gamma_{cv} & \sigma^2_v + \gamma_v^2 \\
\end{pmatrix} 
\right).
\end{equation}

Within a grouping, each individual faces the valuation task plus one of the (four or six) choice tasks. Hence, individual $i$'s data comes in the form $(w_c,a_{ci},w^L_{vi},w^H_{vi})$, where
\begin{itemize}
    \item $w_c$ is the value randomly assigned to $i$ for their choice task.
    \item $a_{ci} \in \{0,1\}$ is $i$'s chosen option in that choice task, where $a_{ci}=1$ denotes choosing the safer option and $a_{ci}=0$ denotes choosing the riskier option.
    \item $w^L_{vi}$ and $w^H_{vi}$ reflect the values of $i$'s switching rows in the valuation task---that is, in the price list, they choose the fixed option for rows with $w \le w^L_{vi}$, and they choose the varying option for rows with $w \ge w^H_{vi}$. Note that if the person always chooses the varying option, we code $w^L_{vi} = -\infty$; and if the person always chooses the fixed option, we code $w^H_{vi} = \infty$.
\end{itemize}

Given this structure, it is straightforward to derive an individual's likelihood function, although we must do it separately for $m$ versus $h$ valuations. For $m$ valuations, we use:\footnote{The function $N$ represents the CDF for a bivariate normal distribution.}
\begin{equation*}
L_i(m_c,0,m^L_{vi},m^H_{vi}) = 
N \left(
 \begin{pmatrix}
m_c , \infty \\
m^L_{vi}, m^H_{vi} \\
\end{pmatrix}
 ; \left (
 \begin{pmatrix}
 \mu_c  \\
\mu_v \\
\end{pmatrix} ,
 \begin{pmatrix}
\sigma^2_c + \gamma_c^2   & \gamma_{cv} \\
\gamma_{cv}& \sigma^2_v + \gamma_v^2 \\
\end{pmatrix} 
\right)
\right), \text{ and}
\end{equation*}
\begin{equation*}
L_i(m_c,1,m^L_{vi},m^H_{vi}) = 
N \left(
 \begin{pmatrix}
 -\infty, m_c  \\
m^L_{vi}, m^H_{vi} \\
\end{pmatrix}
 ; \left (
 \begin{pmatrix}
 \mu_c  \\
\mu_v \\
\end{pmatrix} ,
 \begin{pmatrix}
\sigma^2_c + \gamma_c^2   & \gamma_{cv} \\
\gamma_{cv}& \sigma^2_v + \gamma_v^2 \\
\end{pmatrix} 
\right)
\right).
\end{equation*}
For $h$ valuations, we instead use:
\begin{equation*}
L_i(h_c,0,h^L_{vi},h^H_{vi}) = 
N \left(
 \begin{pmatrix}
-\infty, h_c \\
h^L_{vi}, h^H_{vi} \\
\end{pmatrix}
 ; \left (
 \begin{pmatrix}
 \mu_c  \\
\mu_v \\
\end{pmatrix} ,
 \begin{pmatrix}
\sigma^2_c + \gamma_c^2   & \gamma_{cv} \\
\gamma_{cv}& \sigma^2_v + \gamma_v^2 \\
\end{pmatrix} 
\right)
\right), \text{ and}
\end{equation*}
\begin{equation*}
L_i(h_c,1,h^L_{vi},h^H_{vi}) = 
N \left(
 \begin{pmatrix}
h_c, \infty  \\
h^L_{vi}, h^H_{vi} \\
\end{pmatrix}
 ; \left (
 \begin{pmatrix}
 \mu_c  \\
\mu_v \\
\end{pmatrix} ,
 \begin{pmatrix}
\sigma^2_c + \gamma_c^2   & \gamma_{cv} \\
\gamma_{cv}& \sigma^2_v + \gamma_v^2 \\
\end{pmatrix} 
\right)
\right).
\end{equation*}

Combining all individuals in the grouping, the aggregate log likelihood is:
$$
L = \sum_i log L_i(w_c,a_{ci},w^L_{vi},w^H_{vi}).
$$
Maximizing the aggregate log likelihood with respect to the parameters $\mu_c$, $\mu_v$, $(\sigma^2_c + \gamma^2_c)$, $(\sigma^2_v + \gamma^2_v)$, and $\gamma_{cv}$ is a straightforward exercise yielding estimates of the corresponding parameters using conventional techniques. In our implementation, for each of the 84 groupings, we use the conjoint gradient algorithm with a maximum of 1000 iterations. If convergence is not achieved within 1000 iterations, we take the final values as our estimates.\footnote{Of our 84 MLE implementations, only 8 fail to achieve convergence within 1000 iterations.}

Note that the raw estimates in Table \ref{tab:probitsummary} and Appendix Table \ref{tab:app_mlsummary} do not distinguish between $m$-valuations and $h$-valuations, and more generally are not really comparable across groups. In interpreting these estimates, we are more interested in their difference, as that reflects which elicitation technique is more likely to induce risk aversion. For that, however, we must account for the fact that larger values have opposite meanings for $m$-valuations versus $h$-valuations. Given raw estimates $\hat \mu_v$ and $\hat \mu_c$, we define
$$
\widehat{\mu_v -\mu_c} = 
\begin{cases}
    \hat \mu_v - \hat \mu_c & \text{if it is an $m$-valuation} \\
    \hat \mu_c - \hat \mu_v & \text{if it is an $h$-valuation}
\end{cases} 
$$
Note that $\widehat{\mu_v -\mu_c}$ is defined such that a positive value means that choices induce more risk aversion than valuations for either type of valuation task. When describing the implications of the MLE estimates in the lower panel of Table \ref{tab:probitsummary} and Appendix Table \ref{tab:app_mlsummary} and in panel B of Figure \ref{fig:probitresults}, we report results in terms of $\widehat{\mu_v -\mu_c}$.

\section{Problem Parameters and Estimates of Instability}\label{app:parameters}

\setcounter{figure}{0}
\setcounter{table}{0}
\setcounter{equation}{0}

The 84 probit and maximum likelihood estimates presented in Table \ref{tab:probitsummary} and Table \ref{tab:app_mlsummary} reveal that different broad conditions lead to different quantitative estimates of mean preference instability, $\widehat{\mu_v -\mu_c}$. For $AB$ problems, the tendency is towards $\widehat{\mu_v -\mu_c}>0$ and thus choices yield greater risk aversion; for $AB'$ problems, the tendency is towards $\widehat{\mu_v -\mu_c}<0$ and thus valuations yield greater risk aversion; and for $CD$ problems, the tendency is towards $\widehat{\mu_v -\mu_c}\approx 0$ (particularly when expressed in expected value units).  

Importantly, there is also substantial heterogeneity within problem types, with each of $AB$, $AB'$, and $CD$ problems indicating some instances of $\widehat{\mu_v -\mu_c}>0$ and some instances of $\widehat{\mu_v -\mu_c}<0$.
Understanding the drivers of this heterogeneity may lead to further insights on when preference stability is more or less likely to hold.

To analyze the sources of this within-problem heterogeneity, in Table \ref{tab:app_instability} we regress probit estimates and MLE estimates of $\widehat{\mu_v -\mu_c}$ on the probability parameters $p$ and $r$ controlling for whether the corresponding grouping covers an $AB$, $AB'$, or $CD$ problem, and whether the grouping involves $m$-valuation tasks or $h$-valuation tasks. Columns (1) and (3) examine $p$ effects for all 84 groupings. Because $r$ is not defined for $AB$ problems, columns (2) and (4) examine $r$ effects for the 70 groupings which cover $AB'$ and $CD$ problems. We find that both $p$ and $r$ are highly and differentially correlated with estimates of instability. Choices yield greater risk aversion than valuations at low values of $p$, and less risk aversion than valuations at high values of $p$. In contrast, choices yield less risk aversion than valuations at low values of $r$, and greater risk aversion than valuations at high values of $r$.

Findings of preference instability may be quite sensitive not only to the type of lotteries used (e.g., certain vs. binary compared to certain vs. trinary compared to binary vs binary), but also to the probabilities implemented. Understanding the deeper reasons for these predictable effects on instability will be a critical path for future studies.

\begin{table}[t!]
  \begin{center}
    \caption{Problem Parameters and Estimates of Instability}
     \label{tab:app_instability}
     \scalebox{1}{
    \begin{tabular}{lcccclccc}
    \hline\hline
    \multicolumn{3}{c}{Probit Estimates} & & &  \multicolumn{3}{c}{MLE Estimates} \\
       Dependent Variable :& \multicolumn{2}{c}{$\widehat{\mu_v -\mu_c}$} &&&   Dependent Variable :&  \multicolumn{2}{c}{$\widehat{\mu_v -\mu_c}$} \\
& (1) & (2)  &&&& (3) & (4) \\
\cline{1-3} \cline{6-8} \\
\vspace*{.1in}
 $p$ & -21.72 & &&& $p$& -20.82 &  \\ 
  & (3.89) & &&& & (2.61) &  \\ 
  \addlinespace
 $r$ &  & 26.57 &&&$r$&  & 15.79 \\ 
  &  & (6.24) & &&& & (4.93) \\ 
\addlinespace
$\mathbf{1}(AB')$ & -16.71 & &&&  $\mathbf{1}(AB')$ & -20.77 &  \\ 
  & (3.64) &  &&& & (2.44) &  \\ 
  \addlinespace 
  $\mathbf{1}(CD)$ & -3.03 & 14.61 &&&  $\mathbf{1}(CD)$ & -4.48 & 17.12 \\ 
  & (3.04) & (3.16) & &&& (2.04) & (2.50) \\ 
  \addlinespace
 $\mathbf{1}(\text{\emph{m}-valuation})$ & -4.91 & -3.63 & &&  $\mathbf{1}(\text{\emph{m}-valuation})$ & -3.16 & -1.35 \\ 
  & (2.73) & (3.48) &&&& (1.83) & (2.75) \\ 
  \addlinespace 
  Constant & 18.91 & -21.48 & &&  Constant& 17.70 & -22.08 \\ 
  & (3.59) & (3.46) &&&& (2.41) & (2.74) \\ 

\cline{1-3} \cline{6-8} \\ 
\addlinespace 
    \# Observations & 84 & 70 & &&\# Observations & 84 & 70\\
R$^{2}$ & 0.45 & 0.38 &&& R$^{2}$ & 0.69 & 0.48 \\ 
     \hline \hline 
    \end{tabular}
    }
    \end{center}
\footnotesize{\emph{Notes:} Table presents OLS regressions of either probit or MLE estimated value of $\widehat{\mu_v-\mu_c}$ on problem parameters $p$ and $r$
controlling additionally for whether the grouping covers an $AB$, $AB'$, or $CD$ decision and whether the grouping's valuations elicit a middle prize (\emph{m}-valuation) or a high prize. Columns (1) and (3) focus on all 84 groupings and examines effect of $p$; columns (2) and (4) focus on the 70 groupings for $AB'$ and $CD$ decisions and examines effect of $r$. Standard errors in parentheses.
}
\end{table}

\clearpage

\section{Testing Sign Congruence \citep{miller2025testing}}\label{app:signcongruence}

\setcounter{figure}{0}
\setcounter{table}{0}
\setcounter{equation}{0}

In Section \ref{subsec:qual_test}, we deploy the test recommended by \citet{miller2025testing} for sign congruence between two estimators.\footnote{This test was originally proposed in the statistics literature by \cite{russek1993qualitative}.} In this section, we provide more detail.

\citet{miller2025testing} develop a methodology for testing the sign congruence between two theoretical quantities, $\mu_1$ and $\mu_2$, that is, for testing the null $H_0: \mu_1 \cdot \mu_2 \ge 0$. The test requires estimators $\hat{\mu}_1$ and $\hat{\mu}_2$ that satisfy the following assumption:

\medskip

\noindent \textbf{Assumption G1.} The estimators
 $\hat{\mu}_1$ and $\hat{\mu}_2$ are asymptotically normal
\begin{equation*}
\sqrt{n}
 \begin{pmatrix}
\hat{\mu}_1 - \mu_1 \\
\hat{\mu}_2 - \mu_2 \\
\end{pmatrix}
\to_d  N \left (
 \begin{pmatrix}
0  \\
0 \\
\end{pmatrix} ,
 \begin{pmatrix}
\sigma_1^2   &  \rho \sigma_1 \sigma_2 \\
\rho \sigma_1 \sigma_2 & \sigma_2^2 \\
\end{pmatrix} 
\right),
\end{equation*}
with convergent variance-covariance estimates, 
$$\frac{\hat{\sigma_1}}{\sigma_1} \to_p 1,~ 
\frac{\hat{\sigma_2}}{\sigma_2} \to_p 1,~
\frac{\hat{\rho}}{\rho} \to_p 1.
$$

\bigskip

Their recommended test rejects $H_0$ if 
$$
\hat{\mu}_1 \cdot \hat{\mu}_2 < 0 \qquad \text{ and } \qquad \min \left\{ \frac{|\hat{\mu}_1|}{\hat{\sigma}_1/\sqrt{n}} , \frac{|\hat{\mu}_2|}{\hat{\sigma}_2/\sqrt{n}} \right\} \ge \overline{z}_\alpha. 
$$
The critical value is $\overline{z}_\alpha = \Phi^{-1}(1-\alpha)$ when it is known that $\rho \ge 0$; otherwise, the critical value depends on $\rho$. For $\alpha = 0.05$, the critical threshold is $\overline{z}_\alpha =1.645$ when $\rho \ge 0$; it remains approximately  $\overline{z}_\alpha =1.645$ until $\rho$ is around $-0.8$; and then it  reaches $\overline{z}_\alpha =1.96$ when $\rho = -1$. \citet{russek1993qualitative} provide a tabulation of the relevant thresholds for $\alpha = 0.05$ depending on the correlation. Where necessary (and feasible) we use this tabulation to target the threshold for $\alpha = 0.05$.

Under the assumptions of Proposition \ref{prop:symm_prefs}, the attainable set satisfies $(\mP_c(y) - 0.5) \cdot (\mP_v(y) - 0.5) \ge 0$. We can thus directly apply the \citet{miller2025testing} test if we maintain Assumption G1 for the relevant empirical quantities---that is, for the sample proportions $\widehat{\mP_c}(y)$ and $\widehat{\mP_v}(y)$ and for their sample standard errors $\sqrt{\frac{\widehat{\mP_c}(y)(1-\widehat{\mP_c}(y))}{N_c}}$ and $\sqrt{\frac{\widehat{\mP_v}(y)(1-\widehat{\mP_v}(y))}{N_v}}$. Because the data from the prior literature do not contain information on the sample correlation, we apply the test assuming $\rho=0$ (and thus use $\overline{z}_\alpha =1.645$) for both of our datasets.\footnote{This simplification appears largely justified: In the  \cite{MNOSS-2024-distinguishing, MNOSS-2026-connecting} data, 414 of the 424 experiments exhibit positive correlation, and no correlation is smaller than $-0.19$.}

Under the assumptions of Proposition \ref{prop_weak_pop}, the attainable set is $\con(C_y \cup C_X) \cup \con(V_y \cup V_X)$. We can test this prediction using an adjustment to the basic test. \citet{miller2025testing} note that more general hypotheses akin to sign congruence can be tested by considering  a change of base. We apply a base change to $\mP_c(y)$ and $\mP_v(y)$ using the cone spanned by the vectors $\lbrace (4/3, 2/3), (2/3, 4/3) \rbrace$. Define:
\begin{equation*}
A \equiv 
 \begin{pmatrix}
4/3 & 2/3 \\
2/3& 4/3 \\
\end{pmatrix}^{-1} =  \begin{pmatrix}
1 & -\frac{1}{2} \\
-\frac{1}{2}& 1 \\
\end{pmatrix}
~~~ \text{ and } ~~~
\begin{pmatrix}
\nu_1 \\
\nu_2 
\end{pmatrix} \equiv  A\begin{pmatrix}
\mu_1\\
\mu_2
\end{pmatrix} = \begin{pmatrix}
\mu_1 - \frac{1}{2} \mu_2\\
- \frac{1}{2}\mu_1 + \mu_2 
\end{pmatrix}.
\end{equation*}
If $\mu_1 = \mP_c(y)-2/3$ and $\mu_2 = \mP_v(y) - 1/3$ 
then, $\nu_1 = \mP_c(y)- \frac{1}{2} - \frac{1}{2}\mP_v(y) $ and $\nu_2 = -\frac{1}{2} \mP_c(y) + \mP_v(y)$. Thus,  $\nu_1 = 0$ is equivalent to $\mP_v(y) = 2\mP_c(y) - 1 $, the lower boundary of $\con(C_y \cup C_X)$. Similarly, $\nu_2 = 0$ is equivalent to $\mP_v(y) = \frac{1}{2} \mP_c(y)$,  the lower boundary of 
$\con(V_y \cup V_X)$. With such definitions for $\mu_1$ and $\mu_2$, an observation lies below both lines if $\nu_1 >0$ and $\nu_2 < 0$. Defining $\mu_1' = \mP_c(y)-1/3$ and $\mu_2' = \mP_v(y) - 2/3$ with the same $A$ matrix delivers $\nu_1'=0$ and $\nu_2'=0$ being equivalent to the upper boundaries of  $\con(C_y \cup C_X)$ and $\con(V_y \cup V_X)$. An observation lies above both lines if $\nu_1' <0$ and $\nu_2' > 0$.

\citet{miller2025testing} show that asymptotic normality of $(\nu_1, \nu_2)$ with an arbitrary base change follows from asymptotic normality of $(\mu_1, \mu_2)$ and thus the same recommended test applies. However, in our case the  specific change of base induces negative correlation
between $\nu_1$ and $\nu_2$. Because, again, the data from the prior literature do not contain information on the sample correlation, we proceed under an assumption that $\mP_c(y)$ and $\mP_v(y)$ are uncorrelated. Under this assumption,
$$
corr(\nu_1,\nu_2) = corr(\nu'_1,\nu'_2) = \frac{-(\sigma_1^2 +\sigma_2^2)/2}{\sqrt{\sigma_1^2 +1/4 \sigma_2^2}\sqrt{1/4\sigma_1^2 + \sigma_2^2} } < 0.
$$
This negative correlation can be sufficiently large in magnitude to require adjustment of the critical value associated with $\alpha = 0.05$. When we apply this test to assess whether an observation comes from $\con(C_y \cup C_X) \cup \con(V_y \cup V_X)$, we calculate the sample analog for this correlation and use it to identify the corresponding threshold for $\overline{z}_\alpha$ from \citet{russek1993qualitative}
to target $\alpha = 0.05$.

\end{document}